\begin{document} 
\author{Miko{\l}aj Boja\'nczyk\footnote{Supported by the European Research Council (ERC) under the
European Union Horizon 2020 research and innovation programme (ERC consolidator grant LIPA, agreement no.
683080). }, University of Warsaw}
\title{Two monads for graphs}
\maketitle

\begin{abstract}
	An introduction to algebras for graphs, based on Courcelle's algebras of hyperedge replacement and vertex replacement. The paper uses monad notation. 
\end{abstract}

\tableofcontents
\pagebreak

\section{Introduction}
From the perspective of this paper, the basic idea behind a monad is that it defines a notion of ``structure'' (e.g.~word, tree, graph, etc.) and says how bigger structures can be composed from smaller structures.  This paper describes two monads that model graphs. 

 For finite words, it is easy to compose smaller words to get bigger words:  one uses word concatenation. For structures more complicated than finite words, such as trees or graphs, more care is needed, and one is typically forced to introduce bookkeeping decoration, such as extending trees or graphs with ``sources'' or ``interfaces''. There are numerous ways to model such sources, e.g.~a source could be a distinguished vertex, a distinguished hyperedge, a distinguished set of vertices, etc. The two monads for modelling graphs that are considered in this paper -- called $\hmonad$ and $\vmonad$ --  make different design decisions in this respect. The monads are inspired by two algebras for graphs called ``hyperedge replacement'' and ``vertex replacement'' algebras\footnote{
 The algebra for hyperedge replacement was introduced in
 \cite{Courcelle:1990fj}. For t	he origins of the  algebra for vertex replacement, see the discussion at the end of Chapter 2 in the book of \cite{Courcelle:2012wq}.
Throughout this paper, the book is referenced whenever possible  instead of the original papers.}.

The first purpose of this paper is to  present the results about these algebras using the terminology of monads, hoping that this presentation makes the ideas more accessible and easier to draw.  A second purpose is to develop some general theory about monads, motivated by the challenges presented by graphs.

To illustrate the monad method, we include a new result, Theorem~\ref{thm:decide-mso-not-count}, which characterises those languages of graphs of bounded treewidth that can be defined in \mso without counting (as opposed to  languages that can be defined in \mso with counting). The characterisation is effective, i.e.~there is an algorithm which inputs a sentence of counting \mso (and a guarantee that the sentence has models of bounded treewidth), and decides if the sentence is equivalent to one that does not use counting, see Theorem~\ref{thm:decide-mso-not-count}. In the process of describing the algorithm, we are forced to develop some theoretical infrastructure, such as how algebras can be represented and manipulated by algorithms. (The difficulty is that the algebras have infinitely many sorts and infinitary operations.)

\section{Monads}

%\paragraph*{Monads.} Consider some category. A \emph{monad} $\tmonad$ in the category consists of the following intredients: 
%\begin{enumerate}
%	\item To each object $\Sigma$ in the category, one assigns an object $\tmonad \Sigma$ in the category. For example, if the category is sorted sets (for some set of sorts $X$), then to each sorted set one assigns another sorted set.
%	\item for each morphism $f : \Sigma \to \Gamma$ in the category, one assigns a morphism
%\begin{align*}
%\tmonad f : \tmonad \Sigma \to \tmonad \Sigma	
%\end{align*}
%which is also in the category.
%
%\end{enumerate}
This  section introduces basic definitions and notation for monads. Although monads are a concept from category theory, the paper is intended to be readable for readers without a background in category theory.  
The reader familiar with monads can skip Section~\ref{sec:monads-algebras} and go directly to Section~\ref{sec:polynomials}; the only difference between this paper and say, the wikipedia page about monads, is that we write $\unitt$ instead of $\eta$ and $\flatt$ instead of $\mu$\footnote{This notation is based on \cite{Anonymous:2015vr}
}. Section~\ref{sec:polynomials}  discusses the less standard notion of  polynomials for Eilenberg-Moore algebras. 
\subsection{Monads and their algebras}
\label{sec:monads-algebras}
 We use only the most basic concepts about monads. 
\begin{definition}[Monad]
 A monad consists of four ingredients, as given in Figure~\ref{fig:monad-ingredients}, subject to six axioms, as given in Figure~\ref{fig:monad-axioms}.	
\end{definition}
 
\begin{figure}
\newcommand{\tmonadline}[3]{(#1) &
\begin{minipage}{0.5\textwidth}
#2
\end{minipage} &
\begin{minipage}{0.4\textwidth}
#3
\end{minipage}
\\ \\ \hline  \\
}
\begin{tabular}{lll}
& Ingredient of a monad & Example for  finite words \\ 
\hline \\
\tmonadline 1 {A map from sets to sets, which takes  each set $\Sigma$ a new set  
  $\tmonad \Sigma$. The
   intuition  is that $\tmonad \Sigma$ 
   represents ``structures'' labelled by $\Sigma$.}
   {The structures are finite words, i.e.~$\tmonad \Sigma = \Sigma^*$}
\tmonadline 2 {A map from functions to functions, which lifts  each function  
\begin{align*}
\Sigma \stackrel f \to \Gamma	
\end{align*}
 to a  function on structures  
 \begin{align*}
\tmonad \Sigma \stackrel {\tmonad f} \to \tmonad \Gamma.
\end{align*}
}{ The function~$\tmonad f$ replaces each letter by its image under~$f$}

\tmonadline3 {For each set $\Sigma$, a function
\begin{align*}
\unitt_\Sigma : \Sigma \to \tmonad \Sigma.
\end{align*}
Intuitively speaking, $\unitt$ says how letters can be interpreted as structures.
}{A letter $a \in \Sigma$ is mapped to a one letter word $a \in \Sigma^*$.}

\tmonadline 4{For each set $\Sigma$, a function
\begin{align*}
\flatt_\Sigma : \tmonad(\tmonad \Sigma) \to \tmonad \Sigma.
\end{align*}
Intuitively speaking, $\flatt$ says how a structure of structures can be flattened to a structure.
}{A word of words is flattened to a word, like
\begin{align*}
	(aba)(aa)(\varepsilon)(a) \quad \mapsto \quad abaaaa
\end{align*}
}
\end{tabular}  \caption{\label{fig:monad-ingredients}The ingredients of a monad in the category of sets. The right column shows how these ingredients are instantiated  for the monad of finite words.}
\end{figure}

	\begin{figure}[htbp]
		\centering
	\begin{align*}
		\vcenter{\xymatrix{ \tmonad \Sigma \ar[r]^{\tmonad f} \ar[dr]_{\tmonad(g \circ f)}& \tmonad \Gamma \ar[d]^{\tmonad g}  \\ & \tmonad \Delta}
} \qquad 		\vcenter{\xymatrix @R=2pc { \tmonad  \Sigma  \ar@<-.5ex>[r]_{\mathrm{id}_{\tmonad \Sigma}} \ar@<.5ex>[r]^{\tmonad (\mathrm{id}_\Sigma)}  & \tmonad \Sigma}}
	\end{align*}
	\begin{align*}
		\vcenter{\xymatrix @R=2pc { \Sigma  \ar[r]^{f} \ar[d]_{\unitt_\Sigma} & \Gamma \ar[d]^{\unitt_\Gamma}  \\
		\tmonad \Sigma \ar[r]_{\tmonad f}& \tmonad \Gamma
		}} \qquad 		\vcenter{\xymatrix @R=2pc { \tmonad \tmonad \Sigma  \ar[r]^{\tmonad \tmonad f} \ar[d]_{\flatt_\Sigma} & \tmonad \tmonad \Gamma \ar[d]^{\flatt_\Sigma}  \\
		\tmonad \Sigma \ar[r]_{\tmonad f}& \tmonad \Gamma
		}}.
	\end{align*}
	\begin{align*}
		\vcenter{\xymatrix @R=2pc @C=3pc {\tmonad \tmonad \tmonad \Sigma  \ar[r]^{\flatt_{ \tmonad \Sigma}} \ar[d]_{\tmonad{ \flatt_\Sigma}} & \tmonad \tmonad \Sigma \ar[d]^{\flatt_\Sigma}  \\
		\tmonad \tmonad \Sigma \ar[r]_{ \flatt_\Sigma}& \tmonad \Sigma
		}} 
		\qquad
		 		\vcenter{\xymatrix @R=2pc { \tmonad \Sigma \ar[dr]^{\mathrm{id}_\Sigma}  \ar[r]^{\unitt_ {\tmonad \Sigma}} \ar[d]_{\tmonad \unitt_\Sigma} & \tmonad \tmonad \Sigma \ar[d]^{\flatt_\Sigma}  \\
		\tmonad \tmonad X \ar[r]_{ \flatt X }& \tmonad \Sigma
		}}
	\end{align*}
		\caption{The axioms of a monad are that these six diagrams  commute for every set $\Sigma$  and  every functions $f : \Sigma \to \Gamma$ and $g : \Gamma \to \Delta$. The diagrams in the first row say that $\tmonad$ is a functor.  The diagrams in the middle row say that the unit and flattening are natural. The lower left diagram says that  flattening is associative, and the lower right says that the unit is consistent with flattening.} 
		\label{fig:monad-axioms}
	\end{figure}
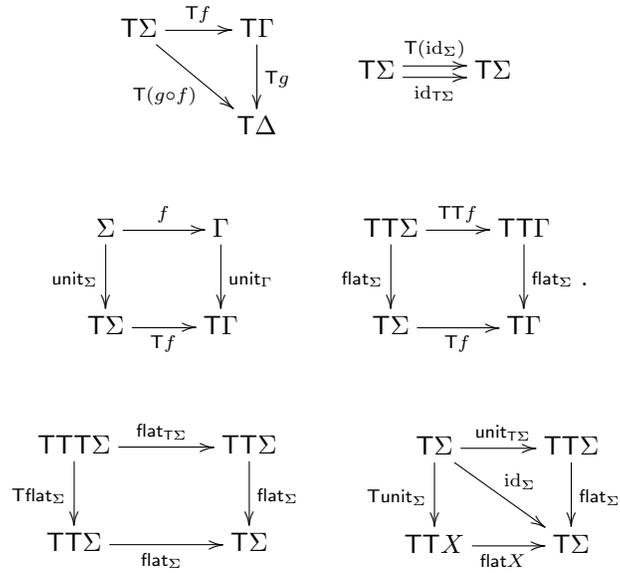

Figure~\ref{fig:monad-ingredients} also illustrates the monad ingredients for the monad of finite words. 
The description in Figures~\ref{fig:monad-ingredients} and~\ref{fig:monad-axioms} uses the  category of sets, but monads can be defined in other categories (by replacing  ``set'' with ``object in the category'' and replacing ``function'' with ``morphism in the category''). Actually, the two monads $\hmonad$ and $\vmonad$ discussed at length in this paper are \emph{not} in the category of sets, but in categories of  sorted sets (with the sorts being $\Nat$ and $\Nat - \set 0$, respectively). By a category of sorted sets, we mean any category of the form $\setcat^X$ for some set $X$ called the \emph{sort names}. In such a category, the objects are sets where each element has an associated sort name from $X$, and the morphisms  are sort preserving functions. When there is only one sort name, we recover the category of sets.  

% We consider three monads:
%\begin{itemize}
%	\item The monad of finite words $\Sigma^*$ over the category of sets. This monad is shown in the right part of Figure~\ref{fig:monad-ingredients} and is meant as a source of familiar examples.
%	\item A monad called $\hmonad$ over the category $\setcat^{\set{0,1,\ldots}}$, i.e.~sets with sorts indexed by natural numbers. This monad is described in Section~\ref{sec:hyperedge}, and  is meant to correspond to Courcelle's algebras for hyperedge replacement.
%	\item A monad called $\vmonad$ over the category $\setcat^{\set{1,2,\ldots}}$, i.e.~sets with sorts indexed by nonzero natural numbers. This monad is described in Section~\ref{sec:vertex}, and is meant to correspond to Courcelle's algebras for vertex replacement.
%\end{itemize}

\paragraph*{Algebras and the languages that they recognise.} Every monad comes with an associated notion of algebra.
\begin{definition}
	[Algebras  and homomorphisms] Let $\tmonad$ be a monad. 
	A \emph{$\tmonad$-algebra}, also known as an Eilenberg-Moore algebra over the monad, is defined to be a morphism (in the category) of the form 
	\begin{align*}
\pi : \tmonad A \to A,
\end{align*}
	 where $A$ is called the \emph{universe} of the algebra, 
	such that $\pi \circ \unitt_A$ is the identity on $A$ and the following diagram commutes:
	\begin{align*}
		\xymatrix  { \tmonad \tmonad A  \ar[r]^{\flatt_A} \ar[d]_{\tmonad \pi} & \tmonad A \ar[d]^{\pi}  \\
		\tmonad A \ar[r]_{\pi} & A
		} 
	\end{align*}
A homomorphism between two $\tmonad$-algebras
	\begin{align*}
 \pi_{\alga}: \tmonad A \to A	 \qquad  \pi_{\algb}: \tmonad B \to B	\end{align*}
is any  morphism in the category  $h : A \to B$ which makes the following diagram commute:
		\begin{align*}
		\vcenter{\xymatrix  { \tmonad  A  \ar[r]^{\tmonad h} \ar[d]_{ \pi_\alga} & B \ar[d]^{\pi_\algb}  \\
		A \ar[r]_{h} & B
		}}
	\end{align*}
\end{definition}
For example,  when the monad is the monad of finite words described in the right column of Figure~\ref{fig:monad-ingredients}, then the algebras are monoids. A simple and important fact is that for every monad $\tmonad$ and every object $\Sigma$ in the category, $\tmonad \Sigma$ becomes an algebra if we take the product operation to be flattening; this is because of the axioms in the last row of Figure~\ref{fig:monad-axioms}.

\begin{definition}[Functions and languages recognised by algebras]
	Let $\tmonad$ be a monad and let
	\begin{align*}
f : \tmonad \Sigma \to C	
\end{align*}
be a morphism in the category, e.g.~a sort-preserving function when the category is sorted sets.  We say that $f$ is recognised by a $\tmonad$-algebra $\alga$ if there is a homomorphism $h$ and a morphism in the category $g$ which makes the following diagram commute
\begin{align*}
\xymatrix{
\tmonad \Sigma \ar[dr]_h \ar[r]^f & C\\
& \alga \ar[u]_g
}	
\end{align*}
If the category is sorted sets, then we say that a set
\begin{align*}
L \subseteq \tmonad \Sigma	
\end{align*}
is recognised by $\alga$ if the characteristic function of $L$ is recognised (the co-domain of the characteristic function is a sorted set which has two values ``yes'' and ``no'' on each sort).
\end{definition}
%\paragraph*{Recognisable languages.}
%Consider a monad $\tmonad$ in a category of multisorted sets, and  a homomorphism of $\tmonad$-algebras
% \begin{align*}
%h : \alga \to \algb.
%\end{align*}
%We say that a subset $L$ of the universe of $\alga$  is \emph{recognised} by $h$ if one can distinguish a subset $F$ of the universe of $\algb$ such that $L$ is exactly those elements of $\alga$ whose image under $h$ is inside $F$. We say that $L$ is recognised by $\algb$ if it is recognised by some homomorphism into $\algb$.
%

Suppose that a monad $\tmonad$ is equipped with a notion of ``finite algebra''; then  a \emph{recognisable morphism/language} is defined  to be one that is recognised  by a finite algebra.
 For example, if the monad is in the category of sets, then a natural notion of ``finite algebra'' is that the universe is a finite set. If the category is multisorted sets, then a natural notion of ``finite algebra'' is that the universe is finite on every sort. If the monad is in the category is vector spaces, then a natural notion of ``finite algebra''  is that the universe has finite dimension. In general, the notion of ``finite algebra'' has to be given as a parameter, and choosing the right one can be a non-trivial task (an important example is monads for infinite trees, where the notion of ``finite algebra'' is yet to be identified). 
%
% We will be interested in the case when $\alga$ is a free algebra, i.e.~an algebra of the form $\tmonad \Sigma$,  and $\algb$ is finite in some sense. For example, when the monad is the monad of finite words $\Sigma^*$ -- which is in the category of sets -- then the subsets recognised by  algebras with finite universes (which are finite monoids) are exactly the regular languages.  The two monads that will be considered in this paper will use multisorted sets with infinitely many sorts, and the notion of finite algebra is going to be ``finite on every sort''.
% 
\subsection{Polynomials}
\label{sec:polynomials}
 
In this section we describe {polynomial operations} in an Eilenberg-Moore algebra. This notion is less standard; and the presentation in this section assumes  that the monad is in a category of sorted sets. A  more relaxed and yet still sufficient assumption would be to use  a concrete category (i.e.~a category equipped with a faithful functor to the category of sets). Nevertheless, we restrict attention to  sorted sets, since these are the categories we use for the examples in this paper.

 Polynomial  operations as defined here are  based on polynomial  operations from universal algebra\myfootcite[Definition 13.3]{burris2006course}, which are in turn based on the usual notion of polynomials like $3x^2 + 2xy^2 - 3$.
Fix a category of sorted sets (for some sort names).
Consider a monad $\tmonad$ over this category, and an algebra $\alga$ over this monad whose universe is $A$. For a sorted set $X$, called the  \emph{variables}, define 
\begin{align*}
\mathsf{eval} :  \underbrace{A^X}_{\substack{\text{sort-preserving}\\ \text{functions from $X$ to $A$}}} \times \quad \tmonad (A + X) \quad  \to \quad   A	
\end{align*}
to be the operation which, on input $(\eta,t)$,  substitutes the variables in $t$  according to the valuation $\eta$, and then applies the product operation of the algebra $\alga$. We use the name \emph{variable valuations} for the first argument of the operation. The operation $\mathsf{eval}$ is not itself a morphism in the category, one reason being that the variable valuations $A^X$ do  not form a sorted set.  If we fix the first argument of the function $\mathsf{eval}$ to be some  variable valuation $\eta$, then we do get a morphism in the category, i.e.~a sort-preserving function
\begin{align*}
\mathsf{eval}(\eta,\_) : \tmonad (A + X)	 \to A.	
\end{align*}
We are, however, mainly interested in fixing the second argument of the operation $\mathsf{eval}$. For  $t \in \tmonad (A + X)$, define 
\begin{align*}
\sem t = \mathsf{eval}(\_,t)  : A^X \to A	,
\end{align*}
to be the function which inputs a variable valuation and outputs the result of applying it to $t$. Because the variable valuations do not form a sorted set, the function $\sem t$ is not sort-preserving (unless the category is sets, i.e.~there is only one sort name). Functions of the form $\sem t$ are called polynomial  operations, as described in the following definition.

\begin{definition}[Polynomial  operation]\label{def:polynomial} Let $\tmonad$ be a monad in a category of sorted sets, and let  $\alga$ be a $\tmonad$-algebra whose universe is $A$. For a sorted set $X$, a \emph{polynomial  operation over $\alga$ with variables $X$} is defined to be any function
\begin{align*}
f : A^X \to A	
\end{align*}
which is of the form $\sem t$ for some $t \in \tmonad (A+X)$. We write $\alga[X]$ for the set of polynomial operations over $\alga$ with variables $X$.  	
\end{definition}

A polynomial  operation with an empty set of variables is the same thing as an element of the algebra, i.e.~a constant.
If the category is sets, i.e.~there is only one sort,  then it is enough to say how many variables there are, and in such a situation  one usually speaks of unary polynomial operations, binary polynomial operations, etc. This is the case for the monad of finite words, where the algebras are monoids. For  algebras with more than one sort -- which will be the case  the monads $\hmonad$ and $\vmonad$ that are the topic of this paper -- one also needs to remember the sort for each variable, hence our definition which keeps track of the (sorted) set of variables.

\begin{myexample}
	Suppose that the monad is the monad of finite words over the category of finite sets. In this monad, the algebras are monoids.  Consider the free monoid $\set{a}^*$.  The function
	\begin{align*}
w \in \set{a}^* \mapsto waw	 \in \set{a}^*.
\end{align*}
is a polynomial  operation, namely $\sem { xax}$. 
In general, a polynomial  operation over the free monoid $\set{a}^*$ with variables  $\set{x}$ is given by a word over the alphabet $\set{a}^* + \set x$. 
\end{myexample}

\begin{definition}[Congruence]\label{def:congruence}
	An equivalence relation $\sim$ on the universe $A$ of an $\hmonad$-algebra is called \emph{compatible} with an operation
	\begin{align*}
f : A^X \to A	
\end{align*}
if for every two inputs that are equivalent under $\sim$ (equivalent on each variable $x \in X$), the outputs are also equivalent under $\sim$. A \emph{congruence} in an $\hmonad$-algebra is an equivalence relation on the universe of $\alga$ that is compatible with all polynomial operations.
\end{definition}

The following result, which is simply a monad version of a classic result from universal algebra,  says that congruences are essentially the same thing as homomorphisms.

\begin{theorem}[Homomorphism Theorem]\myfootcite[Lemma 3.3]{Anonymous:2015vr} An equivalence relation on the universe of a $\hmonad$-algebra is a congruence if and only if it is the kernel of some homomorphism.
\end{theorem}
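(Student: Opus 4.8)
The plan is to prove both directions of the equivalence, following the classical universal-algebra argument adapted to the monad setting.

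\paragraph*{The easy direction: kernels of homomorphisms are congruences.} Suppose $\sim$ is the kernel of a homomorphism $h : \alga \to \algb$, i.e.~$a \sim a'$ iff $h(a) = h(a')$. I want to show $\sim$ is compatible with every polynomial operation $\sem t$ for $t \in \hmonad(A + X)$. Fix two variable valuations $\eta, \eta' : X \to A$ that agree modulo $\sim$, i.e.~$h(\eta(x)) = h(\eta'(x))$ for all $x \in X$. The key computation is that applying $h$ to $\sem t (\eta)$ depends only on $t$ and on $h \circ \eta$. Concretely, $\sem t(\eta) = \pi_\alga(\hmonad[\eta,\mathrm{id}_A](t))$ where $[\eta,\mathrm{id}_A] : A + X \to A$ is the copairing; then applying $h$ and using that $h$ is a homomorphism (the square relating $\pi_\alga$ and $\pi_\algb$ via $\hmonad h$) rewrites this as $\pi_\algb(\hmonad(h \circ [\eta,\mathrm{id}_A])(t))$, and $h \circ [\eta,\mathrm{id}_A] = [h\circ\eta, h]$ factors through the common value. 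Since $h \circ \eta = h \circ \eta'$, we get $h(\sem t(\eta)) = h(\sem t(\eta'))$, hence $\sem t(\eta) \sim \sem t(\eta')$. This is a routine diagram chase using naturality of $\flatt$ and $\unitt$ and functoriality of $\hmonad$.

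\paragraph*{The hard direction: congruences are kernels.} Given a congruence $\sim$ on $A$, I want to build a $\hmonad$-algebra structure on the quotient set $B = A/{\sim}$ so that the quotient map $h : A \to B$ becomes a homomorphism; then $\sim$ is visibly its kernel. The universe $B$ is a sorted set (sorts inherited from $A$, which is legitimate since $\sim$ is sort-respecting, being compatible in particular with the nullary/unary structure). The problem is to define the product $\pi_\algb : \hmonad B \to B$. The natural attempt: given $s \in \hmonad B$, lift it along the surjection $h$ to some $t \in \hmonad A$ with $\hmonad h(t) = s$ (such $t$ exists because $\hmonad$ preserves surjections in a category of sorted sets — this needs to be noted, or one works with a chosen section of $h$), and set $\pi_\algb(s) = h(\pi_\alga(t))$. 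The main obstacle is \textbf{well-definedness}: showing this does not depend on the choice of lift $t$. This is exactly where one uses that $\sim$ is compatible with \emph{all} polynomial operations, not just finitely many basic ones. Two lifts $t, t'$ of the same $s$ differ only in their $A$-labels, agreeing modulo $\sim$ label-by-label; one encodes this by viewing $t$ and $t'$ as arising from a single ``template'' $\hat t \in \hmonad(A + X)$ (with $X$ a set of variables indexing the positions where $t,t'$ may disagree, of the appropriate sorts) under two valuations $\eta, \eta' : X \to A$ that are $\sim$-equivalent. Then $\pi_\alga(t) = \sem{\hat t}(\eta)$ and $\pi_\alga(t') = \sem{\hat t}(\eta')$, and compatibility of $\sim$ with the polynomial operation $\sem{\hat t}$ gives $\pi_\alga(t) \sim \pi_\alga(t')$, i.e.~$h(\pi_\alga(t)) = h(\pi_\alga(t'))$, as required. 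Setting up this template carefully — really, observing that an element of $\hmonad A$ together with a tuple of points in $A$ \emph{is} an element of $\hmonad(A+X)$ under a suitable $X$ — is the crux, and it is why the definition of polynomial operation keeps track of the variable set $X$ with its sorts.

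\paragraph*{Finishing.} Once $\pi_\algb$ is well defined, I check the Eilenberg--Moore axioms for $\algb$ and that $h$ is a homomorphism. The homomorphism square $\pi_\algb \circ \hmonad h = h \circ \pi_\alga$ holds by construction (for $t \in \hmonad A$, $\hmonad h(t)$ is a valid lift of itself, so $\pi_\algb(\hmonad h(t)) = h(\pi_\alga(t))$). The unit law $\pi_\algb \circ \unitt_B = \mathrm{id}_B$ and the associativity square for $\algb$ follow from the corresponding laws for $\alga$ by chasing through $h$ and $\hmonad h$, using surjectivity of $h$ and $\hmonad h$ (and $\hmonad\hmonad h$) to cancel on the left, together with naturality of $\unitt$ and $\flatt$. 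These are routine once the central well-definedness point is settled. Finally, $\ker h = {\sim}$ by the definition of $B$ as the set of $\sim$-classes, completing the proof.
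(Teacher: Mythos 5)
Your proof is correct. Note that the paper does not actually prove this theorem---it is imported wholesale by citation---so there is nothing to match against; your argument is the standard universal-algebra proof that the cited lemma rests on: the ``kernel implies congruence'' direction is a diagram chase through the homomorphism square using $h\circ[\eta,\mathrm{id}_A]=[h\circ\eta,h]$, and the converse builds the quotient algebra, with well-definedness of the quotient product being precisely the point where compatibility with all polynomial operations is consumed. The one step worth flagging is your claim that two lifts $t,t'\in\hmonad A$ of the same $s\in\hmonad B$ arise from a single template in $\hmonad(A+X)$ under $\sim$-equivalent valuations: this is not a consequence of the monad axioms alone, but for $\hmonad$ specifically it holds because $\hmonad h$ only relabels hyperedges, so concrete representatives of $t$ and $t'$ can be chosen with the same underlying vertices, hyperedges, incidence and sources, differing only in their $A$-labelling---after which your template with one variable per hyperedge does the job.
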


%
%\begin{definition}[Finitely generated subset of an algebra]\label{def:fin-gen}
%Let $\tmonad$ be a monad in a category of multisorted sets and let 
%	 $\alga$ be a $\tmonad$-algebra. For a set $F$  of polynomnial operations in $\alga$. The set \emph{generated} by $F$, denoted by $\genset F$,  is defined to be the last subset of the universe of $\alga$ that is closed under applications of polynomials from $F$, i.e.~which satisfies
%	\begin{align*}
%  f(\eta) \in \genset F \qquad \text{for every $f \in F$ with variables $X$ and every $\eta : X \to \genset F$}
%\end{align*}
%An subset of the universe in an algebra is called \emph{finitely generated} if it is generated by a finite set of polynomial operations.
%\end{definition}
%

\section{Hyperedge replacement}
\label{sec:hyperedge}
In this section, we describe the first of the two monads discussed in this paper, namely the  monad $\hmonad$, which is used to  model graphs and hypergraphs. 
The name of the monad  $\hmonad$ stands for \emph{hyperedge replacement}. The monad is based on Courcelle's algebra of hyperedge replacement. A second monad, called  $\vmonad$ and based on Courcelle's algebra of vertex replacement, will be discussed in Section~\ref{sec:vertex}.

\paragraph*{Hypergraphs.}
We begin by describing the  variant of hypergraphs that is used to define the monad $\hmonad$. Our real goal is to model graphs, but the more general model of  hypergraphs is needed to describe the compositional structure needed in a monad. The idea is  that a hyperedge incident with $n$ vertices can be replaced by a hypergraph with $n$ distinguished vertices. 
Before formally defining  hypergraphs, we review some  design decisions. The first design decision is:
\begin{itemize}
	\item Hyperedges are labelled.
\end{itemize}
     Labels are essential to the monad approach: hyperedge labels are used to define the substitution operation that defines the monad. The  three other decisions described below are  not essential, and one can define monads which make different design decisions in these respects.
\begin{itemize}     
	\item Hyperedges are non-looping, i.e.~a hyperedge is incident to a non-repeating list of vertices. 
	\item Hyperedges are directed, i.e.~a hyperedge is incident to a list and not a set of vertices. 
	\item Parallel hyperedges are allowed, i.e.~it is possible to have several hyperedges that have the same label and incidence list.
\end{itemize}

When defining the monad $\hmonad$, we use \emph{ranked sets}. A   \emph{ranked set} is a set where every element has an associated \emph{arity} in $\set{0,1,2,\ldots}$, i.e.~this is a special case of a sorted set, where the sorts are natural numbers.  The idea is that hyperedges and their labels will be ranked, the arity of a hyperedge being the length of its incidence list. 
An \emph{arity-preserving function} is a function between two ranked sets which does not change arities.
% If $\Sigma$ is a ranked set and $n \in \set{0,1,\ldots}$, then we write $\slice \Sigma n$ for the (not-ranked set of) elements in $\Sigma$ that have rank $n$.

%\newcommand{\corners}[1]{[#1]}
%
%\begin{definition}[Corners] A \emph{corner} of a rank $n$ element $x$  is defined to be any pair $(x,i)$   with $i \in \set{1,\ldots,n}$. We write $\corners X$ for the (not ranked) set of corners of elements of $X$.	
%\end{definition}

\newcommand{\bang}[1]{[#1]}
\begin{definition}[Hypergraph] \label{def:hypergraphs}
A \emph{hypergraph} consists of:
	\begin{itemize}
		\item a (not ranked) set $V$ of \emph{vertices};
		\item a ranked set $E$ of \emph{hyperedges};
		\item an \emph{incidence} function $E \to V^*$ which maps each $n$-ary-vertex to its \emph{incidence list}, which is non-repeating and has length $n$.		\item a ranked set $\Sigma$ of \emph{labels} together with a  arity-preserving \emph{labelling} $E \to \Sigma$;
	\end{itemize}
\end{definition}

We write $G,H$ for hypergraphs, $e,f$ for hyperedges and $v,w$ for vertices.
If $e$ is an $n$-ary hyperedge and $i \in \set{1,\ldots,n}$, then we write $e[i]$ for the $i$-th vertex incident to $e$, i.e.~for the $i$-th element of the incidence list of $e$.  
 Note that we allow  hyperedges of arity zero (called nullary) which have empty incidence lists.     Hypergraphs generalise directed graphs without self-loops, see Example~\ref{ex:graphs} below; and in fact modelling graphs is the main goal of the monad. % The interpretation of such a relation in a relational structure is simply a Boolean value (true or false).

\begin{myexample}\label{ex:graphs}
Consider a directed graph $G=(V,E)$ without self-loops, i.e.~$V$ is a set and $E \subseteq V \times V$ is a binary irreflexive relation. We view $G$ as a hypergraph in the following way. The vertices of the hypergraph are $V$, the  hyperedges are binary and correspond to edges. Each hyperedge has the same label, call it ``edge'',  of rank 2 and the incidence list of an edge consists of its source and target. 
\end{myexample}

 \paragraph*{Drawing hypergraphs.} Suppose that the set of labels $\Sigma$ has two elements as shown below:
\mypic{8}
Here is a picture of a hypergraph labelled by $\Sigma$:
\mypic{7}  
To draw the incidence lists, which are ordered, we use the following convention:
\mypic{34}  

\paragraph*{Sourced hypergraphs.} To define the monad structure on hyperegraphs, we equip  hypergraphs with distinguished vertices, called sources. The idea is that an $n$-ary  hyperedge can be replaced by a hypergraph with $n$ distinguished vertices.

\begin{definition}[Sourced hypergraph] \label{def:sourced-hypergraph}
A  \emph{sourced hypergraph} is defined to be a hypergraph $G$ together with an injective function
\begin{align*}
\text{source} : \set{1,\ldots,n} \hookrightarrow \text{vertices of $G$}	
\end{align*}
for some $n \in \set{0,1,\ldots}$.  	The number $n$ is called the arity of the sourced hypergraph.
\end{definition} 
Sourced hypergraphs form  a ranked set. In the end, we are mainly interested in the nullary sourced hyperegraphs, i.e.~hypergraphs. Sourced hyperegraphs of higher arities are used to define the substitution operation. We draw sourced hypergraphs like this (the picture shows a binary sourced hypergraph):
\mypic{9} 
	The point of the drawing is to underline a   sourced hypergraph, having an arity, can also be used  as a label in a hypergraph. 
	 We are now ready to define the first ingredient of the monad $\hmonad$, i.e.~how it works on objects (ranked sets).

\begin{definition}[The monad $\hmonad$]
	The monad $\hmonad$ is defined as follows.
	\begin{itemize}
		\item \emph{Category.} The category is ranked sets and arity-preserving functions, with the ranks being $\set{0,1,\ldots}$;
		\item \emph{On objects.} 	For a ranked set $\Sigma$, define $\hmonad \Sigma$ to be the ranked set of hypergraphs with labels $\Sigma$ that are finite\footnote{Finitely many vertices and hyperedges.}, modulo isomorphism\footnote{Two sourced hypergraphs are called \emph{isomorphic} if they have the same set of labels and   there are two bijections -- between the vertices and the hyperedges -- which respect the labelling, incidence, and source functions in the natural way. 

		 In the terminology of Courcelle and Engelfriet, $\hmonad \Sigma$ talks about \emph{abstract} hypergraphs (i.e.~hypergraphs up to isomorphism) as opposed to \emph{concrete} hypergraphs (i.e.~not up to isomorphism). The main reason for working modulo isomorphism is that we want to have certain ways of combining  hypergraphs to give equal results (e.g.~we want disjoint union of hypergraphs to be associative) and this can only be achieved modulo isomorphism.
}.
		\item \emph{On morphisms.} For an arity-preserving function $f : \Sigma \to \Gamma$, the function $\hmonad f : \hmonad \Sigma \to \hmonad \Gamma$ is defined by applying $f$ to the labels;
		\item \emph{Unit.} The unit of  an  $n$-ary label $a \in \Sigma$ is the $n$-ary sourced hypergraph with vertices $\set{1,\ldots,n}$,  one $n$-ary hyperedge labelled by $a$ and incident to $[1,\ldots,n]$, and where the source mapping is the identity. Here is a picture:
\mypic{11}
\item \emph{Flattening.} For a sourced hypergraph $G \in \hmonad \hmonad \Sigma$, its flattening is defined as follows (see Figure~\ref{fig:h-flattening}):
 \begin{itemize}
\item Hyperedges are pairs $(e,f)$, where $e$ is a hyperedge of $G$ and $f$ is a hyperedge in the label of $e$. The colour and arity of a hyperedge $(e,f)$ is inherited from $f$.
	\item Vertices are either vertices $v$ of $G$, or pairs $(e,v)$ such that $e$ is a hyperedge of $G$ and $v$ is a non-source vertex in the label of $e$. The source function is inherited from $G$.
\item The incidence list of a hyperedge $(e,f)$ is obtained by taking the incidence list of $f$ (in the label of $e$), and applying the following \emph{parent} function:
\begin{eqnarray*}
\text{parent}(v) &=& v\\ \qquad \text{parent}(e,v) &=& \begin{cases}
	(e,v) : \text{if $v$ is not a source in the label of $e$}\\
	e[i] : \text{if $v$ is the $i$-th source in the label of $e$}
\end{cases}	
\end{eqnarray*}
\end{itemize}
	\end{itemize}
\end{definition}

The above definition  restricts attention to finite sourced hypergraphs.  In principle, we could also consider infinite ones -- and in fact the monad approach would have its advantages for considering infinite objects, since it does not insist on any explicit use of finite terms to define objects. The main reason for finiteness is that we want to remain close to the approach from~\cite{Courcelle:2012wq}, which uses finite structures. Another advantage of finiteness is that the  operation $\hmonad \Sigma$ is \emph{finitary} in the following sense: 
\begin{align*}
\hmonad \Sigma = \bigcup_{\Sigma_0 \subseteq \Sigma} \hmonad \Sigma_0
\end{align*}
where the union above ranges over finite subsets $\Sigma_0$. The fact that $\hmonad$ is finitary will play a role in Section~\ref{sec:mso}.

\begin{figure}
\mypic{3}
 \mypic{6}
  \caption{\label{fig:h-flattening}A sourced hypergraph in $\hmonad \hmonad \Sigma$ (above) and its flattening (below).}
\end{figure}

\begin{fact}\label{fact:hmonad-is-amonad}
	$\hmonad$ satisfies the axioms of  a monad.
\end{fact}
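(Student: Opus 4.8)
The plan is to verify, one at a time, the six diagrams of Figure~\ref{fig:monad-axioms}. Since $\hmonad\Sigma$ consists of hypergraphs up to isomorphism, each equality to be proved is an equality of abstract sourced hypergraphs, so in every case it suffices to exhibit a bijection on vertices together with a bijection on hyperedges that commutes with the labelling, incidence and source functions. Four of the six diagrams are essentially immediate. The functor $\hmonad f$ renames labels only, leaving the vertices, hyperedges, incidence lists and sources untouched, so $\hmonad(g\circ f)=\hmonad g\circ\hmonad f$ and $\hmonad(\mathrm{id})=\mathrm{id}$ hold on the nose; naturality of the unit holds because both $\hmonad f\circ\unitt_\Sigma$ and $\unitt_\Gamma\circ f$ send an $n$-ary label $a$ to the one-hyperedge hypergraph on $\{1,\dots,n\}$ with label $f(a)$; and naturality of flattening holds because in the flattening construction the label of a hyperedge $(e,f)$ is read off from $f$, so applying $\hmonad\hmonad f$ before flattening, or $\hmonad f$ afterwards, produces the same relabelling and changes nothing else.

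The unit-consistency diagram amounts to two identities. For $\flatt_\Sigma\circ\unitt_{\hmonad\Sigma}=\mathrm{id}$, take $G\in\hmonad\Sigma$ of arity $n$: the hypergraph $\unitt_{\hmonad\Sigma}(G)$ has a single hyperedge $e$ labelled by $G$, incident to $[1,\dots,n]$, with identity source map, and flattening it gives hyperedges $(e,f)$ in bijection with the hyperedges $f$ of $G$, vertices consisting of the sources $1,\dots,n$ together with the pairs $(e,v)$ for non-source vertices $v$ of $G$, and, since $e[i]=i$, the parent function collapses to the inclusion identifying this with $G$. For $\flatt_\Sigma\circ\hmonad\unitt_\Sigma=\mathrm{id}$, replacing each $m$-ary label $a$ of $G$ by $\unitt_\Sigma(a)$ and flattening introduces no new vertices, because every vertex of $\unitt_\Sigma(a)$ is a source, and produces exactly one hyperedge per hyperedge of $G$; the incidence list of the hyperedge inside $\unitt_\Sigma(a)$ is $[1,\dots,m]$, and since $i$ is the $i$-th source the parent function sends it to $e[i]$, so $G$ is recovered.

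The remaining diagram, associativity of flattening, is where the real work lies. Fix $G\in\hmonad\hmonad\hmonad\Sigma$; write $e$ for a hyperedge of $G$ with label a sourced hypergraph $\lambda_e\in\hmonad\hmonad\Sigma$, write $f$ for a hyperedge of $\lambda_e$ with label $\mu_{e,f}\in\hmonad\Sigma$, and write $g$ for a hyperedge of $\mu_{e,f}$. Unwinding $\flatt_\Sigma\circ\hmonad\flatt_\Sigma$ on $G$, the result has hyperedges $(e,(f,g))$ and vertices of three shapes: a vertex $v$ of $G$; a pair $(e,v')$ with $v'$ a non-source vertex of $\lambda_e$; or a pair $(e,(f,v''))$ with $v''$ a non-source vertex of $\mu_{e,f}$. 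Unwinding $\flatt_\Sigma\circ\flatt_{\hmonad\Sigma}$ instead, the hyperedges are $((e,f),g)$ and the vertices are $v$, $(e,v')$ and $((e,f),v'')$ with exactly the same side conditions. The re-association maps $(e,(f,g))\mapsto((e,f),g)$ and $(e,(f,v''))\mapsto((e,f),v'')$, together with the identity on the remaining vertices, are obviously bijections, and they respect labels, arities and sources because on both sides these are inherited from $g$ and, ultimately, from the source map of $G$. What remains is that they respect incidence, and this comes down to a single identity: the composition of the parent functions used at the two flattening steps of $\flatt_\Sigma\circ\flatt_{\hmonad\Sigma}$ must agree with the parent function of the once-flattened label used by $\flatt_\Sigma\circ\hmonad\flatt_\Sigma$. \textbf{This is the main obstacle} --- not a deep one, but the bookkeeping has to be managed carefully, as it is a case analysis on whether a given vertex is a source in $\mu_{e,f}$ and, if so, whether the corresponding source of $\lambda_e$ is in turn a source in $\lambda_e$; in effect it is the statement that the parent function is itself ``associative''. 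Once this is checked, the lower-left diagram commutes and the fact follows.
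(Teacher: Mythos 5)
Your proof takes essentially the same route as the paper: the paper also dismisses the first five diagrams as routine and verifies only associativity, via exactly your re-association bijections $(e,(f,g))\mapsto((e,f),g)$ on hyperedges and the identity plus $(e,(f,u))\mapsto((e,f),u)$ on vertices, with labels, arities and sources handled just as you say. The one step you defer --- the case analysis showing the composed parent functions agree --- is precisely what the paper writes out, and your two nested distinctions (is $u$ a source in the label of $f$; if so, is the corresponding source of the label of $e$ itself a source) produce exactly the paper's three-case description of the incidence list, which is verbatim identical on both sides, so your setup is the right one and only that final transcription is missing.
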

\begin{proof}A routine check of the axioms. We only check associativity, i.e.~that the following diagram commutes
\begin{align*}
	\vcenter{\xymatrix @R=2pc @C=3pc {\hmonad \hmonad \hmonad \Sigma  \ar[r]^{\flatt_{ \hmonad \Sigma}} \ar[d]_{\hmonad{ \flatt_\Sigma}} & \hmonad \hmonad \Sigma \ar[d]^{\flatt_\Sigma}  \\
		\hmonad \hmonad \Sigma \ar[r]_{ \flatt_\Sigma}& \hmonad \Sigma
		}} 	
\end{align*}
 Let  $G \in \hmonad \hmonad \hmonad \Sigma$. 
Define $G_1,G_2 \in \hmonad \Sigma$ to  be the results of applying to $G$ the following functions (to make the comparison easier, we discuss the two hypergraphs in parallel columns for the rest of this proof):
\petrisan{
\begin{align*}
\xymatrix @R=2pc @C=3pc {\hmonad \hmonad \hmonad \Sigma  \ar[r]^{\flatt_{ \hmonad \Sigma}}  & \hmonad \hmonad \Sigma \ar[d]^{\flatt_\Sigma}  \\
		& \hmonad \Sigma
		}	
\end{align*}
}{
\begin{align*}
	\xymatrix @R=2pc @C=3pc {\hmonad \hmonad \hmonad \Sigma   \ar[d]_{\hmonad{ \flatt_\Sigma}} &   \\
		\hmonad \hmonad \Sigma \ar[r]_{ \flatt_\Sigma}& \hmonad \Sigma
		}
\end{align*}
}
Our goal is to show that $G_1$ is the same (isomoprhism type)  as  $G_2$. Unfolding the definition of flattening, we see that the hyperedges in the two sourced hypergraphs are defined as follows.

	\petrisan{Hyperedges in $G_1$ are of the form
	\begin{align*}
((e,f),g)	
\end{align*}
where:
 \begin{itemize}
	\item $e$ is a hyperedge of $G$;
	\item $f$ is a hyperedge in the label of $e$;
	\item $g$ is a hyperedge in the label of $f$;
\end{itemize}
} {Hyperedges in $G_2$ are of the form
\begin{align*}
(e,(f,g)) 	
\end{align*}
where the conditions on $e,f,g$ are the same as for $G_1$.}
From the above description, it follows that the function $\alpha_E$ defined by
\begin{align*}
(e,(f,g)) \mapsto ((e,f),g).
\end{align*}
is  a bijection between the hyperedges from $G_1$ and the hyperedges of $G_2$. This is  because the conditions on $e,f,g$ are the same in both $G_1$ and $G_2$. The function also  preserves  labels and ranks, because  labels and ranks are inherited from $g$ in both $G_1$ and $G_2$. Let us now look at the vertices of the sourced hypergraphs:
\petrisan{
Vertices in $G_1$ are of the form:
\begin{align*}
v \qquad (e,w) \qquad ((e,f),u)	
\end{align*}
where:
\begin{itemize}
	\item $v$ is a vertex of $G$;
	\item $e$ is a hyperedge of $G$;
	\item $w$ is a vertex in the label of $e$ that is not a source in the label of $e$;
	\item $f$ is a hyperedge in the label of $e$;
	\item $u$ is a vertex in the label of $f$ that is not a source in the label of $f$.
\end{itemize}
}{
Vertices in $G_2$ are of form:
\begin{align*}
v \qquad (e,w) \qquad (e,(f,u))	
\end{align*}
where the conditions on $v,e,w,f,u$ are the same as for $G_1$.
}
Define $\alpha_V$ to be the function  from vertices of  $G_1$ to vertices of $G_2$  which is the identity on vertices of the forms $v$ and $(e,w)$ and which is otherwise defined by
\begin{align*}
((e,f),u)) \mapsto (e,(f,u)).
\end{align*}
Again, this function is a bijection, because the conditions on $v,e,w,f,u$ are the same in both $G_1$ and $G_2$.
This function preserves sources, because the source functions in both $G_1$ and $G_2$ are inherited from $G$.  
We are left with showing that incidence is preserved by the functions $\alpha_E$ and $\alpha_V$.  This follows immediately from the following  description of the incidence lists in $G_1$ and $G_2$, which is obtained by unraveling the definitions:
\petrisan{
The incidence list  of a hyperedge 
\begin{align*}
((e,f),g)	
\end{align*}
in $G_1$
is obtained by taking the incidence list of $g$ in the label of $f$, and applying the function which maps a vertex $u$ in the label of $g$ to:
\begin{itemize}
	\item $((e,f),u)$ if $u$ is not a source in the label of  $f$;
	\item $(e,f[i])$ if $u$ is the $i$-th source in the label of $f$, and $f[i]$ is not a source in the label of $e$;
	\item $e[j]$ if $u$ is the $i$-th source in the label of $f$, and $f[i]$ is the $j$-th source in the label of $e$.
\end{itemize}
}{The incidence list  of a hyperedge 
\begin{align*}
(e,(f,g))
\end{align*}
in $G_2$ is obtained by taking the incidence list of $g$ in the label of $f$, and applying the function which maps a vertex $u$ in the label of $g$ to:

\begin{itemize}
	\item $(e,(f,u))$ if $u$ is not a source in the label of  $f$;
	\item $(e,f[i])$ if $u$ is the $i$-th source in the label of $f$, and $f[i]$ is not a source in the label of $e$;
	\item $e[j]$ if $u$ is the $i$-th source in the label of $f$, and $f[i]$ is the $j$-th source in the label of $e$.
\end{itemize}}
\end{proof}

%or (b) there exist $j,j'$ such that  the following three conditions hold:
%\begin{align*}
%\underbrace{source(j)= f[j]}_{\text{in the label of $e$}}	 \quad \underbrace{source(j')= f'[j']}_{\text{in the label of $e'$}}	 \quad  \underbrace{e[j]=e'[j']}_{\text{in $G$}}
%\end{align*}

\subsection{Recognisability} In this section we discuss recognisable languages of hypergraphs, i.e.~languages of hypergraphs  that are recognised by $\hmonad$-algebras that are ``finite''. What does ``finite'' mean? This first idea might be to consider algebras that have finitely many elements altogether in the universe. This is a bad idea, for the following reason. For every arity $n \in \Nat$, there is at least one  $n$-ary sourced hypergraph (e.g.~only the sources and no hyperedges), and it must have some value under the product operation in an $\hmonad$-algebra. It follows that an $\hmonad$-algebra is nonempty on every arity, and therefore it cannot have a universe that is finite. A second idea is to consider algebras that are finite on every arity; this is the idea that we use, and it is also the idea that was used by Courcelle when defining \hr-recognisable languages of graphs, see below.
\begin{definition}[Recognisable language of hypergraphs]\label{def:h-recognisable}
	A language $L \subseteq \hmonadzero \Sigma$ is called \emph{recognisable} if it is recognised by a $\hmonad$-algebra that is finite on every arity. 
\end{definition}

% of \hr-recognisability, we use an alternative characterisation\footnote{Ibid, Theorem 4.34}

In Theorem~\ref{thm:aperio} below, we show that this definition of recognisability coincides with the notion of \hr-recognisability  originally introduced  by Courcelle\myfootcite[Definition 4.29]{Courcelle:2012wq}. The notion of \hr-recognisability is based on a choice of operations on sourced hypergraphs, called the \hr-operations and illustrated in Figure~\ref{fig:hr-operations}, of which the most important is the following \emph{parallel composition} operation. Define the {parallel composition} of two sourced hypergraphs of the same arity to be the sourced hypergraph, also of the same arity,	obtained by taking their disjoint union and then fusing the corresponding sources. 
 We write $\oplus$ for parallel composition; this operation is only defined on pairs of sourced hypergraphs of the same arity. Here is a picture:
\mypic{45}
%For every arity $n$, we can view parallel composition as a binary operation on elements of rank:
%\begin{align*}
%\oplus_n 	 : \hmonadn \Sigma n \times \hmonadn \Sigma n \to \hmonadn \Sigma n.
%\end{align*}
%This operation is a polynomial in the algebra $\hmonad\Sigma$, with two variables of rank $n$, as described in described in the following picture:

To discuss  the relationship  with \hr-recognisability, it will be convenient to view the \hr-operations as a special case of polynomial operations, as defined in  Definition~\ref{def:polynomial}. Consider an algebra of the form $\hmonad \Sigma$, i.e.~the product operation is flattening. Recall that if $X$ is a ranked set of variables, then  a polynomial  operation with variables $X$  in this algebra is a function
\begin{align*}
(\hmonad \Sigma)^X \to \hmonad \Sigma	
\end{align*}
which inputs an arity-preserving valuation of the  variables $X$, and outputs the result of applying this valuation to a  sourced hypergraph $G \in \hmonad (\hmonad \Sigma + X)$ that is fixed for the polynomial operation. The outputs of the polynomial  operation have the same arity as $G$. We are now ready to define \hr-recognisability. 
\begin{definition}[\hr-recognisable language]\myfootcite[Definition 4.29]{Courcelle:2012wq}
	\label{def:hr-recognisable}A language $L \subseteq \hmonadzero \Sigma$ is called \hr-recognisable if there is an  equivalence relation on $\hmonad \Sigma$, which:
	\begin{enumerate}
		\item recognises $L$, i.e.~$L$ is union of equivalence classes; and
		\item has finitely many equivalence classes on every arity; and
		\item is compatible with all of the \hr-operations defined in Figure~\ref{fig:hr-operations}.
	\end{enumerate}
\end{definition}

The following result shows that \hr-recognisability coincides with the notion of recognisability given at the beginning of this section.
 
\begin{theorem}\label{thm:same-as-hr}
	For a language $L \subseteq \hmonadzero{\Sigma}$, the following conditions are equivalent:
	\begin{enumerate}
		\item $L$ is recognisable in the sense of Definition~\ref{def:h-recognisable};
		\item $L$ is recognised by a congruence (in the sense of Definition~\ref{def:congruence}) with finitely many equivalence classes on every arity;
		\item $L$ is recognised by an equivalence relation that is compatible with all \hr-operations and has finitely many equivalence classes on every arity.
	\end{enumerate}
\end{theorem}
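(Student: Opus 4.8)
The plan is to prove the cycle of implications $1 \Rightarrow 2 \Rightarrow 3 \Rightarrow 1$, with the last arrow carrying essentially all of the content; the other two implications are bookkeeping with the Homomorphism Theorem.

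For $1 \Rightarrow 2$, suppose $L$ is recognised by a homomorphism $h : \hmonad \Sigma \to \alga$ into an algebra $\alga$ that is finite on every arity, and take $\sim$ to be the kernel of $h$. By the Homomorphism Theorem it is a congruence; it has at most as many classes of arity $n$ as $\alga$ has elements of arity $n$, hence finitely many on every arity; and since the characteristic function of $L$ factors through $h$, the kernel of $h$ refines the partition $\{L, \hmonadzero\Sigma \setminus L\}$ on the nullary sort, so $\sim$ recognises $L$. For $2 \Rightarrow 3$, a congruence is by Definition~\ref{def:congruence} compatible with every polynomial operation over the algebra $\hmonad \Sigma$, and the \hr-operations of Figure~\ref{fig:hr-operations} were observed to be polynomial operations over $\hmonad \Sigma$, so a congruence is in particular compatible with all of them; the remaining requirements (finitely many classes per arity, recognises $L$) are inherited verbatim.

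For $3 \Rightarrow 1$, let $\sim$ be an equivalence relation on $\hmonad \Sigma$ that recognises $L$, has finitely many classes on every arity, and is compatible with all \hr-operations. The key step is to show that $\sim$ is automatically a congruence, i.e.\ compatible with every polynomial operation $\sem t$ with $t \in \hmonad(\hmonad \Sigma + X)$. For this I would invoke the classical decomposition result of Courcelle: every finite sourced hypergraph over an arbitrary ranked alphabet is the value of a term built by the \hr-operations (parallel composition, source renaming, source forgetting) applied to atomic sourced hypergraphs --- a single hyperedge with identity, all-source incidence for each letter, together with the one-vertex arity-one sourced hypergraph. Applied to the sourced hypergraph $t$ over the alphabet $\hmonad \Sigma + X$ (each hyperedge labelled by a variable from $X$ or by a constant hypergraph from $\hmonad \Sigma$), this exhibits $t$ as an \hr-term whose leaves are single-edge gadgets, some labelled by variables and some by elements of $\hmonad \Sigma$. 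Unfolding $\sem t$ --- substitute the valuation, then flatten --- and using that flattening is an $\hmonad$-algebra homomorphism (by the associativity axiom) and hence commutes with the \hr-operations, that the \hr-operations are natural with respect to relabelling, and the unit--flattening identity, one gets that $\sem t$ equals a fixed composition of \hr-operations in which the variable leaves receive the valuation's values and the constant leaves receive fixed elements of $\hmonad \Sigma$. Since every \hr-operation is compatible with $\sim$, constant operations trivially are, and compatibility is preserved under composition of operations and under fixing some arguments to constants, $\sem t$ is compatible with $\sim$. Thus $\sim$ is a congruence; being finite on every arity, by the Homomorphism Theorem it is the kernel of a homomorphism $h : \hmonad \Sigma \to \alga$, and replacing $\alga$ by the image of $h$ (a subalgebra) we may assume $\alga$ is finite on every arity. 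As $\sim$ recognises $L$, the characteristic function of $L$ factors through $h$, so $L$ is recognised by $\alga$, which is condition $1$.

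The main obstacle is the decomposition result used above: that the \hr-operations generate every sourced hypergraph from the atoms. This is the one genuinely combinatorial ingredient; everything else is manipulation of the monad axioms and the Homomorphism Theorem. I would either cite it from Courcelle's book or prove it by induction on the number of hyperedges --- peel off one hyperedge as an atom, promote its incident vertices to auxiliary sources, recombine with the rest by parallel composition, and forget the auxiliary sources --- where the one point needing care is the source bookkeeping when an incident vertex is already a source or is shared with several hyperedges.
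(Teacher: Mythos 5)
Your proposal is correct and follows essentially the same route as the paper's proof: both reduce everything to the Homomorphism Theorem plus the key combinatorial fact that every sourced hypergraph is generated by the \hr-operations (built by adding hyperedges one at a time via parallel composition and then forgetting sources). The only cosmetic difference is that the paper first reduces compatibility with all polynomial operations to compatibility with \emph{linear unary} polynomial operations (replacing variable occurrences one by one, using finitariness), whereas you absorb that step into the observation that compatibility is preserved under composition and under fixing arguments --- the same one-occurrence-at-a-time argument in disguise.
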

\begin{proof}[Sketch]
The equivalence of items 1 and 2 follows from the Homomorphism Theorem.  It remains to prove the equivalence of items 2 and 3.

Define a \emph{linear unary polynomial  operation} to be a polynomial  operation which has  one variable $x$, and which uses this variable exactly once. It is not hard to see that an equivalence relation on $\hmonad \Sigma$ is compatible with all polynomial operations if and only if it is compatible with all linear unary polynomial operations; this is done by replacing each occurrence of each variable one by one (for this argument, it is crucial that the monad is finitary; the argument would not work for infinite hypergraphs).  Therefore, item 2 is equivalent to saying that $L$ is recognised by an equivalence relation that has finitely many equivalence classes on every arity, and is compatible with all linear unary polynomial operations. 

To complete the proof item 2, we show that an equivalence relation is compatible with all linear unary polynomial operations if and only if it is compatible with all \hr-operations. The key observation is that every sourced hypergraph can be obtained by composing the \hr-operations. To prove this observation,  we can start with a sourced hypergraph that has all vertices as sources and no hyperedges, then add  hyperedges one by one using parallel composition, and finally forget some of the sources.  It follows that  every linear unary polynomial  operation can be obtained by composing the \hr-operations, and therefore being compatible with all linear unary polynomial operations is the same as being compatible with all \hr-operations. 
\end{proof}

\begin{figure}
\petrisan{	For each arity $n$, there is an operation  which inputs two $n$-ary sourced hypergraphs, and outputs their  parallel composition.}{\ \vspace{-0.6cm} 
\mypic{33}}

\petrisan{ For each arity $n$, there is an operation which inputs an $n$-ary sourced hypergraph, and adds a new isolated vertex which becomes source $n+1$.}{\ \vspace{-0.6cm}  \mypic{50}}

\petrisan{	There is a constant for the empty hypergraph (arity zero) and a constant for every unit.}{Constants are polynomial operations.}

\petrisan{
For every injective function  
\begin{align*}
  f : \set{1,\ldots,k} \to \set{1,\ldots,n}
\end{align*}
there is an operation which inputs an $n$-ary sourced hypergraph, and returns  a $k$-ary one where the source function of the input is precomposed with $f$.}
{To see that this operation is a polynomial operation, consider the example 
\begin{align*}
f : \set{1,2} \to \set{1,2,3,4} 
\end{align*}
defined by $1 \mapsto 3$ and $2 \mapsto 1$. Then the polynomial corresponding to this operation looks like this:
\mypic{32}}
  \caption{\label{fig:hr-operations}The \hr-operations (left column) and why they are polynomial operations (right column).}
\end{figure}

\subsection{Treewidth}
In this section, we show that a subset of $\hmonadzero \Sigma$, e.g.~a set hypergraphs, has bounded treewidth if and only if it can be generated using finitely many polynomial operations.  
% specialise that definition to the monad $\hmonad$, and to algebras of the form $\hmonad \Sigma$, i.e.~free algebras. This will lead to a type of grammars for directed hypergraphs, which will turn out to have the same expressive power as the grammars in~\cite[Definition 4.1]{Courcelle:2012wq}, i.e.~the grammars defining \hr-equational sets. The main result of this section is Theorem~\ref{thm:treewidth-grammar}, which says that a set of hypergraphs has bounded treewidth if and only if it is contained in a set generated by a grammar.
%

\paragraph*{Treewidth.} We begin by defining (a directed hypergraph version of) treewidth. When talking about trees in this paper, we mean finite node labelled unranked trees without a sibling order, as described in Figure~\ref{fig:trees}.

\begin{figure}[hbt]
\mypic{36}
  \caption{Trees \label{fig:trees}}
\end{figure}

\begin{definition}
	[Tree decomposition]\label{def:tree-decomposition} A \emph{tree decomposition} of a hypergraph $G$ is a tree where each node is labelled by a set of vertices  of $G$, which is called the \emph{bag} of the node, such that:
\begin{enumerate}
	\item for every hyperedge of $G$, its incidence list is contained in the bag of some node;
	\item for every vertex $v$ of $G$, the set   
	\begin{align*}
\set{ x : \text{$x$ is a node of the tree $t$ whose bag contains $v$}}
\end{align*}
is nonempty and connected by the child relation on nodes of the tree decomposition.
\end{enumerate}
The width of a tree decomposition is  one plus the maximal size of a bag. The treewidth of a hypergraph is the minimal width of its tree decompositions. 
\end{definition}

Here is a picture of a tree decomposition
\mypic{37} 

The following result shows that bounded treewidth can be expressed in algebraic terms. The proof of the theorem is essentially the observation that a node in a tree decomposition can be viewed as a polynomial  operation (in the algebra $\hmonad \Sigma$), which puts together the hypergraphs generated by its children. 

\begin{theorem}\myfootcite[Proposition 3.7 and Example 4.3(8)]{Courcelle:2012wq}\label{thm:treewidth-grammar}
	Let $\Sigma$ be finite ranked set.  A subset  $L \subseteq \hmonadzero  \Sigma$   has bounded treewidth if and only if there exists a finite set $P$ of polynomial operations (the polynomial operations might use  different, possibly empty, sets of variables)  in $\hmonad \Sigma$ such that	
	\begin{align*}
  L \subseteq \underbrace{\text{least subset of $\hmonad \Sigma$ closed under applying polynomial operations from $P$}}_{\text{subset of $\hmonad \Sigma$ generated by $P$}}.
\end{align*}
\end{theorem}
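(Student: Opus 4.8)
The plan is to prove the two implications separately, with the bridge in both directions being the correspondence between nodes of a tree decomposition and polynomial operations in the algebra $\hmonad \Sigma$.

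\medskip

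\textbf{From bounded treewidth to a finite set of polynomial operations.} Suppose every hypergraph in $L$ has treewidth at most $k$. Fix such a hypergraph $G$ together with a tree decomposition $t$ of width $\le k$; by standard manipulations I may assume $t$ is \emph{nice} in the usual sense: the root bag is empty, each node has a bounded number of children, and adjacent bags differ in a controlled way. Reorient so each node's bag comes with an enumeration of its (at most $k$) vertices, turning the bag into the source interface of a sourced hypergraph of arity $\le k$. For a node $x$ of $t$, let $G_x$ be the sub-hypergraph built from the hyperedges ``introduced'' at $x$ (those whose incidence list sits in the bag of $x$ but not in the bag of its parent), with the bag of $x$ as sources. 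The subtree rooted at $x$ then generates, by flattening, the union of the $G_y$ over descendants $y$, restricted to the interface of $x$; and the value at $x$ is obtained from the values at its children by a single polynomial operation $p_x$ that (i) takes the disjoint children hypergraphs placed into variable slots, (ii) adjoins the constant $G_x$, (iii) fuses sources according to how children bags overlap the bag of $x$, and (iv) forgets the sources that are not in the bag of $x$. Because all bags have size $\le k+1$ and the branching is bounded, there are only finitely many isomorphism types of such $p_x$ over the finite label set $\Sigma$ — this uses finiteness of $\Sigma$ crucially. Let $P$ be this finite set. An induction on $t$ from the leaves up shows that the value computed at the root is exactly $G$ (as a nullary sourced hypergraph), and every intermediate value lies in the subset generated by $P$. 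Hence $L$ is contained in the set generated by $P$.

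\medskip

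\textbf{From a finite set of polynomial operations to bounded treewidth.} Conversely, suppose $L$ is contained in the subset generated by a finite set $P$ of polynomial operations in $\hmonad \Sigma$. Every element of this generated subset is the value of a finite term over $P$ (a finite tree whose nodes are operations from $P$ and whose leaves are nullary operations, i.e.\ constants). Given a hypergraph $G$ in $L$ with such a generating term $s$, I build a tree decomposition of $G$ whose shape is $s$: for a node of $s$ labelled by a polynomial operation $p = \sem{t_p}$ with $t_p \in \hmonad(\hmonad\Sigma + X)$, the bag is the set of vertices of $G$ that arise, after flattening the whole term, from the vertices of $t_p$ itself (as opposed to from inside a substituted subterm) — equivalently the sources of the sub-hypergraph produced by that node. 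The width is bounded because each $t_p$, being one of finitely many fixed sourced hypergraphs, has a bounded number of vertices, say $\le N$; so all bags have size $\le N$. Checking the two tree-decomposition axioms is a matter of tracing through the definition of flattening: a hyperedge of $G$ comes from exactly one $t_p$, so its incidence list — after parent-substitution — lands in vertices all of which are visible at that node (they are either vertices of $t_p$ or, via the source interface, vertices visible at the parent), giving axiom (1) after at most one step up the tree; and a vertex of $G$ is ``passed down'' through consecutive nodes precisely as long as it stays a source, which is a connected portion of the term tree, giving axiom (2). Therefore $G$ has treewidth at most $N$, a bound independent of $G$.

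\medskip

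\textbf{Main obstacle.} The routine-but-delicate part is the second direction: making precise which vertices of the flattened term ``belong'' to a given node and verifying that the connectivity condition (axiom 2 of tree decompositions) holds. This requires unwinding iterated flattening — the associativity computation in the proof of Fact~\ref{fact:hmonad-is-amonad} is exactly the kind of bookkeeping needed — and keeping straight the parent function that resolves sources. A clean way to manage this is to argue by induction on the term, maintaining the invariant that the sub-hypergraph generated by a node, together with its source interface, is isomorphic to the restriction of $G$ to the union of bags in that subtree, with the node's bag as the frontier; the connectivity of each vertex's occurrence set then falls out of the fact that a vertex leaves the interface at most once as one descends. In the first direction the only subtlety is bounding the number of distinct operations $p_x$, which is where niceness of the decomposition and finiteness of $\Sigma$ are used; everything else is direct.
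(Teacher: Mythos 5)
Your overall strategy coincides with the paper's: both directions rest on the correspondence between nodes of a tree decomposition and polynomial operations in $\hmonad \Sigma$, and your right-to-left direction is essentially the paper's ``put the tree decompositions together'' argument carried out explicitly (your uniform bound $N$ over the whole generated set is the right form of the statement). The one slip in that direction is the parenthetical claim that the bag of a node is ``equivalently the sources of the sub-hypergraph produced by that node''. It is not: the bag must contain \emph{all} vertices contributed by $t_p$ at that node, including its non-source vertices and the interiors of the constants from $\hmonad\Sigma$ embedded in $t_p$; otherwise condition 1 of Definition~\ref{def:tree-decomposition} fails for a hyperedge incident to a vertex that never becomes part of any interface. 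Your primary description of the bag is the correct one, so this is cosmetic.

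The left-to-right direction has a genuine gap. You take $G_x$ to be the sub-hypergraph of all hyperedges introduced at the node $x$ and assert that, since bags have size at most $k+1$ and $\Sigma$ is finite, the operations $p_x$ range over finitely many isomorphism types. This fails because the monad $\hmonad$ explicitly allows parallel hyperedges: the hypergraph with two vertices and $m$ parallel binary hyperedges has bounded treewidth for every $m$ (a single bag suffices), yet all $m$ hyperedges are introduced at one node, so $G_x$ --- and hence $p_x$ --- runs over infinitely many isomorphism types as $m$ grows. Niceness of the decomposition does not help, since it controls how bags change from node to node, not how many hyperedges sit over a fixed bag. The repair is to introduce hyperedges one at a time, using the binary parallel composition $\oplus$ together with the unit constants; equivalently, to take $P$ to be the finitely many \hr-operations of arity at most $k+1$. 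This is exactly why the paper routes this implication through the cited fact that treewidth $< n$ coincides with generability by the \hr-operations of arity $\le n$, rather than reading one operation directly off each bag of the decomposition.
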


\begin{proof}
In the proof, we also use tree decompositions also for sourced hypergraphs. A tree decomposition for  a sourced hypergraph is defined the same way as for hypergraphs, except  that we  require   all sources to be  contained in the root bag. In particular the treewidth of a sourced hypergraph is at least its arity minus one.

To prove the left-to-right implication, we use the following result\myfootcite[Theorem 2.83]{Courcelle:2012wq}\label{claim:treewidth-polynomials}, which can be proved by induction on the size of a tree decomposition: a hypergraph has treewidth $< n$ if and only if it can be generated by those operations in Figure~\ref{fig:hr-operations} that use (on input and output) arguments of arity $\le n$. Since these operations are all polynomial operations, and there are finitely many of them, we get the left-to-right implication.

%Note that if the subset in the statement of the theorem has bounded treewidth, then it also has bounded rank, by the remarks at the end of Definition~\ref{def:tree-decomposition}.
The  right-to-left implication of the theorem follows immediately from the following claim.

\begin{claim}	For every ranked set  $X$ and polynomial operation 
	\begin{align*}
  p : (\hmonad \Sigma)^X \to \hmonad \Sigma
\end{align*}
there  exists some $k \in \Nat$ such that every valuation $\eta \in  (\hmonad \Sigma)^X$ satisfies
\begin{align*}
  \text{treewidth of $p(\eta)$} \quad \le \quad k + \max_{x \in X} \text{treewidth of $\eta(x)$}
\end{align*}
\end{claim}
\begin{proof} The number $k$ is the treewidth of the sourced hypergraph defining $p$.  The number $k$ is necessarily finite, since we are dealing with finite hypergraphs. The result is then proved
by putting together tree decompositions in the obvious way.
\end{proof}

\end{proof}

\section{Monadic second-order logic}
\label{sec:mso}
A classical theme in language theory is to use logic to describe properties of  objects such as  words, trees, graphs, etc. The most prominent logics are first-order logic and  monadic second-order logic (\mso). For more background on this theme, see the survey of Thomas~\myfootcite{Thomas:1990tw}.  The seminal result for this topic that for finite words recognisability (by finite monoids, or equivalently finite automata) is the same thing as definability in \mso; this was shown by B\"uchi, Elgot and Trakhtenbrot. This result was later extended to finite trees, by Thatcher and Wright, and then famously to infinite binary trees  by Rabin (but Rabin's notion of recognisability for infinite trees used nondeterministic automata; finding a suitable algebraic notion of recognisability for infinite trees remains an open question, see ).   For graphs and hypergraphs -- general ones but especially those of bounded treewidth -- the study of the connection between recognisability and definability in \mso was pioneered by Courcelle, and this is the topic that is discussed in this section.
Here is the plan.

\begin{itemize}
\item In Section~\ref{sec:mso-notation}, we introduce  \mso and its counting extension.
	\item In Section~\ref{sec:courcelle-theorem},  we  prove Courcelle's Theorem, which says that for hypergraph languages,  definability  in counting \mso implies recognisability. The proof is phrased  so that it can used for other monads, i.e.~it yields sufficient criteria for a monad to admit ``Courcelle's Theorem''. Examples of such other monads are the monad $\vmonad$ that will be discussed  in Section~\ref{sec:vertex}, as well as all monads for finite and infinite words and trees that are known to the author.  
	
	\item In Section~\ref{sec:aperiodicity} we discuss the expressive power of \mso without counting. We give an algebraic characterisation: a language of bounded treewidth is definable in \mso without counting if and only if it is recognised by an algebra that is finite on every arity and aperiodic (aperiodicity for $\hmonad$-algebras is a condition related to aperiodicity for monoids that was famously used by Sch\"utzenberger\myfootcite{Schutzenberger:1965il} to characterise the star-free languages). 
	\item  In Section~\ref{sec:compute-syntactic} we discuss the algorithmic aspects of checking if an algebra is aperiodic. The difficulty is that  $\hmonad$-algebras have infinitely many sorts and infinitely many operations. We  introduce a notion of computable $\hmonad$-algebra and show that: (a) languages  definable in counting \mso admit computable algebras; (b)  for languages of bounded treewidth, computable algebras can be minimised and tested for conditions such as  aperiodicity.  Combining the results from Sections~\ref{sec:aperiodicity} and~\ref{sec:compute-syntactic}, we get an algorithm which inputs a sentence of counting \mso that defines a language of bounded treewidth, and says whether or not the same language can be defined without counting. Perhaps more importantly, the results from  Sections~\ref{sec:aperiodicity} and~\ref{sec:compute-syntactic} show that  the monad framework is mature enough to get algebraic characterisations of logics (algebraic characterisations are a topic that has been widely studied for words and also for trees);  at least under the assumption of bounded treewidth. 
\end{itemize}

\subsection{\mso over hypergraphs}
\label{sec:mso-notation}
In this section, we introduce notation for \mso.  
 A \emph{vocabulary} is defined to be  a set of \emph{relation names} with associated arities (which are natural numbers, possibly zero) plus a set of \emph{constant names}. Note that we disallow functions that are not constants; this is because we want to have finitely many non-equivalent formulas of first-order logic with given quantifier rank.  A \emph{model} over a vocabulary~$\Sigma$ consists of a set called the  \emph{universe} and an \emph{interpretation}, which maps each $n$-ary relation name in the vocabulary to an $n$-ary relation on the universe and  maps each constant name to an element of the universe. To express properties of models, we use first-order logic (\fo) and monadic second-order logic (\mso). \mso is the extension of first-order logic which allows  quantification  over sets of elements in the universe, see the textbook of Ebbinghaus and Flum~\myfootcite[page 38]{Ebbinghaus:2013kj} for a precise definition. The word \emph{monadic} means that second-order quantification is restricted to sets of elements, as opposed to sets of pairs, triples, etc. 
 
% When writing formulas, we adopt the convention that lowercase variables $x,y,z,\ldots$ range over elements in the universe and uppercase vairables $X,Y,Z,\ldots$ range over sets of elements.  Here is an example sentence of \mso, which says that a binary relation $E(x,y)$  is not strongly connected:
%\begin{align*}
%\exists X. \  
%\begin{cases}
%\forall x \forall x. \ x \in X \land E(x,y) \Rightarrow y \in X\\
%\exists x.\ x \in X\\
%\exists x.\ x \not \in X
%\end{cases}
%\end{align*}

  \mso cannot define  properties like ``the size of the universe  is even'', which are recognised by  algebras that are finite on every arity. To deal with such properties, we consider \emph{counting \mso}, which extends the syntax of \mso by adding, for every $m \in \Nat$ and $k \in \set{0,\ldots,m-1}$  a (second-order) predicate 
\begin{align*}
|X| \equiv k \mod m
\end{align*}
which inputs a set $X$ and returns true if  its size   is  finite and congruent to $k$ modulo~$m$.  The number $m$ is called the \emph{modulus} in the predicate.  We will  consider fragments of counting \mso where the counting predicate above is only allowed for moduli $m$ taken from some set $M$; e.g.~when $M$ is empty we recover the usual \mso.

\paragraph*{Hypergraphs as models.}
To define properties of (sourced) hypergraphs using \mso or counting \mso, we use the following encoding of hypergraphs as models.

\begin{definition}[Sourced hypergraphs as models]\label{def:relational-encoding} Let $\Sigma$ be a finite ranked set. For sourced hypergraph  $G \in \hmonad \Sigma$, define its model  $\structh G$ as follows:
	\begin{enumerate}
		\item The universe is the disjoint union of the vertices and  hyperedges;
		\item For every $a \in \Sigma$ there is a unary relation interpreted as 
		\begin{align*}
  \set{e : \text{$e$ is a hyperedge that has label $a$}}
\end{align*}

		\item For every $i \in \set{1,2,\ldots,\text{maximal arity in }\Sigma}$ there is a binary relation   interpreted as 
		\begin{align*}
  \set{(e,v) : \text{$e$ is a hyperedge whose $i$-th incident vertex is $v$}}
\end{align*}
		\item For every $i \in \set{1,2,\ldots,\text{rank of }G}$ there is a constant for  the $i$-th source.
	\end{enumerate}
\end{definition}
If the ranked set $\Sigma$ is finite, then the vocabulary in the above definition is finite (but depends on the arity of $G$, as used in item 4). In principle, the definition could be applied to infinite alphabets $\Sigma$, in which case the vocabulary would be infinite, but  we do not use infinite variant.  Note that the model defined above has hyperedges in the universe, this issue is discussed in the following example.

\begin{myexample}\label{ex:digraphs-h}   There are two natural encodings of a directed graph $G$ as models, which are denoted as $\lfloor G \rfloor$ and $\lceil G \rceil$ by Courcelle and Engelfriet\myfootcite[Sections 1.3.1 and 1.8]{Courcelle:2012wq}:
\begin{itemize}
	\item[$\lfloor G \rfloor$] the universe is the vertices and  edges are represented by a binary relation;
	\item[$\lceil G \rceil$] the universe is the vertices plus the edges, and  incidence is represented by two  binary relations for source and target of edges.
\end{itemize}
In this section, we are interested in the second type of coding; the first type will be used in Section~\ref{sec:vertex}.  To recover the coding $\lceil G \rceil$, we view a directed graph   as a hypergraph as described in Example~\ref{ex:graphs}, and represent it as a model using Definition~\ref{def:relational-encoding}. 
\end{myexample}

  In Section~\ref{sec:vertex}, we consider a different model encoding $\structv$  (which will apply only to directed graphs, and not hypergraphs). The encoding $\structv$  corresponds to $ \lfloor G \rfloor$ discussed in the above example. The two encodings $\structh$ and $\structv$ lead to different expressive powers of \mso for directed graphs; the difference being quantification over sets of edges.  To avoid confusion between the two encodings, we use the name ``definable in \msoh'' for languages that are defined using the encoding $\structtwo$ from Definition~\ref{def:relational-encoding}; and  the name ``definable in  \msov'' for languages definable using the encoding from Section~\ref{sec:vertex}. The choice of numbers 1 and 2 originates from the graph setting\myfootcite[page 69]{Courcelle:2012wq}, where 2 indicate that sets of edges can be quantified.

\begin{definition}[\msoh]\label{def:msotwo}
A set of hypergraphs $L \subseteq  \hmonadzero \Sigma$ is called definable in (counting) \msoh if there is a sentence $\varphi$ of (counting) \mso over  vocabulary used in Definition~\ref{def:relational-encoding}  that  defines $L$ in the following sense:
	\begin{align*}
G \in L \quad \text{iff} \quad \structtwo G \models \varphi \qquad \text{for every  $G \in \hmonadzero \Sigma$}.	
\end{align*}	
\end{definition}
%In the above definition,  $L$ might contain sourced hypergraphs with nonzero ranks. For example, if the sentence $\varphi$ is ``true'' then, $L$ will be all sourced hypergraphs. Typically, we are interested in the case when $L$ contains only  hypergraphs without sources, i.e.~when $L \subseteq \hmonadzero \Sigma$. This restriction can be expressed by the logic: a sourced hypergraph has rank zero if and only if  no element of the universe belongs to the unary relation  ``first source'' defined in item 4 of Definition~\ref{def:relational-encoding}.

\subsection{Courcelle's Theorem}
\label{sec:courcelle-theorem}
This section is devoted to Courcelle's Theorem\myfootcite[Section 5.3]{Courcelle:2012wq}. 
\begin{theorem}[Courcelle's Theorem]\label{thm:courcelle}
If a language $L \subseteq \hmonadzero \Sigma$  is definable in counting \msoh, then it  is
	 recognisable (i.e.~recognised by an $\hmonad$-algebra that is finite on every arity).
\end{theorem}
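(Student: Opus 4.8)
The plan is to follow the classical Büchi--Elgot--Trakhtenbrot strategy, adapted to the monad $\hmonad$: show that the class of counting-\msoh-definable languages can be captured by an algebra whose elements are the relevant ``\mso-types'', and that this algebra is finite on every arity. Concretely, fix a quantifier rank $r$ and a finite set $M$ of moduli large enough for the sentence $\varphi$ defining $L$. For each arity $n$, let the universe $A_n$ of the candidate algebra consist of the set of $(r,M)$-counting-\mso-theories of $n$-ary sourced hypergraphs in $\hmonad\Sigma$, where the theory records which sentences of counting \mso of quantifier rank $\le r$ and moduli in $M$, over the vocabulary of Definition~\ref{def:relational-encoding} for arity $n$, hold in the model $\structh G$. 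The first key step is a counting analogue of the Feferman--Vaught/Ehrenfeucht composition lemma: the theory of $\structh G$ is determined, in a computable way, by the theories of the labels of $G$ together with the ``shape'' of $G$ (its vertices, hyperedges, incidence and source structure) once we also remember, for each label, the counting-\mso type. Since $\Sigma$ is finite and $\hmonad$ is finitary, and since for fixed $r$ and fixed finite $M$ there are only finitely many such theories on each arity, the set $A_n$ is finite for every $n$.

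Next I would define the product operation $\pi : \hmonad A \to A$, where $A$ is the ranked set with $A_n$ on arity $n$. Given $t \in \hmonad A$, i.e.\ a sourced hypergraph whose hyperedges are labelled by theories, its image $\pi(t)$ should be the theory of the flattening of any concrete witness: replace each hyperedge $e$ of $t$, labelled by a theory $\tau$, by some sourced hypergraph realising $\tau$, flatten, and take the counting-\mso type of the result. The composition lemma from the previous paragraph is exactly what guarantees this is well defined (independent of the choice of realisers) and gives an explicit description of $\pi$. Verifying the Eilenberg--Moore axioms --- $\pi \circ \unitt_A = \mathrm{id}$ and associativity with $\flatt$ --- then reduces to the fact that counting-\mso types compose associatively, which again follows from the composition lemma applied twice, exactly in the style of the associativity check in the proof of Fact~\ref{fact:hmonad-is-amonad}.

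Finally, to see that this algebra recognises $L$: the homomorphism $h : \hmonad\Sigma \to A$ sends a $\Sigma$-labelled sourced hypergraph $G$ to the theory of $\structh G$; concretely $h = \pi \circ \hmonad(\unitt_A \circ \tau_0)$ where $\tau_0 : \Sigma \to A$ assigns to each label its one-hyperedge unit's theory, and that $h$ is a homomorphism follows from $\pi$ being the algebra structure on $\hmonad\Sigma$ pushed forward. The accepting set $g^{-1}(\text{yes}) \subseteq A_0$ is the set of theories containing $\varphi$; since membership of $\varphi$ in a $0$-ary theory is decided by the theory itself, $L = h^{-1}(g^{-1}(\text{yes}))$, so $L$ is recognised by an algebra finite on every arity. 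The main obstacle is the composition lemma for \emph{counting} \mso: one must be careful that the counting predicates $|X|\equiv k \bmod m$ compose correctly when a set $X$ of vertices/hyperedges of the flattening is split according to which hyperedge of $t$ each element comes from --- the cardinality of $X$ modulo $m$ is the sum of the contributions of the pieces modulo $m$, so the composition must track, for each hyperedge of $t$ and each choice of ``local'' set, the size modulo $m$, which is why the moduli set $M$ must be fixed in advance and why it is essential that $\Sigma$ (hence the number of pieces, via finitariness of $\hmonad$) is finite. Everything else is a routine, if lengthy, induction on formula structure.
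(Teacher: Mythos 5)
Your proposal is correct and follows essentially the same route as the paper: both rest on a composition lemma for counting \mso (the paper's ``Compositional Encoding Lemma'') stating that the counting-\mso type of a flattening is determined by the types of the labels together with the shape of the outer hypergraph, and both take the recognising algebra to be the bounded-quantifier-rank, bounded-moduli counting-\mso types, of which there are finitely many on each arity. The paper merely packages the same content differently: it proves the composition lemma by reducing counting \mso to first-order logic over powerset models and checking \fo-compatibility of disjoint unions, quantifier-free interpretations and quantifier-free universe restrictions (the last being needed precisely because flattening deletes the source vertices of the labels), and it obtains the algebra via the congruence $\approx$ and the Homomorphism Theorem rather than by constructing the type algebra and verifying the Eilenberg--Moore axioms by hand.
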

We give a proof of  the theorem below. The point is not to  show that the theorem is true,  which is known, but to write the proof so that it can be resued for other monads.

%   Another part of Courcelle's Theorem says that model checking for counting \msoh can be done in linear time on structures of bounded treewidth, assuming that structures are given together a tree decomposition.

As discussed at the beginning of Section~\ref{sec:mso}, the  converse implication in Courcelle's  Theorem fails, see Example~\ref{ex:cliques} below.  In the example, it is crucial  to use a language of unbounded treewidth. For  bounded treewidth,  the converse  implication of Courcelle's Theorem is true; we will revisit this issue at the end of this section.

\begin{myexample}\label{ex:cliques}
Consider languages of cliques. For every set $X \subseteq \Nat$ the language
\begin{align*}
  \set{G : \text{$G$ is an undirected graph that is a clique and has size in $X$}},
\end{align*}
is recognisable\myfootcite[Proposition 4.36]{Courcelle:2012wq}, 
assuming the representation of  graphs as hypergraphs described in Example~\ref{ex:graphs}. The basic idea is: for a sourced graph, if there is at least one vertex that is not a source, then the recognising homomorphism needs to produce only one bit of information (is the sourced graph a clique in the language?). For some  choices of $X$, the  language is not  definable in counting \msoh, e.g.~when $X$ is an undecidable set of numbers.
\end{myexample}

%To simplify notation, we first prove the theorem for \msoh without counting, and then later discuss how the proof extends to \msoh with counting.
\newcommand{\msotwostruct}[1]{\mathsf{mso}#1}
Our proof of Courcelle's Theorem, like the original proof,  uses  the ``composition method'' in logic\footnote{An alternative proof would use the method proposed in~\cite[Section 6]{Anonymous:2015vr}. There, an abstract definition of \mso is given, and then~\cite[Lemma 6.2]{Anonymous:2015vr} is proved showing  that languages defined in this abstract \mso are necessarily recognised by a finite algebras. One reason why we do not use that method is that  proof of Lemma 6.2 contains a  mistake, which   was pointed out by Julian Salamanca: to actually work, the proof of Lemma 6.2 in~\cite{Anonymous:2015vr} requires an additional assumption, namely that there is a distributive law of the monad over powerset. Nevertheless, the method from~\cite{Anonymous:2015vr}, can in fact be used to show the Recognisability Theorem, because: (a) the additional assumption on  a distributive law is satisfied for the monad $\hmonad$; and (b) the abstract notion of \mso used in~\cite{Anonymous:2015vr} is consistent with the notion of \cmso used here. These ideas will be described in an upcoming paper. }. 
The idea behind the composition method is this: if we know the theory of smaller hypergraphs and we know how these hypergraphs are put together to get a bigger hypergraph, then we also  know the theory of the bigger hypergraph.   We begin by describing the composition method for first-order logic, and then lift it to counting \mso. Most of the discussion in this section is about first-order logic and counting \mso in general, without assuming that the logics are evaluated in models representing sourced hypergraphs.

%To prove the Recognisability Theorem, we will show that for every quantifier rank $r$  and set of numbers $M$, the relation ``satisfy the same sentences of counting \mso which have quantifier rank $r$ and use modulo counting predicates for moduli in $M$'' is a congruence that has finitely many equivalence classes. The Recognisability Theorem will then follow. These ideas are described in more detail below. 

%Likewise, counting \msotwo that uses moduli from $M$ corresponds to first-order logic over $\powerset_M \structa$.

%
%The Congruence Lemma below says that $\equiv^M_r$ is a congruence. To formally state the lemma, we need to extend this relation from hypergraphs to sourced hypergraphs.
% For a sourced hypergraph $\structa$, say of rank $n$, we use the same encoding $\structtwo \structa$ into relational models as described in Definition~\ref{def:relational-encoding}, except that there are $n$ additional constants: source 1, source 2, etc.  Using this encoding, we can view $\equiv^M_r$ as  a (rank-preserving) equivalence relation on $\hmonad \Sigma$, with elements being equivalent if they have the same number of sources and the corresponding relational models are equivalent with respect to $\equiv^M_r$.

\paragraph*{\fo-compatible operations.}
We assume that the reader is familiar with basic notions from model theory, such as quantifier rank and  Ehrenfeucht-Fra\"iss\'e  games for first-order logic\myfootcite{Ebbinghaus:2013kj}. 

\begin{definition}[$r$-equivalence]
For  $r \in \Nat$ and two models $\structa$ and $\structb$, we write
\begin{align*}
  \structa \equiv_r \structb
\end{align*}
if the models have the same vocabulary and  player Duplicator has a winning strategy in the $r$-round  Ehrenfeucht-Fra\"iss\'e  game over these two models.	
\end{definition}
Ehrenfeucht's Theorem\myfootcite[Theorem 1.2.8]{Ebbinghaus:2013kj} says that if the vocabulary is finite, then  $\structa \equiv_r \structb$ holds  if and only if the two models satisfy the same sentences  of first-order logic with quantifier rank at most $r$.  If the vocabulary is finite, then up to logical equivalence there are only finitely many sentences of given quantifier rank, which gives the following fact (here it is important that we have no function symbols of arity $\ge 1$, only constants).
%\begin{fact}
%If the vocabulary is finite, then for every $r$ the relation $\equiv_r$ is an equivalence relation with  finitely many equivalence classes.
%\end{fact}

%Consider an operation -- such as product or disjoint union -- which inputs one or more models and  outputs  another model. We say that this operation is compatible with \fo if   first-order queries to the output can be replaced by first-order queries to the input, without changing quantifier rank. This notion is formalised below.

\begin{definition}[Operations compatible with \fo] \label{def:compatible-with-fo} Let $\Sigma_1,\ldots,\Sigma_n,\Sigma$  be vocabularies. An operation 
	\begin{align*}
f :   \text{(models over $\Sigma_1$)} \times \cdots \times \text{(models over $\Sigma_n$)} \to  \text{models over $\Sigma$},
\end{align*}
is called \emph{compatible} with an equivalence relation $\equiv$ on models if 
\begin{align*}
\structa_1 \equiv \structb_1 ,\ldots, \structa_n \equiv \structb_n \quad \text{implies} \quad f(\structa_1,\ldots,\structa_n) \equiv f(\structb_1,\ldots,\structb_n).
\end{align*}
An operation is
 \emph{compatible with \fo} if it is compatible with  $\equiv_r$ for every $r \in \Nat$. 
\end{definition}

We list below four types of  operations on models;  Lemma~\ref{lem:fo-compatible} says that all of them are  compatible with \fo. 
\begin{enumerate}
\item \emph{Products.} Define the product 
 \begin{align*}
	\prod_{i \in I} \structa_i
\end{align*}
of a family of models (possibly over  different vocabularies)
as follows. The universe is the product of the universes. For every $i \in I$ and every relation $R$ in the model $\structa_i$, the product has an $n$-ary relation interpreted as 
\begin{align*} 
\set{(a_1,\ldots,a_n) : R(\pi_i(a_1),\ldots,\pi_i(a_n))}.
\end{align*}
where $\pi_i$ is the projection of the product onto $\structa_i$. For  every family 
\begin{align*}
\set{c_i \text{ is a constant in $\structa_i$}}_{i \in I}	
\end{align*}
the product has a constant that  is interpreted  coordinatewise. 
\item \emph{Disjoint unions.}  Define the disjoint union 
 \begin{align*}
	\coprod_{i \in I} \structa_i
\end{align*}
of a family of models (possibly over  different vocabularies)
as follows. The universe is the disjoint union of the universes. For every  $i \in I$ and every relation $R$ in the model $\structa_i$, the disjoint union has an $n$-ary relation interpreted as 
\begin{align*} 
\set{(a_1,\ldots,a_n) : a_1,\ldots,a_n \in \structa_i \text{ and }R(a_1,\ldots,a_n)}.
\end{align*}
For every $i \in I$ and every constant $c$ in  $\structa_i$, the disjoint union has a corresponding constant. Note that the vocabularies of the product and disjoint union have the same relations, but different constants. 
\item \emph{Quantifier-free universe restriction.} Let $\Sigma$ be a vocabulary, and let $\varphi$ be a quantifier-free formula with one free variable. Using $\varphi$, we  define an operation
\begin{align*}
  \text{models over $\Sigma$} \quad \to \quad   \text{models over $\Sigma$}
\end{align*}
which restricts the universe to elements that satisfy $\varphi$, and restricts all other relations to the new smaller universe.  This operation is partial, because it is undefined if some constant violates $\varphi$.
\item \emph{Quantifier-free interpretation.} Let $\Sigma$ and $\Gamma$ be vocabularies, and $f$ be a function which assigns:
\begin{itemize}
	\item to each $n$-ary relation name in $\Gamma$ a quantifier-free formula over $\Sigma$ with $n$ arguments;
	\item to each constant name in $\Gamma$ a constant name in $\Sigma$.
\end{itemize}
 From $f$ we get a function from models over $\Sigma$ to models over $\Gamma$ as follows: the universe is not changed,  each relation name $R \in \Gamma$  is interpreted according to $f(R)$ and each constant $c$ is interpreted as $f(c)$.  

\end{enumerate}

\begin{lemma}\label{lem:fo-compatible}
	Products, disjoint unions, quantifier-free interpretations, and quantifier-free universe restrictions are compatible with \fo. 
\end{lemma}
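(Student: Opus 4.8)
The plan is to prove Lemma~\ref{lem:fo-compatible} by giving, for each of the four constructions, an explicit strategy for Duplicator in the $r$-round Ehrenfeucht--Fra\"iss\'e game on the output models, built from winning strategies on the input models. Fix $r \in \Nat$; in each case we assume Duplicator wins the $r$-round game on the relevant pairs of inputs and we must produce a winning $r$-round strategy on the outputs. The common mechanism is that all four operations are ``local'' in a suitable sense: a single element of an output model is determined by one element (or a tuple built coordinatewise) of the inputs, and the relations of the output model are quantifier-free definable from the inputs' relations (together with the coordinate/summand bookkeeping), so Duplicator can simply play coordinatewise (or summand-wise) and transport the winning condition back.

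First I would handle \emph{disjoint unions}, which is the cleanest case. Given plays $\structa_i \equiv_r \structb_i$ for all $i \in I$, Duplicator plays on $\coprod_i \structa_i$ versus $\coprod_i \structb_i$ as follows: whenever Spoiler picks an element in the $i$-th summand, Duplicator answers in the $i$-th summand of the other union using her winning strategy for the pair $(\structa_i,\structb_i)$ --- with the subtlety that the same summand $i$ may be visited several times, so Duplicator maintains a play in each $(\structa_i,\structb_i)$ game simultaneously, each of length at most $r$. Since every atomic formula of the disjoint-union vocabulary only relates elements of a single common summand (constants excepted, which are matched summand-by-summand by the individual strategies), the resulting position is a partial isomorphism iff each summand's position is, which holds because each individual strategy was winning. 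Next, \emph{products}: given $\structa_i \equiv_r \structb_i$, Duplicator plays on $\prod_i \structa_i$ versus $\prod_i \structb_i$ by running, in parallel, one copy of her winning strategy for each coordinate $i$; when Spoiler plays a tuple $(a_i)_{i}$ Duplicator feeds $a_i$ into the $i$-th game, collects the responses $(b_i)_i$, and plays that tuple. An atomic formula of the product vocabulary indexed by $i$ holds of a tuple of product-elements iff the corresponding atomic formula holds in $\structa_i$ of the $i$-th projections; since each coordinate play is a partial isomorphism, so is the product play. The only point to note is that $I$ may be infinite, but that is harmless: EF games have finitely many rounds, each atomic formula mentions a single index $i$, and Duplicator only ever needs the strategies for the finitely many indices that have actually been touched --- or, more simply, she maintains all of them but this requires no choice beyond the given strategies.

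For \emph{quantifier-free interpretation} $f : \text{models over }\Sigma \to \text{models over }\Gamma$, I would show the stronger statement that $\structa \equiv_r \structb$ implies $f(\structa) \equiv_r f(\structb)$ via the \emph{same} board: the universes are unchanged, so Duplicator just replays her winning strategy for $(\structa,\structb)$ verbatim. After $r$ rounds the position $\bar a \mapsto \bar b$ is a partial $\Sigma$-isomorphism; because each relation $R \in \Gamma$ is interpreted by a quantifier-free (hence atomic-boolean) $\Sigma$-formula $f(R)$ with no quantifiers, a partial $\Sigma$-isomorphism automatically preserves $f(R)$ and the renamed constants, so it is a partial $\Gamma$-isomorphism --- this is the standard fact that quantifier-free formulas are preserved under partial isomorphism, which one can also see as a quantifier-rank bookkeeping: $\models f(R)(\bar x)$ has quantifier rank $0$. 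The \emph{quantifier-free universe restriction} by a quantifier-free $\varphi(x)$ is the analogous argument, with one extra wrinkle: the universe shrinks to $\varphi^{\structa}$, so Duplicator must never be forced to play, or allowed to be answered with, an element outside the restricted universe. But $\varphi$ is quantifier-free, so a partial isomorphism (of length $< r$, maintained inductively) preserves $\varphi$; hence if Spoiler plays $a$ with $\structa \models \varphi(a)$ then Duplicator's response $b$ satisfies $\structb \models \varphi(b)$ by the partial-isomorphism invariant, and vice versa, so the play stays inside the restricted models and at the end is a partial isomorphism of the restrictions. (The operation being partial when a constant violates $\varphi$ is not an issue: compatibility is only asserted where both sides are defined, and $\structa\equiv_r\structb$ with $r\ge 1$ forces the constants to agree on $\varphi$.)

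The main obstacle I anticipate is purely bookkeeping rather than conceptual: in the disjoint-union and product cases one must be careful that Duplicator is running several sub-games of \emph{bounded but a priori unpredictable} length (a given summand or coordinate may absorb anywhere from $0$ to $r$ of the rounds), and one should phrase the invariant so that after $k$ global rounds each sub-game is in a position of length $\le k$ reachable by its winning strategy --- a routine induction on $k$, but the one place where the write-up needs care. Everything else reduces to the two elementary facts that quantifier-free formulas are preserved by partial isomorphisms and that every atomic formula in the output vocabulary ``sees'' only one input component, so I would state those two observations up front and then dispatch the four cases in a couple of lines each.
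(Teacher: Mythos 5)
Your proof is correct and is precisely the argument the paper has in mind: its own proof consists of the single sentence ``An application of Ehrenfeucht--Fra\"iss\'e games'' together with citations to Mostowski and Hodges, and your coordinatewise/summand-wise Duplicator strategies plus the preservation of quantifier-free formulas under partial isomorphisms are the standard way to carry that out. The only (harmless) imprecision is your closing remark that a coordinate may absorb ``anywhere from $0$ to $r$'' rounds --- that applies to disjoint unions, whereas in the product case every coordinate receives exactly one move per round.
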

\begin{proof}
An application of  Ehrenfeucht-Fra\"iss\'e games. The lemma can be traced back to Mostowski~\myfootcite{Mostowski:1952ew}, see also the discussion of products in Hodges' textbook~\myfootcite[Section 9 and the historical remarks on p. 476]{Hodges:1993kia}.
\end{proof}

\paragraph*{Operations compatible with counting \mso.}  Instead of treating counting \mso as a logic in its own right, it will be convenient  to view it as first-order logic over a suitably defined powerset model. 

\newcommand{\sing}{\mathsf{sing}}
\begin{definition}[Powerset model]\label{def:powerset-model}
For a  model $\structa$, define its  \emph{powerset model} $\powerset \structa$ as follows. The universe of the powerset model is the powerset of the universe of $\structa$, and it is equipped with  the following relations and constants:
\begin{enumerate}
	\item \label{it:powerset-inclusion} A binary relation for set inclusion and a unary relation  for the singleton sets.
	\item \label{it:powerset-singleton} For every $n$-ary relation $R$ in   $\structa$,  a relation of the same name and  arity that is interpreted as 
\begin{align*}
\set{(\set{a_1},\ldots,\set{a_n}) :  \structa \models R(a_1,\ldots,a_n)}.
\end{align*}
	\item \label{it:powerset-filter} For every quantifier-free formula
	 $\varphi(x)$ with one free variable\footnote{The formula $\varphi$ might use non-unary relations, e.g.~it could say $R(x,x)$ or $R(x,c)$ for some constant $c$.} over  $\structa$, a constant $[\varphi]$ interpreted as the elements that satisfy $\varphi$.

%an $n$-ary relation interpreted as 
%\begin{align*}
%\powerset \varphi \quad \eqdef \quad \set{(A_1,\ldots,A_n) : \text{all tuples in $A_1 \times \cdots \times A_n$ satisfy $\varphi$.}}
%\end{align*}
%
%	
%	\item For every $n$-ary relation $R$ in  the  original model $\structa$, the powerset model has a realation of the same name and arity which is   interepreted as 
%%\begin{eqnarray*}
%%R_\forall &=& \set{(\set{X_1},\ldots,\set{X_n}) : X_1 \times \cdots \times X_n \subseteq  R^\structa} \\
%%R_\exists &=& \set{(\set{X_1},\ldots,\set{X_n}) : X_1 \times \cdots \times X_n \cap  R^\structa \neq \emptyset}.
%%\end{eqnarray*}
%\begin{align*}
%\set{(\set{a_1},\ldots,\set{a_n}) : \text{$\set{a_1,\ldots,a_n} \subseteq$  universe of $\structa$ and  } \structa \models R(a_1,\ldots,a_n)}.
%\end{align*}
%\item For every quantifier-free formula $\varphi(x)$ in the original model, the powerset model has a unary relation interpreted as 
%\begin{align*}
%\set{A \subseteq \text{universe of $\structa$} : \text{all elements of $A$ satisfy $\varphi(x)$ }	}.
%\end{align*}
\end{enumerate}
We extend the powerset operation to account for modulo counting as follows. 
For a set $M \subseteq \Nat$, the model $\powerset_M \structa$ is obtained by extending the model $\powerset \structa$ defined above with  the following relations:
\begin{itemize}
	\item [4.] \label{it:powerset-counting} For every $m \in M$ and $k \in \set{0,\ldots,m-1}$, a unary relation selecting sets which are finite and whose size is equal to $k$ modulo $m$.
\end{itemize}
\end{definition}

Some comments about the design choices in the above definition:
\begin{itemize}
\item Some relations and constants in the powerset model can  be defined in terms of others using first-order logic, e.g.~a singleton set is one that has exactly two subsets. However, such definitions are not necessarily quantifier-free, and in the reasoning  below it will be important to use quantifier-free definitions. 
\item  The constants from item~\ref{it:powerset-filter}  will be used to show that  quantifier-free universe restrictions are compatible with counting \mso, see Claim~\ref{claim:commute-cmso-fo} below. Quantifier-free universe restrictions, in turn,  will be needed in the proof of the  Courcelle's Theorem, because the flattening operation of the monad $\hmonad$ removes some vertices, namely the source vertices of the graphs that label hyperedges. 
\item The reason for having  a parameter $M \subseteq \Nat$  in item 4, instead of  $M = \Nat$, is that we want to have finitely many equivalence classes of models for a given quantifier rank. 
\end{itemize}

% Defintion~\ref{def:cmso-powerset-model} and  Lemma~\ref{lem:mso-to-fo} below.
%\begin{definition}\label{def:cmso-powerset-model}
%For a quantifier rank $r \in \Nat$ a set $M \subseteq \Nat$, we define the following equivalence relation on models over the same vocabulary
%\begin{align*}
%\structa \cmsoequiv_{r,M} \structb \qquad \eqdef \qquad \powerset_M \structa \equiv_r \powerset_M \structb	
%\end{align*}
%The special case of empty $M$ is denoted by
%\begin{align*}
%  \structa \msoequiv_r \structb
%\end{align*}
%\end{definition}
%
%Lemma~\ref{lem:mso-to-fo} below justifies the above definition, by showing that the equivalence in the definition corresponds to satisfying the same sentences of \cmso with quantifier rank $r$ and which use moduli from $M$.  The notion of  quantifier rank for \cmso counts the nesting of quantifiers, with no difference being made between first- and second-order quantifiers, see~\cite[page 38]{Ebbinghaus:2013kj}.

The point of the powerset model is that counting \mso  over a model reduces  to  first-order logic over its powerset, as expressed in the following lemma. To make the correspondence more transparent, we extend the syntax of \mso so that it has a binary (second-order) predicate for set inclusion, a unary (second-order) predicate for testing if a set is a singleton, and for every quantifier-free formula with one free variable $\varphi(x)$, there is a constant $[\varphi$] which is interpreted as in the powerset model. Here is an example of a  quantifier-free sentence in the extended syntax, which says that a binary relation $R$ is reflexive:
\begin{align*}
[\text{true}] \subseteq [R(x,x)].
\end{align*}
Define the quantifier rank of a sentence of counting \mso (in the extended syntax) as for first-order logic, with first- and second-order quantifiers counted the same way. 

%\begin{align*}
%\powerset_M \Sigma \qquad \eqdef \qquad \Sigma + \Sigma + \set{\text{empty}, \subseteq } + \coprod_{m \in M} \coprod_{k \in \set{0,\ldots,m-1}} |X| \
%\end{align*}
%
%In the lemma, we write $\powerset_M \Sigma$
% for the vocabulary of the models obtained by applying the operation $\powerset_M$ to models over $\Sigma$. 

\newcommand{\cmsomr}{{\sc cmso}$_r^M$\xspace}

\begin{lemma}\label{lem:mso-to-fo}
If models $\structa$ and $\structb$ over the same vocabulary satisfy
\begin{align*}
\powerset_M \structa	 \equiv_r \powerset_M \structb
\end{align*}
if and only if  they satisfy the same sentences of counting \mso (in the extended syntax) that have  quantifier rank $r$ and  use modulo counting for moduli $m \in M$ (we write \cmsomr for the set of such sentences). Furthermore, if the vocabulary is finite and $M$ is finite, then there are finitely many equivalence classes of the above equivalence relation.
%
%  For every sentence $\varphi$ of \cmso over vocabulary $\Sigma$ which uses moduli from $M$ there  is a sentence $\psi$ of \fo which makes the following diagram commute
%\begin{align*}
%\xymatrix@C=3cm{
%\text{models over $\Sigma$} \ar[r]^{\powerset_M} \ar[dr]_{\varphi}  & \text{models over $\powerset_M \Sigma$ } \ar[d]^{\psi}\\
%& \set{\text{yes, no}}
%}	
%\end{align*}
\end{lemma}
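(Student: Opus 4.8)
The plan is to prove both directions by relating $r$-round Ehrenfeucht-Fra\"iss\'e games on powerset models to a back-and-forth characterisation of counting \mso. First I would set up the ``only if'' direction: given a sentence $\varphi$ of \cmsomr, I translate it into a first-order sentence $\widehat\varphi$ over the vocabulary of $\powerset_M$, by a routine induction on the structure of $\varphi$. A first-order quantifier $\exists x$ becomes a first-order quantifier relativised to the singleton unary predicate from item~\ref{it:powerset-inclusion}; a monadic second-order quantifier $\exists X$ becomes an unrelativised first-order quantifier; atomic formulas $R(x_1,\dots,x_n)$ go to the relation of the same name from item~\ref{it:powerset-singleton} (using that the $x_i$ range over singletons), set membership $x\in X$ becomes the set-inclusion relation applied to $(\{x\},X)$, and the counting predicate $|X|\equiv k\bmod m$ becomes the unary relation from item~4. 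One checks that this translation preserves quantifier rank and that $\structa\models\varphi$ iff $\powerset_M\structa\models\widehat\varphi$. Hence if $\powerset_M\structa\equiv_r\powerset_M\structb$, then by Ehrenfeucht's Theorem the two powerset models agree on all first-order sentences of quantifier rank $r$, in particular on every $\widehat\varphi$ with $\varphi\in\cmsomr$, so $\structa$ and $\structb$ agree on $\cmsomr$.

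For the converse (``if''), I would argue contrapositively via games: assuming $\structa$ and $\structb$ can be distinguished by some $\varphi\in$ the first-order theory of quantifier rank $r$ over $\powerset_M$, I must produce a distinguishing \cmsomr sentence, or equivalently show Spoiler wins the $r$-round \ef game on $(\powerset_M\structa,\powerset_M\structb)$ implies Spoiler wins a suitable $r$-round ``\cmso game'' on $(\structa,\structb)$ and thus a \cmsomr sentence distinguishes them. The cleaner route is to go through sentences directly: every first-order sentence over the vocabulary of $\powerset_M$ of quantifier rank $r$ is logically equivalent to one in which every quantifier either relativises to singletons or not; such a sentence is (up to rewriting atomic formulas over singletons as atomic formulas of $\structa$, and the item-4 relations as counting predicates) in the image of the translation $\varphi\mapsto\widehat\varphi$ above. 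The subtlety here, and the point where the constants $[\psi]$ of item~\ref{it:powerset-filter} and the singleton predicate of item~\ref{it:powerset-inclusion} earn their keep, is that one genuinely needs the vocabulary of $\powerset_M$ to be rich enough that \emph{every} quantifier-rank-$r$ first-order property of the powerset model is expressible back in $\structa$ by a \cmsomr formula — in particular atomic facts about a set (e.g.\ whether it equals $[\psi]$, whether it is a singleton) must be quantifier-free expressible, so that the inductive translation in the reverse direction does not spend quantifiers. Surjectivity of the translation up to logical equivalence is what closes the loop.

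Finally, for the ``furthermore'' clause I would note that when the original vocabulary is finite and $M$ is finite, the vocabulary of $\powerset_M\structa$ is still finite: items~\ref{it:powerset-inclusion} and~\ref{it:powerset-singleton} contribute finitely many relations, item~\ref{it:powerset-filter} contributes one constant per quantifier-free formula over the original finite vocabulary with one free variable — and there are only finitely many such formulas \emph{up to logical equivalence}, with logically equivalent $\psi$ giving literally equal constants in every powerset model — and item~4 contributes finitely many relations since $M$ is finite. Since $\equiv_r$ on models over a fixed finite vocabulary has finitely many classes (finitely many first-order sentences of quantifier rank $r$ up to equivalence, using that there are no function symbols of arity $\ge 1$), the relation $\powerset_M(\cdot)\equiv_r\powerset_M(\cdot)$ has finitely many classes on models over the original vocabulary.

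I expect the main obstacle to be the reverse direction's surjectivity argument: one must carefully verify that no first-order property of $\powerset_M\structa$ of quantifier rank $r$ ``escapes'' the fragment that corresponds to \cmsomr, which requires being precise about how quantifier-free formulas over the powerset vocabulary (which may mention the constants $[\psi]$, set inclusion, the singleton predicate, and the item-4 unary relations) pull back to counting-\mso-expressible — indeed quantifier-free, in the extended syntax — statements about $\structa$. This is exactly why Definition~\ref{def:powerset-model} was engineered with quantifier-free definitions for all its primitives.
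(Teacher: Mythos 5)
Your proposal is correct and takes essentially the same approach as the paper's (deliberately terse) proof: both rest on the observation that, in the extended syntax, counting \mso over $\structa$ translates rank-for-rank into first-order logic over $\powerset_M \structa$ by turning first-order quantifiers into quantifiers relativised to the singleton predicate, and both handle the finiteness clause by noting that item~\ref{it:powerset-filter} contributes only finitely many distinct constants because there are finitely many quantifier-free formulas up to logical equivalence. The surjectivity worry you flag for the converse direction is resolved exactly as you suspect: the extended syntax was engineered so that the atomic predicates of $\powerset_M \structa$ are literally atomic formulas of counting \mso, so every first-order sentence over the powerset vocabulary is already a \cmsomr sentence of the same quantifier rank.
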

%
%\begin{lemma}\label{lem:mso-to-fo} Let $\structa,\structb$ be models over the same finite vocabulary. Then 
%\begin{align*}
%\structa \cmsoequiv_{r,M} \structb 
%\end{align*}
%holds if and only the two models satisfy the same sentences of \cmso that have quantifier rank $r$ and use moduli $m \in M$.
%\end{lemma}
\begin{proof}
	By  unraveling the definitions. First-order quantification is replaced by quantification over singleton sets.  For the part of the lemma about finitely many equivalence classes, we observe that although the vocabulary of $\powerset_M \structa$ is technically speaking infinite because of the constants in item~\ref{it:powerset-filter}, there are finitely \emph{different} constants because there are finitely many quantifier-free formulas  up to logical equivalence.
%	The formula $\psi$ can even be chosen so that it has the same quantifier rank as $\varphi$; because  the singleton predicate allows us to simulate first-order quantification without increasing the quantifier rank. 
\end{proof}

%The lemma is also true for infinite vocabularies, but then \cmso needs to be adapted to allow infinite conjunctions and disjunctions. 

\begin{definition}[Operations compatible with counting \mso] An operation on  models is called  \emph{compatible with counting \mso} if for every $r \in \Nat$ and $M \subseteq \Nat$ it is compatible with the equivalence relation from Lemma~\ref{lem:mso-to-fo}, i.e.~satisfying the same sentencers of \cmsomr.
\end{definition}

The following lemma shows that all operations from Lemma~\ref{lem:fo-compatible}, except for products, are compatible with \mso. This type of  result was aready known to  Shelah~\myfootcite[Section 2]{Shelah:1975jo}. The  counterexample for products is: a formula of \mso in a product of two finite orders can express that they have the same size.

\begin{lemma}\label{lem:cmso-compatible}
	Disjoint unions, quantifier-free interpretations, and quantifier-free universe restrictions are compatible with counting \mso.
\end{lemma}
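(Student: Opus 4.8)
The plan is to reduce the lemma to Lemma~\ref{lem:fo-compatible} through the powerset construction. By Lemma~\ref{lem:mso-to-fo}, for base models $\structa,\structb$ over the same vocabulary and any $M \subseteq \Nat$, saying that $\structa$ and $\structb$ satisfy the same sentences of counting \mso of quantifier rank $r$ with moduli in $M$ is the same as saying $\powerset_M \structa \equiv_r \powerset_M \structb$. Hence, to show that an operation $f$ on models is compatible with counting \mso, it suffices to produce, for each $M$, an operation $g$ on powerset models that is compatible with \fo and makes the square
\begin{align*}
  \powerset_M\bigl(f(\structa_1,\dots,\structa_n)\bigr) \;=\; g\bigl(\powerset_M \structa_1,\dots,\powerset_M \structa_n\bigr)
\end{align*}
commute: then $\structa_i \equiv \structb_i$ in counting \mso gives $\powerset_M \structa_i \equiv_r \powerset_M \structb_i$ for all $r$, hence $g(\dots)\equiv_r g(\dots)$ by \fo-compatibility of $g$ (Lemma~\ref{lem:fo-compatible}), hence $\powerset_M f(\vec\structa)\equiv_r \powerset_M f(\vec\structb)$, i.e.\ $f(\vec\structa)\equiv f(\vec\structb)$ in counting \mso. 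The key point is that $g$ may be built using \emph{products} of powerset models even though products of base models are \emph{not} compatible with counting \mso; this asymmetry is exactly what makes the argument succeed for disjoint unions without contradicting the product counterexample.

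So the work is a commutation claim: for each of the three operations, $\powerset_M$ of the result is obtained from the $\powerset_M$'s of the arguments by a composition of the \fo-compatible operations of Lemma~\ref{lem:fo-compatible}. I would verify it case by case. For a \textbf{quantifier-free interpretation}, the base universe is unchanged, so the powerset model is carried on the same universe; inclusion, the singleton predicate and the modulo-counting relations of item~\ref{it:powerset-counting} are literally the same, while each relation of item~\ref{it:powerset-singleton} and each filter constant of item~\ref{it:powerset-filter} of the result is obtained from those of the argument by substituting the quantifier-free formulas defining the interpretation: one translates an atom $R(\bar x)$ by the item~\ref{it:powerset-singleton} relation $R$ applied to the relevant singletons, an atom $x=c$ by equality with the filter constant $[x=c]$, and closes under Booleans. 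Thus $\powerset_M f$ is itself a quantifier-free interpretation of $\powerset_M$. For a \textbf{quantifier-free universe restriction} by $\varphi(x)$, the subsets of $\{a : \structa\models\varphi(a)\}$ are exactly the subsets $S$ with $S \subseteq [\varphi]$, and $[\varphi]$ is available as a constant of $\powerset_M \structa$; so $\powerset_M f(\structa)$ is obtained from $\powerset_M \structa$ by first a quantifier-free interpretation re-mapping each filter constant $[\psi]$ to $[\psi\wedge\varphi]$ (so that afterwards every constant lies below $[\varphi]$, which is what makes the next step defined) and then the quantifier-free universe restriction to $\{S : S\subseteq[\varphi]\}$; the remaining relations, including modulo counting, just restrict to the smaller universe. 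For a \textbf{disjoint union} $\coprod_{i\in I}\structa_i$, a subset of the disjoint union is the same thing as a tuple of subsets $S_i \subseteq \structa_i$, so the universe of $\powerset_M(\coprod_i \structa_i)$ is (in bijection with) the universe of the product $\prod_{i\in I}\powerset_M \structa_i$; under this identification inclusion is componentwise, a singleton is a tuple with exactly one singleton component and all others equal to the empty set $[\text{false}]$, the item~\ref{it:powerset-singleton} relation for an $R$ of $\structa_i$ holds of singletons iff all arguments sit in the $i$-th component, where $R$ holds, and a filter constant $[\varphi]$ splits into componentwise filter constants over the $\structa_i$ obtained by evaluating the cross-component atoms of $\varphi$ to truth values. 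So $\powerset_M(\coprod_i\structa_i)$ is a quantifier-free interpretation of $\prod_{i\in I}\powerset_M \structa_i$.

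The step I expect to need the most care is exactly this bookkeeping in the disjoint-union case, in two respects. First, the modulo-counting relations: for $S=(S_i)_{i\in I}$, having $|S|$ finite and $\equiv k \pmod m$ means all but finitely many $S_i$ are empty and the finite sum of sizes is $\equiv k$; when $I$ is finite this is a finite disjunction over residue vectors and hence quantifier-free over the product, but for infinite $I$ one cannot pin down ``finite support'' quantifier-free — so I would state the disjoint-union clause for finite families, which is all that is needed here (the flattening of $\hmonad$ is a disjoint union over the finitely many hyperedges of a hypergraph). Second, when the component models share constant names, a filter constant $[\varphi]$ of the disjoint union can mix constants from several components, and checking that its componentwise pieces are still quantifier-free definable on the powerset side (via filter constants and quantifier-free sentences over the individual $\powerset_M\structa_i$) is the fiddly part. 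Everything else is routine; note that the filter constants of Definition~\ref{def:powerset-model} were introduced precisely so that the universe-restriction case and the ``empty component'' tests above stay quantifier-free, so no further ideas are required.
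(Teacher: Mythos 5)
Your proposal is correct and follows essentially the same route as the paper: reduce to Lemma~\ref{lem:fo-compatible} via Lemma~\ref{lem:mso-to-fo} by showing that each operation commutes with $\powerset_M$ up to a quantifier-free interpretation/restriction on the powerset side (with disjoint union turning into a product of powersets), which is exactly the paper's Claim~\ref{claim:commute-cmso-fo}. Your extra bookkeeping (remapping filter constants $[\psi]\mapsto[\psi\wedge\varphi]$ before restricting, and confining the disjoint-union clause to finite families because of the modulo-counting relations) fills in details the paper glosses over, but does not change the argument.
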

\begin{proof}
The key  is  that the  operations in the lemma commute with powersets in a suitable sense, as stated in the following claim. The claim is proved by simply substituting quantifier-free formulas inside other quantifier-free formulas, plus the observation that the powerset operation turns union into product.
%In the statement of the claim,  $\powerset_M \Sigma$ is  the vocabulary of the models obtained by applying the operation $\powerset_M$ to models over $\Sigma$. 

\begin{claim}\label{claim:commute-cmso-fo}
Let $M \subseteq \Nat$. Then
\begin{enumerate}
	\item For every quantifier-free universe restriction $f$ 
there is  a quantifier-free universe restriction $g$ such that every model $\structa$ over the input vocabulary of $f$ satisfies
	\begin{align*}
\powerset_M(f(\structa)) \quad \text{is isomorphic to} \quad  g(\powerset_M(\structa)).
\end{align*}
\item Same as item 1, but with both $f$ and $g$ being quanitifer-free interpretations;
\item  For every family of vocabularies $\set{\Sigma_i}_{i \in I}$ there is a quantifier-free interpretation $g$ such that every family $\set{\structa_i \in \text{models over $\Sigma_i$}}_{i \in I}$ satisfes
	\begin{align*}
	\powerset_M (\coprod_{i \in I} \structa_i) \quad \text{is isomorphic to} \quad  g\big( \prod_{i \in I} \powerset_M \structa_i \big).
\end{align*}
\end{enumerate}
\end{claim}
\begin{proof} 
We prove each item separately.
\begin{enumerate}
	\item If the universe restriction $f$ is given by a quantifier-free formula $\varphi$, then the universe restriction $g$  is given by the formula $x \subseteq  [\varphi]$, where the constant $[\varphi]$  comes from item 3 of Definition~\ref{def:powerset-model}.
\item Substituting quantifier-free formulas.
%Consider a  quantifier-free interpretation
%\begin{align*}
%  f : \text{models over $\Sigma$} \to \text{models over $\Gamma$}.
%\end{align*}
%Consider a relation name $R$ that appears in powersets of models over~$\Gamma$.  To prove this item of the claim, we need to show that there is a quantifier-free formula $\varphi_R$  such that
%	\begin{align*}
%\powerset_M (f(\structa)) \models R(A_1,\ldots,A_n) \qquad \text{iff} \qquad \powerset_M(\structa) \models \varphi_R(A_1,\ldots,A_n).
%\end{align*}
%holds for every model $\structa$ over $\Sigma$ and every subsets  $A_1,\ldots,A_n$ of its universe.  The relations described in items 1 and 4 of Definition~\ref{def:powerset-model} -- set inclusion,  singleton,  empty set, and  modulo counting predicates -- are immediate. Consider the relation $\sing R$ and the constant $[\varphi]$ 
%that are defined in items 2 and  3 of Definition~\ref{def:powerset-model}.  In both cases,   replace each relation name from $\Gamma$ by its definition  given in $f$.  (In the case of the formula $\sing R$, also add a conjunct which says that the tuple consists entirely of singletons.)
\item 
Consider two models
\begin{align*} \label{eq:coproduct-and-product}
	\powerset_M (\coprod_{i \in I} \structa_i) \qquad\text{and} \qquad    \prod_{i \in I} \powerset_M \structa_i.
\end{align*}
Define the $i$-th component of a set in the disjoint union  to be its intersection with the elements from the $i$-th model. 
The universes of two models are isomorphic, the isomorphism sends a set in the disjoint union to the tuple of its components. We only discuss the constants from item 3 in the definition of the powerset structure, the rest is straightforward. Consider  a constant  in the disjoint union, as in item 3, which is given by  a quantifier-free formula $\varphi$ with one free variable over the model
\begin{align*}
	\coprod_{i \in I} \structa_i.
\end{align*}
For $i \in I$, consider the elements in $\structa_i$ that satisfy $\varphi$. This set can be defined by a quantifier-free formula, call it $\varphi_i$, which uses only the vocabulary of $\structa_i$.  Then the constant $[\varphi]$ in the powerset of the disjoint union  is the same as the tuple of constants $\set{[\varphi_i]}_{i \in I}$ in the product of powersets. 
\end{enumerate}
\end{proof}

Using the above claim, we finish the proof of the lemma. We only treat the case of quantifier-free interpretations, the others are done the same way.  Let then $f$ be a quantifier-free intepretation. Let $\structa$ and $\structb$ be models over the input vocabulary of $f$. To prove that $f$ is compatible with counting \mso, we need to show that for every set $M \subseteq \Nat$ and every quantifier rank $r$, 
\begin{align*}
\powerset_M \structa \equiv_r	\powerset_M \structb \qquad \text{implies} \qquad \powerset_M f(\structa) \equiv_r	\powerset_M f(\structb).
\end{align*}
Apply the claim to $f$ and $M$, yielding some $g$, which is \fo-compatible by Lemma~\ref{lem:fo-compatible}. The conclusion of the  above implication is then proved as follows:
\begin{eqnarray*}
		\powerset_M f(\structa) &\equiv_r& \text{\small{(by the  claim, and because isomorphism refines $\equiv_r$)}}\\
g(\powerset_M \structa)  &\equiv_r& \text{\small{(applying compatibility with \fo  of $g$ to the assumption of the implication)}}\\
g(\powerset_M \structb)		 &\equiv_r& \text{\small{(by  the claim, and because isomorphism refines $\equiv_r$)}}\\
\powerset_M f(\structb).
\end{eqnarray*}
\end{proof}

\paragraph*{Logical decomposition of the monad $\hmonad$.} So far, the discussion  had nothing to do with monads in general, or the  monad $\hmonad$ in particular. The only  part of our reasoning which is specific to the monad $\hmonad$ and the particular encoding from Definition~\ref{def:relational-encoding} is the following lemma.
\begin{lemma}[Compositional encoding]\label{lem:good-encoding} For a ranked set $\Sigma$, define $\structh \Sigma$ to be the image of the set $\hmonad \Sigma$ under the function $\structh$; this image is a viewed as a ranked set. For every ranked set 
$X$ and every  $G \in \hmonad X$ there  is an operation $f$ compatible with counting \mso which makes the following diagram commute:
%\footnote{In the left side of the diagram, the tuples indexed by $X$ are rank preserving, but this is not the case on the right side of the diagram, since we do not associate any rank information to models.}:
\begin{align*}
\xymatrix@C=4cm{
(\hmonad \Sigma)^X \ar[r]^{(\structh)^X} \ar[d]_{\sem G} &  (\structh \Sigma)^X \ar[d]^{f}\\
\hmonad \Sigma \ar[r]_{\structh\ } &  \structh \Sigma
}
\end{align*}

\end{lemma}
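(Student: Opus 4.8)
The plan is to unwind the definition of the flattening operation of $\hmonad$ and recognize it as a composition of the model-level operations listed in Lemma~\ref{lem:fo-compatible}, all of which are compatible with counting \mso by Lemma~\ref{lem:cmso-compatible} (note: we avoid products, which are \emph{not} compatible). Fix $G\in\hmonad X$. Given a valuation $\eta\in(\hmonad\Sigma)^X$, the hypergraph $\sem G(\eta)$ is the flattening of the element of $\hmonad\hmonad\Sigma$ obtained by substituting $\eta(x)$ for each occurrence of $x$ in $G$. Its vertices are the vertices $v$ of $G$ together with pairs $(e,w)$ where $e$ is a hyperedge of $G$ and $w$ a non-source vertex of the label $\eta(x_e)$ (with $x_e$ the variable labelling $e$), and its hyperedges are pairs $(e,f)$ with $f$ a hyperedge of $\eta(x_e)$, the incidence being recomputed via the \emph{parent} function. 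Since $G$ is a \emph{fixed finite} object, it has finitely many hyperedges $e_1,\ldots,e_m$, say labelled by variables $x_{i}$; the flattening is therefore obtained by gluing together $m$ disjoint copies of the models $\structh(\eta(x_i))$, plus a fixed finite ``skeleton'' model coding the vertices of $G$ and how the sources of the various $\eta(x_i)$ are to be identified with vertices of $G$ or with each other.

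Concretely, I would build $f$ as the following pipeline, each stage being one of the admissible operations. First, take the disjoint union of the coordinate models $\structh(\eta(x_i))$ over the (finite) hyperedge set of $G$, together with one further fixed finite model whose universe is the vertex set of $G$; this is a disjoint union indexed by a finite set, hence compatible with counting \mso. Next, apply a quantifier-free interpretation that: (i) on each copy $i$, renames the incidence relations so that a source vertex of $\eta(x_i)$ is marked (using the constants naming the sources, which are available in the vocabulary of Definition~\ref{def:relational-encoding}); (ii) introduces, via the finite skeleton, the ``parent'' identifications — but identification of elements cannot be done by a quantifier-free interpretation directly, so instead I realize the quotient by \emph{not} duplicating source vertices in the first place: I use a quantifier-free universe restriction to delete, from each copy $i$, the source vertices of $\eta(x_i)$, and then a quantifier-free interpretation to redirect the $j$-th incidence relation of each hyperedge of copy $i$ to the appropriate surviving vertex of copy $i$ or, when it pointed at the $k$-th source of $\eta(x_i)$, to the vertex $e_i[k]$ of $G$ sitting in the skeleton model — this latter rerouting is definable by a quantifier-free formula because $G$, and hence the finite table $(i,k)\mapsto e_i[k]$, is a fixed constant of the construction encodable through the finitely many constants of the skeleton and the relations marking ``the $k$-th source of copy $i$''. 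Finally a quantifier-free interpretation installs the correct unary label relations, the source constants of the result (inherited from $G$), and discards auxiliary relations, yielding exactly $\structh(\sem G(\eta))$. Commutativity of the square is then a definitional check: both paths produce, up to isomorphism, the model whose universe and relations are as just described.

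The main obstacle is the mismatch between \emph{fusing} vertices (sources of labels get glued to incidence targets of $G$, and sources of different labels sharing a vertex of $G$ get glued to each other) and the fact that none of the four admissible operations can merge elements. The resolution I would push on is the one sketched above: never create the duplicate source vertices — keep only the vertex set of $G$ as the ``shared'' part, delete source vertices from each label copy by a quantifier-free universe restriction, and let a quantifier-free interpretation reroute incidences. One must check this universe restriction is total (it is: we only delete vertices, and the result still has all sources of $G$ as required constants) and that the rerouting is genuinely quantifier-free (it is, because for fixed $G$ the assignment of each hyperedge of $G$ to its variable and its incidence list is a finite amount of data that can be hard-wired into the formulas and the skeleton's constants). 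A secondary, purely bookkeeping, point is that $X$ may be infinite whereas only the finitely many variables actually occurring in $G$ matter; one restricts $\eta$ to those, which does not affect compatibility since the disjoint union is over the finite hyperedge set of $G$, not over $X$.
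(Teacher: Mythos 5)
Your construction is the same as the paper's: a disjoint union of the models of the $\eta(x)$'s (indexed by the finitely many hyperedges of $G$) together with a fixed ``skeleton'' model whose universe is the vertices of $G$, followed by a quantifier-free interpretation that reroutes incidence using a finite, hard-wired case analysis over the hyperedges of $G$, and a quantifier-free universe restriction that deletes the duplicated source vertices of each copy. In particular your resolution of the fusion problem --- never merge, instead delete the copies' sources and redirect incidence to the skeleton --- is exactly the paper's.

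There is, however, one step that fails in the order you give it: you apply the universe restriction \emph{before} the rerouting interpretation. This breaks down for two reasons. First, the restriction operation is partial and is undefined whenever a constant's denotation is deleted; at that stage of your pipeline the vocabulary still contains, for each copy $i$, the constants naming the sources of $\eta(x_i)$ (item 4 of Definition~\ref{def:relational-encoding}), and those are precisely the elements you delete --- so your claim that the restriction is total is wrong there. Second, once the $k$-th source of copy $i$ has been removed, the fact that a hyperedge $f$ of copy $i$ had that source as its $j$-th incident vertex is no longer recorded anywhere in the model (the incidence pair is gone and the constant with it), so no quantifier-free formula can perform the rerouting afterwards. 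The fix is just to swap the two stages, as the paper does: first run the quantifier-free interpretation that defines the new incidence relation (testing ``$f[j]$ equals the $k$-th source of copy $i$'' while the witnessing elements and constants are still present) and whose output vocabulary drops the copies' source constants, and only then restrict the universe to remove the now-unreferenced source vertices.
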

\begin{proof}
The proof is essentially the observation that the definition of flattening, when working on models representing sourced hypergraphs, can be formalised using disjoint unions, quantifier-free interpretations and quantifier-free universe restrictions; all of which are compatible with counting \mso thanks to Lemma~\ref{lem:cmso-compatible}. 

Let us describe the above observation in more detail.
Let $\eta \in (\hmonad \Sigma)^X$ be a valuation of the variables, and define  $\structa_x$ to be the model that  represents $\eta(x)$. Our goal is to transform  the models $\set{\structa_x}_{x \in X}$ into the model  that representing $\sem G (\eta)$,  using operations compatible with counting \mso. Also, the transformation is only allowed to depend on $G$ and not on the valuation $\eta$. We do this in several steps, all of which are compatible with counting \mso thanks to Lemma~\ref{lem:cmso-compatible}:
\begin{enumerate}
	\item Define $\structb$ to be  model where the universe is the vertices of $G$, and which has a constant for every  vertex. By abuse of notation,   for a hyperedge $e$ in $G$, let us write $\structa_e$ for the model $\structa_x$, where $x \in X$ is the variable that labels $e$. Take the disjoint union of $\structb$ and 
\begin{align*}
 \coprod_{\substack{{e \in \text{hyperedges of $G$}}}} \quad 	 \structa_{e}
\end{align*}
Note that a model $\structa_x$ might be copied several times, since the variable $x$ might be the label of more than one hyperedge of $G$.

\item Using a quantifier-free interpretation applied to the model from the previous item, we recover the relations and constants in the model $\structa$ representing $\sem G(\eta)$. 
The labels of hyperedges are inherited from the summands in the disjoint union. The sources of $\structa$ are taken from $\structb$, and can therefore be defined using the  constants from $\structb$.  Incidence is defined using the following quantifier-free formula in model from previous item as follows. For $v$ and $f$ in the universe of the disjoint union model produced in the previous item, $v$ is the $i$-th vertex in the incidence of a hyperedge $f$   if and only if one of the following conditions holds:
\begin{enumerate}
\item There is some hyperedge  $e$  of $G$ (existence of $e$ is tested using a finite disjunction, and not using existential quantification) such that both $v$ and $f$ are from the model $\structa_e$ and the following quantifier-free formulas are satisfied:
\begin{align*}
\underbrace{\bigwedge_{j \in \set{1,\ldots,\text{rank of $e$}}}\text{$v$ is not the $j$-th source of $\structa_e$}}_{\text{a quantifer-free formula in $\structa_{e}$}} \quad \text{and} \quad   \underbrace{\text{$f[i]=v$  in  $\structa_e$}}_{\text{a quantifer-free formula in $\structa_{e}$}} 
\end{align*}
\item There is some hyperedge  $e$  of $G$ (again, this is a disjunction over hyperedges of $G$) and some $j \in \set{1,\ldots,\text{arity of $e$}}$   such that $v$ is from  $\structb$, and $f$ is from the model $\structa_e$, and the following quantifier-free formulas are satisfied:
\begin{align*}
\underbrace{f[j]=v}_{\text{a quantifer-free formula in $\structb$}} \qquad \text{and} \qquad \underbrace{\text{$e[i]$ is the $j$-th source of $\structa_{e}$}}_{\text{a quantifer-free formula in $\structa_{e}$}}	
\end{align*} 
\end{enumerate}
\item Using a quantifier-free restriction, for each hyperedge $e$ of $G$, remove the vertices of $\structa_e$   that are sources.
\item  Finally, using a quantifier-free interpretation, remove the relations from the vocabulary that are not used in $\structa$.
\end{enumerate}
\end{proof}

\paragraph*{Proof of Courcelle's Theorem.} \label{page:courcelle-theorem} We are now ready to complete the proof of Courcelle's Theorem. Suppose that a language $L \subseteq \hmonadzero{ \Sigma}$ is defined by a sentence of counting \msoh which has quantifier rank $r$ and uses modulo counting  only for moduli $m$ taken from a finite set $M \subseteq \Nat$. Define $\approx$ to be the equivalence relation on  $\hmonad \Sigma$ which identifies sourced hypergraphs  if they have the same rank and (their associated models) satisfy the same sentences of \cmsomr. By definition, the language $L$ is recognised by $\approx$, i.e.~it is a union of equivalence classes.

\begin{lemma}\label{lem:congruence}
The relation $\approx$ is a congruence.
\end{lemma}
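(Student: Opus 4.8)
The plan is to show that $\approx$ is compatible with all polynomial operations over the algebra $\hmonad \Sigma$, which by Definition~\ref{def:congruence} is exactly what it means to be a congruence. Recall that a polynomial operation with variables $X$ is of the form $\sem G$ for some $G \in \hmonad(\hmonad \Sigma + X)$, so fix such a $G$ and fix two valuations $\eta, \eta' \in (\hmonad \Sigma)^X$ with $\eta(x) \approx \eta'(x)$ for every $x \in X$; we must show $\sem G(\eta) \approx \sem G(\eta')$. Since $\approx$ by definition identifies sourced hypergraphs of the same rank that satisfy the same \cmsomr{} sentences, and $\sem G(\eta)$ and $\sem G(\eta')$ automatically have the same rank (namely the rank of $G$), it suffices to show they satisfy the same \cmsomr{} sentences.

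The key tool is Lemma~\ref{lem:good-encoding} (the Compositional Encoding lemma): for the fixed $G$ there is an operation $f$ compatible with counting \mso such that $\structh(\sem G(\eta)) = f((\structh \eta(x))_{x \in X})$, and likewise for $\eta'$. First I would note that the hypothesis $\eta(x) \approx \eta'(x)$ means precisely that $\structh(\eta(x))$ and $\structh(\eta'(x))$ satisfy the same \cmsomr{} sentences, which by Lemma~\ref{lem:mso-to-fo} is equivalent to $\powerset_M \structh(\eta(x)) \equiv_r \powerset_M \structh(\eta'(x))$. Since $f$ is compatible with counting \mso, applying $f$ componentwise preserves the relation ``satisfy the same \cmsomr{} sentences'' — more precisely, compatibility with counting \mso says exactly that $f$ preserves the equivalence relation of Lemma~\ref{lem:mso-to-fo} for this $r$ and this $M$. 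Hence $f((\structh \eta(x))_{x})$ and $f((\structh \eta'(x))_{x})$ satisfy the same \cmsomr{} sentences, i.e.\ $\structh(\sem G(\eta))$ and $\structh(\sem G(\eta'))$ do, which is what we wanted.

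One technical point to be careful about: the definition of ``compatible with counting \mso'' is stated for operations with a \emph{finite} family of arguments, whereas the variable set $X$ and the hyperedge set of $G$ could in principle be handled with some care — but since $G$ is a finite sourced hypergraph it has finitely many hyperedges, and the operation $f$ produced by Lemma~\ref{lem:good-encoding} only ever ``sees'' the finitely many variables actually labelling hyperedges of $G$ (variables not occurring in $G$ do not affect $\sem G$ at all), so we may restrict attention to a finite sub-family and the finiteness hypotheses in Lemmas~\ref{lem:mso-to-fo} and~\ref{lem:cmso-compatible} apply. I expect this bookkeeping to be the only real obstacle; the rest is a direct chaining of Lemma~\ref{lem:good-encoding}, Lemma~\ref{lem:mso-to-fo}, and the definition of compatibility with counting \mso, and is essentially a diagram chase through the square in Lemma~\ref{lem:good-encoding}.
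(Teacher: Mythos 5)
Your proposal is essentially the paper's argument: chain the Compositional Encoding Lemma with Lemma~\ref{lem:mso-to-fo} and the definition of compatibility with counting \mso, and your extra care about restricting to the finitely many variables actually occurring in $G$ is sound (and slightly more explicit than the paper). There is, however, one step you pass over silently, and it happens to be the only point the paper's own proof is devoted to: the Compositional Encoding Lemma is stated for $G \in \hmonad X$, i.e.~for sourced hypergraphs whose hyperedge labels are all variables, whereas the $G$ you fix lives in $\hmonad(\hmonad \Sigma + X)$, so some hyperedges carry concrete elements of $\hmonad \Sigma$ rather than variables. You apply the lemma to this more general $G$ without justification. The fix is short but should be said: regard each element of $\hmonad \Sigma$ appearing as a label in $G$ as an additional variable, and extend both valuations $\eta$ and $\eta'$ to these new variables by the identity; the hypotheses of your argument still hold on the enlarged variable set because $\approx$ is reflexive, and now $G$ genuinely lies in $\hmonad X'$ for the enlarged $X'$, so the Compositional Encoding Lemma applies. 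With that sentence added, your proof matches the paper's.
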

\begin{proof}
The statement of this lemma is almost what the Compositional Encoding Lemma says. The only difference is that the Compositional Encoding Lemma talks about polynomial operations which are given by sourced hypergraphs of the form $ G \in \hmonad X$, while in general a polynomial operation is given by a sourced hypergraph of the form
\begin{align*}
G \in \hmonad(\hmonad \Sigma + X).
\end{align*}
We can view $\hmonad \Sigma$ as extra variables, and extend the valuations $\eta_1,\eta_2$ in the statement of the lemma to these new variables via the identity function. The result then follows from the Compositional Encoding Lemma  and the definition of an operation compatible with counting \mso.
\end{proof}

By the Homomorphism Theorem, the ranked set of equivalence classes
$\hmonad \Sigma /_{\approx}$ 
can be equipped with a product operation which turns it into a $\hmonad$-algebra and which turns the quotient function
\begin{align*}
h : \hmonad \Sigma \to \hmonad \Sigma /_{\approx}	 \qquad G \mapsto \text{$\approx$-equivalence class of $G$}
\end{align*}
into a homormophism.   By Lemma~\ref{lem:mso-to-fo}, $\approx$ has finitely many equivalence classes on every rank.  This completes the proof of Courcelle's Theorem\footnote{Courcelle's Theorem is also true for infinite hypergraphs, in the following sense. Consider the variant of the monad $\hmonad$ where the restriction on finiteness is lifted. Then the same statement and the same proof will work for Courcelle's Theorem. The only difference is that infinite products and infinite disjoint unions are used, which requires the following change to the notion of $\equiv_r$. When doing an infinite product or disjoint union, the resulting model has an  infinite vocabulary. For infinite vocabularies,  Ehrenfeucht's Theorem is true, but under the assumption that the syntax of  first-order logic is extended to use infinite conjunctions and disjunctions but finite quantification; this variant of first-order logic is usually denoted by  $L_{\infty,\omega}$, see~\cite[Section 2.2.]{Ebbinghaus:2013kj}. }.

\paragraph*{Courcelle's Conjecture.} 
 As shown in Example~\ref{ex:cliques} at the beginning of this section, the converse of Courcelle's Theorem fails in general, e.g.~there are  languages that are recognisable but  not definable in counting \msoh. Courcelle has conjectured that the implication becomes an equivalence for classes of bounded tree-width. This conjecture, known as Courcelle's Conjecture,  turns out  to be true, in the setting of undirected graphs\myfootcite{Bojanczyk:2016wd}. To compare the versions of Courcelle's conjecture for hypergraphs and undirected graphs, consider  the following transformations between the two settings, which are described in Figure~\ref{fig:graph-to-hypergraph}:
\begin{align*}\xymatrix@C=3cm{\text{hypergraphs over $\Sigma$} \ar@/^2pc/[r]^{g}  &  \ar@/^2pc/[l]^{h} \text{undirected graphs}
}
\end{align*}
(The transformations $g$ and $h$ are not mutual inverses.) One can show without much effort that both $g$ and $h$ preserve and reflect all three notions used in Courcelle's Conjecture: recognisability, bounded treewidth, and definability in counting \msoh. Therefore, since Courcelle's Conjecture is true for undirected graphs, we get it also for hypergraphs:

\newcommand{\edgevoc}{\set{\text{edge}}}
\begin{theorem}\label{thm:courcelle-conjecture}
Let $\Sigma$ be a finite ranked set, and let     $L \subseteq \hmonadzero \Sigma$ be a set of hypergraphs that has bounded  treewidth. Then $L$  is  recognisable if and only if it is definable in counting \msoh.
\end{theorem}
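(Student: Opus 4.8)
The plan is to split the equivalence into its two implications, only one of which is new. The implication ``definable in counting \msoh{} $\Rightarrow$ recognisable'' is exactly Courcelle's Theorem (Theorem~\ref{thm:courcelle}), which holds for an arbitrary language $L \subseteq \hmonadzero \Sigma$, with or without the bounded-treewidth hypothesis, so nothing remains to be done there. All the work is in the converse: every recognisable $L \subseteq \hmonadzero \Sigma$ of bounded treewidth is definable in counting \msoh. Rather than prove this from scratch (the undirected-graph case alone, \cite{Bojanczyk:2016wd}, is already substantial), I would transport the statement along the two transformations $g$ (hypergraphs over $\Sigma$ $\to$ undirected graphs) and $h$ (undirected graphs $\to$ hypergraphs over $\Sigma$) from Figure~\ref{fig:graph-to-hypergraph}, and then invoke the known result for undirected graphs.

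First I would fix $g$ and $h$ concretely. The natural choice for $g$ is a decorated incidence-graph construction: send a hypergraph $G$ over the finite ranked set $\Sigma$ to the undirected graph on the vertices and hyperedges of $G$, with an undirected edge between each hyperedge and each of its incident vertices, and with small rigid gadgets (say pendant paths of pairwise distinct lengths, bounded in terms of $\Sigma$) attached to record both the $\Sigma$-label of a hyperedge and the position $i$ at which a vertex occurs in its incidence list, and to mark the source vertices of a sourced hypergraph. A transformation $h$ in the opposite direction is defined similarly; $g$ and $h$ are not mutually inverse, but $g$ is injective up to isomorphism, which is what we will use. The technical core is then the package of lemmas announced in the excerpt: both $g$ and $h$ \emph{preserve and reflect} each of (i) recognisability, (ii) bounded treewidth, and (iii) definability in counting \msoh{} (resp.\ counting \mso{} on the undirected side).

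Granting these lemmas, the deduction is short. Let $L \subseteq \hmonadzero \Sigma$ be recognisable and of bounded treewidth. Then the image language $\{\, g(G) : G \in L \,\}$ is a language of undirected graphs which is recognisable (since $g$ preserves recognisability) and of bounded treewidth (since $g$ preserves bounded treewidth); by Courcelle's Conjecture for undirected graphs it is definable in counting \mso, say by a sentence $\psi$. Pulling this back along $g$ now gives a counting-\msoh{} sentence for $L$: because $g$ is injective up to isomorphism, ``$G \in L$'' is equivalent to ``$g(G)$ lies in the image language'', i.e.\ to ``$g(G) \models \psi$''; and since $g$, viewed as a transduction, is implemented from the model $\structh G$ by a quantifier-free interpretation followed by a bounded amount of decoration, counting \mso{} is closed under the corresponding backwards translation --- this is the ``$g$ reflects definability'' part. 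Hence $L$ is definable in counting \msoh, which together with Courcelle's Theorem finishes the proof. (The companion transformation $h$ and the remaining preservation/reflection statements are obtained the same way; they let one phrase the reductions via a genuine one-sided inverse, but are not strictly needed for the argument just given.)

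I expect the genuine obstacle to be proving that $g$ and $h$ preserve and reflect \emph{recognisability}. The treewidth transfer is routine: passing to incidence graphs and attaching bounded gadgets changes treewidth by only a bounded factor, in both directions. The counting-\mso{} transfer is also routine, since both transformations are realised by \mso{} transductions (essentially quantifier-free interpretations plus bounded decoration) and counting \mso{} is preserved under backwards translation along such transductions. Recognisability, however, is algebraic: one must build, from an $\hmonad$-algebra that is finite on every arity and recognises $L$, a finite-on-every-arity algebra for undirected graphs recognising $\{\, g(G) : G \in L \,\}$, and conversely. Concretely this amounts to checking that the (de)coded incidence-graph construction is compatible with the algebraic operations on both sides --- parallel composition, source forgetting, adding isolated sources, the units, and hence, by Theorem~\ref{thm:same-as-hr}, with all polynomial operations / \hr-operations --- and the bookkeeping of sources and of the gadgets that encode incidence positions is exactly where care is required.
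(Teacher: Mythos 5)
Your proposal follows essentially the same route as the paper: one direction is Courcelle's Theorem, and the converse is transported along the coding $g$ of hypergraphs as undirected graphs (Figure~\ref{fig:graph-to-hypergraph}), using that $g$ preserves and reflects recognisability, bounded treewidth, and definability in counting \msoh, and then invoking the undirected-graph case of Courcelle's Conjecture. The paper likewise leaves the preservation/reflection lemmas as routine verifications, so your additional remarks on where the work lies (the recognisability transfer) only elaborate on what the paper asserts without proof.
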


\begin{figure}
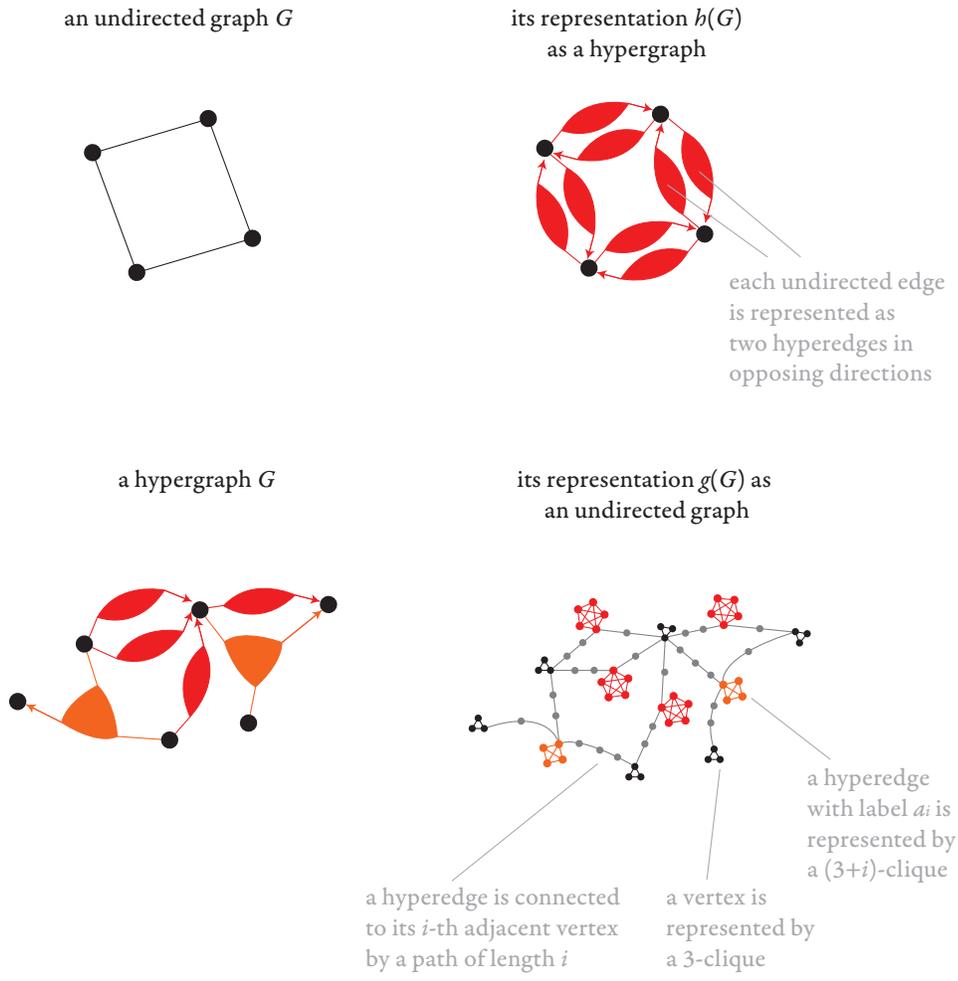

 \mypic{42}
  \mypic{43}
  \caption{\label{fig:graph-to-hypergraph}The transformation $h$, above, represents an undirected graph $G$ as a hypergraph $h(G)$ over a ranked alphabet that has one symbol ``edge'' of rank 2. An alternative definition of $h$ is this:  view an undirected graph as a directed graph with edges directed both ways, and then apply the representation from Example~\ref{ex:digraphs-h}.
  The transformation $g$, below, represents a hypergraph  $G \in \hmonadzero \Sigma$ as an undirected graph $g(G)$. The transformation assumes an  enumeration of the set of labels $\Sigma = \set{a_1,\ldots,a_n}$; the enumeration need not respect the rank information in any way.}
\end{figure}

\subsection{Aperiodicity}
\label{sec:aperiodicity}
In the previous section we have shown that counting \msoh can only define recognisable languages, and furthermore it can define all recognisable languages of bounded treewidth. Counting makes a difference:  \msoh without counting cannot distinguish large independent sets of even size  from ones with odd size. In this section, we discuss this difference in more detail. The main contribution  is Theorem~\ref{thm:aperio}, which says that for  languages of  bounded treewidth, definability in  \msoh without counting is the same as definability by an $\hmonad$-algebra that is finite on every arity and satisfies an algebraic condition called   aperiodicity. Furthermore, since aperiodicity can be checked using an algorithm -- which is not obvious but will be  shown  in Section~\ref{sec:compute-syntactic} -- it follows that there is 
 an algorithm which determines if  a sentence of counting \msoh can be rewritten so that it does not use counting, at least as long as it defines a language of bounded treewidth.   The characterisation using aperiodicity and the accompanying algorithm are, to the author's best knowledge,  new results.

We begin by stating the decidability result, since its statement does not require defining aperiodicity. 
\begin{theorem}\label{thm:decide-mso-not-count}
	The following problem is decidable.
	\begin{itemize}
		\item {\bf Input.} A number $k \in \Nat$ and a sentence $\varphi$  of counting \msoh.
		\item {\bf Question.} Is the following language definable   in \msoh without counting?
		\begin{align*}
\set{ G  : \text{$G$ is an undirected graph of  treewidth $\le k$ that satisfies $\varphi$}}
\end{align*}
	\end{itemize}
	\end{theorem}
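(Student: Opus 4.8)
The plan is to show that, under the standing assumptions, the problem reduces to deciding aperiodicity of a finitely presentable algebra, and then to invoke the algebraic characterisation of Section~\ref{sec:aperiodicity} together with the computability results of Section~\ref{sec:compute-syntactic}. So the theorem should fall out as an assembly of those two developments, and the real difficulty lies in building them, not in the reduction below.

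First I would make the language in question concrete. Given the input $(k,\varphi)$, consider
\begin{align*}
  L_0 = \set{G : \text{$G$ is an undirected graph of treewidth $\le k$ and } G \models \varphi}.
\end{align*}
Since ``treewidth $\le k$'' is expressible by a sentence $\psi_k$ of \msoh that can be computed from $k$ -- a classical fact (\cite{Courcelle:2012wq}), which one can also read off from Theorem~\ref{thm:treewidth-grammar}, as the graphs of treewidth $\le k$ are the image of a finite, effectively listable set of polynomial operations -- the language $L_0$ is defined by the counting-\msoh sentence $\psi_k \wedge \varphi$. Crucially, $L_0$ has treewidth $\le k$, hence bounded treewidth. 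Applying the transformation $h$ of Figure~\ref{fig:graph-to-hypergraph}, which preserves and reflects recognisability, bounded treewidth, and definability in counting \msoh (and, by the same Ehrenfeucht--Fra\"iss\'e arguments, definability in \msoh without counting), we transport the question to the hypergraph language $L = h(L_0) \subseteq \hmonadzero{\edgevoc}$, which again has bounded treewidth and is defined by a counting-\msoh sentence computable from the input.

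Next I would invoke Theorem~\ref{thm:aperio}: since $L$ has bounded treewidth, $L$ is definable in \msoh without counting if and only if $L$ is recognised by an $\hmonad$-algebra that is finite on every arity and aperiodic. By Courcelle's Theorem (Theorem~\ref{thm:courcelle}), $L$ is already recognisable, so its syntactic algebra exists and is finite on every arity; and since aperiodicity is preserved under the division of algebras (a point that belongs to the development of aperiodicity in Section~\ref{sec:aperiodicity}), $L$ is recognised by some finite-on-every-arity aperiodic algebra if and only if the syntactic algebra of $L$ is itself aperiodic. This turns the decision problem into: \emph{is the syntactic algebra of $L$ aperiodic?}

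Finally I would appeal to Section~\ref{sec:compute-syntactic}. Because $L$ is definable in counting \msoh, item~(a) there gives -- effectively -- a computable $\hmonad$-algebra recognising $L$; because $L$ has bounded treewidth, item~(b) lets us minimise this computable algebra, yielding a computable presentation of the syntactic algebra of $L$, and lets us decide whether that algebra is aperiodic. Composing these effective steps produces the algorithm, proving decidability. The hard part, to be discharged in Sections~\ref{sec:aperiodicity} and~\ref{sec:compute-syntactic}, is precisely the infrastructure this argument rests on: formulating a usable notion of computable $\hmonad$-algebra in the presence of infinitely many sorts and infinitary operations, compiling counting-\msoh sentences into such algebras, and showing that minimisation and the aperiodicity test terminate when the recognised language has bounded treewidth.
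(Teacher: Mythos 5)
Your overall route coincides with the paper's: reduce, via Theorem~\ref{thm:aperio}, to the question of whether $L$ is recognised by an aperiodic algebra that is finite on every arity; observe that this holds if and only if the syntactic algebra has the property (since the property passes to surjective homomorphic images, and the syntactic algebra is such an image of any recogniser); compute the syntactic algebra by Theorem~\ref{thm:compute-syntactic-algebra}; and test. The detour through the coding $h$ in your first paragraph is harmless but not needed, since Theorem~\ref{thm:aperio} and Theorem~\ref{thm:compute-syntactic-algebra} are already stated so as to apply to the undirected-graph language directly.

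There is, however, a genuine gap in your final step: ``decide whether that algebra is aperiodic'' is not an effective test as stated. Aperiodicity of an $\hmonad$-algebra (Definition~\ref{def:aperiodic}) is a condition on the semigroup of $n$-ary elements for \emph{every} $n \in \set{0,1,\ldots}$. Even for a computable algebra that is finite on every arity, this is an infinite family of finite semigroups, and you give no bound on which arities must be inspected; enumerating arities one by one yields at best a semi-decision procedure for \emph{non}-aperiodicity, which never terminates on positive instances. The paper closes this hole by sharpening Theorem~\ref{thm:aperio}: as remarked in the proof of Lemma~\ref{lem:boneva}, aperiodicity is only ever used for the operations $\oplus_n$ occurring in the finite, computable label set $\Delta$ produced by Theorem~\ref{thm:mso-transduction-treedecomp} from the treewidth bound $k$. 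Hence one can compute an $n_0 \in \Nat$ such that $L$ is definable in \msoh without counting if and only if its syntactic algebra is finite on every arity and aperiodic on arities $\set{0,1,\ldots,n_0}$; the test then reduces to computing finitely many multiplication tables for $\oplus$. Your argument needs some such computable bound on the relevant arities to produce an actual algorithm.
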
 

Note that the language in the question from the  theorem is necessarily definable in counting \msoh, because having treewidth $\le k$ is definable in \msoh (e.g.~by testing if the graph contains one of finitely many forbidden minors). 
The above theorem is stated in terms of undirected graphs, but the same result is also true  for hypergraphs. This is because the function $g$ described in Figure~\ref{fig:graph-to-hypergraph}, which codes hypergraphs as undirected graphs, preserves (effectively) bounded treewidth, definability in counting \msoh, and definability in \msoh without counting. It will be convenient, however, to talk about undirected graphs, since this way we can use without modification a result on \mso transductions that was proved for undirected graphs.

The assumption on bounded treewidth is crucial for the decidability result in Theorem~\ref{thm:decide-mso-not-count}.
In general, it is undecidable if a sentence of counting \msoh can be rewritten so that it does not use counting. The intuitive reason is that the unsatisfiable sentence ``false'' does not use counting, and checking satisfiability is undecidable. A formal proof is  longer, but not hard.   

\paragraph*{Aperiodicity.} The proof the decidability result in   Theorem~\ref{thm:decide-mso-not-count} uses an algebraic characterisation, given in Theorem~\ref{thm:aperio} below, which might be  more interesting than the decidability result itself. The algebraic characterisation uses the following notion from semigroup theory,  made famous by Sch\"tzenberger's characterisation of  star-free (equivalently, first-order definable) languages of words~\myfootcite{Schutzenberger:1965il}.
\begin{definition}[Aperiodic semigroup]
	A semigroup $S$ is called \emph{aperiodic} if for every $s \in S$ there exists some $n$ such that
	\begin{align*}
  s^n = s^{n+1}.
\end{align*}
\end{definition}

We will apply the above definition to semigroups generated by  the parallel composition operation $\oplus$ on sourced hypergraphs. The operation $\oplus$ was defined  in algebras of the form  $\hmonad \Sigma$,  but it can be extended to  any $\hmonad$-algebra $\alga$, by defining $a \oplus b$
to be the result of applying the product operation of $\alga$ to
\begin{align*}
	(\unitt a) \oplus (\unitt b).
\end{align*}
It is not hard to see that  $\oplus$ is associative and commutative, and therefore it turns $n$-ary elements in  an $\hmonad$-algebra into an associative and commutative semigroup.

% We can also forget the index $n$, and view $\oplus$ as a rank preserving operation
%\begin{align*}
%\oplus : A \times A \to A,	
%\end{align*}
%where $A$ is the universe of the algebra $\alga$ and  $A \times A$ is the ranked set of pairs with the same ranks.

\begin{definition}[Aperiodic $\hmonad$-algebra]\label{def:aperiodic}
	An $\hmonad$-algebra  is called \emph{aperiodic} if for every $n \in \set{0,1,\ldots}$, the semigroup of rank $n$ elements equipped with parallel composition $\oplus$ is  aperiodic.
\end{definition}

%The following two results are given without proof; the proof is an application of two results
%\begin{enumerate}
%	\item tree decompositions can be computed by \msoh transductions~\cite{Bojanczyk:2016wd};
%	\item for languages of unranked trees without sibling order, aperiodicity (in a sense similar to Definition~\ref{def:aperiodic}) implies definability in \msoh.
%\end{enumerate}
The main result of this section is the following theorem, which says that aperiodicity exactly characterises languages definable in \msoh without counting,  under the assumption of bounded treewidth. 

\begin{theorem}\label{thm:aperio} Let $L$ be a language of undirected graphs which has bounded treewidth. Then  the following conditions are equivalent:
\begin{enumerate}
	\item $L$ is definable in \msoh without counting.
	\item $L$ is recognised by an  $\hmonad$-algebra that is aperiodic and  finite on every arity.
%	\item $L$ the syntactic $\hmonad$-algebra is finite on every arity and aperiodic.
\end{enumerate}
\end{theorem}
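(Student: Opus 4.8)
The plan is to prove the two implications separately; only the implication from (2) to (1) uses the bounded-treewidth hypothesis.

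For (1)$\Rightarrow$(2) I would run the proof of Courcelle's Theorem (Theorem~\ref{thm:courcelle}) with an empty set of moduli. This produces the congruence $\approx$ on $\hmonad\Sigma$ that identifies two sourced hypergraphs when they have the same arity and their models satisfy the same \mso sentences (no counting) of some fixed quantifier rank $r$; it is a congruence by Lemma~\ref{lem:congruence}, has finitely many classes on every arity by Lemma~\ref{lem:mso-to-fo}, and recognises $L$ by construction, so $\alga:=\hmonad\Sigma/_{\approx}$ is a recognising algebra that is finite on every arity. It then remains to show $\alga$ is aperiodic. Fixing an arity $n$ and an $n$-ary sourced hypergraph $G$, one must exhibit $N$ with $G^{\oplus N}\approx G^{\oplus (N+1)}$. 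The point is that $G^{\oplus N}$ is, up to isomorphism, $N$ disjoint copies of the non-source part of $G$ all glued onto one common tuple of $n$ distinguished source vertices, so its model is an amalgam of $N$ isomorphic copies of the fixed finite model $\structh G$ over a common substructure of size $n$; a routine Ehrenfeucht--Fra\"iss\'e / composition argument (in the spirit of the treatment of disjoint unions in Lemma~\ref{lem:cmso-compatible}) then shows that the rank-$r$ \mso theory of such an amalgam depends on $N$ only through $\min(N,t)$ for a threshold $t$ depending on $r$ and $|\structh G|$, which is exactly the statement that counting-free \mso cannot count connected components past a bounded threshold. Taking $N=t$ gives the required identity, hence aperiodicity.

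For (2)$\Rightarrow$(1), let $k$ bound the treewidth of $L$ and let $\alga$ be aperiodic, finite on every arity, and recognise $L$ via a homomorphism $h$ with accepting set $F$. By Theorem~\ref{thm:treewidth-grammar} and its proof, every graph of treewidth $\le k$ is the value of a term over a fixed finite family of \hr-operations of bounded arity; reading such terms as unordered tree decompositions of width $\le k+1$, evaluation gives a counting-free \mso transduction $\mathsf{val}$ from trees to $\hmonad\Sigma$, and — this is the input I would take from the work on undirected graphs that the paper already relies on — a counting-free \mso transduction $\mathcal I$ in the opposite direction with $G\in\mathsf{val}(\mathcal I(G))$ for every undirected $G$ of treewidth $\le k$. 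Set $K:=\{t:h(\mathsf{val}(t))\in F\}$. Since only arities $\le k+1$ occur, $h\circ\mathsf{val}$ factors through a finite algebra, so $K$ is a regular language of unordered trees. The key step is to observe that the horizontal combination performed at a node of a tree decomposition is, modulo the unary operations of source renaming and forgetting and the addition of boundedly much local structure, an iterated parallel composition $\oplus$; hence the horizontal monoid of a forest algebra recognising $K$ is a quotient of a finite product of the commutative $\oplus$-semigroups of $\alga$, all aperiodic by hypothesis, and is therefore aperiodic. By the algebraic characterisation of counting-free \mso over unordered trees (a regular language of unordered trees is definable in \mso without counting iff the horizontal monoid of its syntactic forest algebra is aperiodic), $K$ is definable in \mso without counting over trees. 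Finally $G\in L$ iff $G$ has treewidth $\le k$ (an \msoh property, via forbidden minors) and $\mathcal I(G)\cap K\neq\emptyset$; pulling the second condition back along the counting-free transduction $\mathcal I$ turns it into an \msoh sentence without counting, so $L$ is definable in \msoh without counting.

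I expect the main obstacle to be the implication from (2) to (1), in two places: first, one must secure a genuinely counting-free \mso transduction from undirected graphs of bounded treewidth to their tree decompositions; and second, one must correctly bridge the algebraic notions, matching aperiodicity of an $\hmonad$-algebra with the aperiodicity of the horizontal monoid of a forest algebra. The latter forces careful bookkeeping of which ingredients of a tree-decomposition node are ``horizontal'' — the parallel compositions, where aperiodicity is used and commutativity is available — and which are ``vertical'' — the renamings and local edits, where arbitrary regular behaviour must be allowed and no aperiodicity is needed. The implication from (1) to (2) should be routine once the Ehrenfeucht--Fra\"iss\'e stabilisation of $\oplus$-powers is carried out.
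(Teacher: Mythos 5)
Your proposal is correct and takes essentially the same route as the paper: for (1)$\Rightarrow$(2) the paper likewise quotients by rank-$r$ \mso equivalence and stabilises $\oplus$-powers by an Ehrenfeucht--Fra\"iss\'e argument, and for (2)$\Rightarrow$(1) it likewise combines the \mso transduction computing algebraic tree decompositions (Theorem~\ref{thm:mso-transduction-treedecomp}) with the Backwards Translation Lemma, using aperiodicity of $\oplus$ exactly where you do, namely at the unbounded-degree nodes. The only cosmetic difference is that where you invoke the forest-algebra characterisation of counting-free \mso on unordered trees as a black box, the paper (Lemma~\ref{lem:boneva}) writes the tree formula directly by guessing values at nodes and checking consistency with threshold counting -- the same argument, credited to the same source.
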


In the next section, we show  that condition 2 above is decidable, assuming that the language is given by a sentence of counting \msoh together with a bound on its treewidth.  Putting together these results, we get   Theorem~\ref{thm:decide-mso-not-count}.

The rest of Section~\ref{sec:aperiodicity} is devoted to proving Theorem~\ref{thm:aperio}. 

\paragraph*{Definability in \msoh implies aperiodicity.}
 We begin  by proving the  easier top-down implication in   Theorem~\ref{thm:aperio}. This implication does not need the assumption that $L$ has bounded treewidth. 
\begin{lemma}\label{lem:}
If a  language of undirected graphs is   definable in \msoh, then it  is recognised by an $\hmonad$-algebra that is  aperiodic and  finite on every arity.
\end{lemma}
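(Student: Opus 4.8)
The statement to prove is: if a language of undirected graphs $L$ is definable in \msoh, then it is recognised by an $\hmonad$-algebra that is aperiodic and finite on every arity. By Courcelle's Theorem (Theorem~\ref{thm:courcelle}), definability in counting \msoh already gives recognisability by an algebra finite on every arity; and its proof via the Compositional Encoding Lemma produces a concrete such algebra, namely $\hmonad\Sigma/_{\approx_r^M}$ where two sourced hypergraphs are identified when they have the same rank and satisfy the same sentences of \cmsomr. Since $L$ is defined \emph{without} counting, we may take $M=\varnothing$, so the relevant equivalence is plain $r$-equivalence of the associated models: $G\approx H$ iff $G,H$ have the same rank and $\structh G\equiv_r\structh H$. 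Call this algebra $\alga_r$. The whole content of the lemma is then to show that $\alga_r$ is aperiodic, i.e.\ that on each rank $n$, the semigroup generated by $\oplus$ is aperiodic.

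First I would unwind what $\oplus$ does at the level of models. For $n$-ary sourced hypergraphs $G,H$, the parallel composition $G\oplus H$ is the disjoint union with the $n$ pairs of sources fused. At the level of the relational encoding $\structh$, this is (up to a quantifier-free interpretation and a quantifier-free universe restriction identifying the fused source pairs) the disjoint union $\structh G\amalg\structh H$; more to the point, $\structh(G^{\oplus k})$ is obtained from a $k$-fold disjoint union of $\structh G$ with the source-vertices glued, and all of these operations are among those listed in Lemma~\ref{lem:cmso-compatible} (indeed already in Lemma~\ref{lem:fo-compatible}: disjoint union, quantifier-free interpretation, quantifier-free universe restriction — the gluing of finitely many named source points can be arranged as a quotient, but more cleanly one keeps the sources in a separate ``skeleton'' model as in the proof of the Compositional Encoding Lemma and takes a disjoint union of copies of the non-source part). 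The key classical fact I would invoke is a Feferman--Vaught / Ehrenfeucht-style pumping statement for disjoint unions: for every $r$ there is $N=N(r)$ such that for any model $\structa$ and any $j\ge N$, the $j$-fold disjoint union $\structa^{\amalg j}$ is $\equiv_r$ to $\structa^{\amalg(j+1)}$ — because the $\equiv_r$-type of a disjoint union depends only on the multiset of $\equiv_r$-types of the summands \emph{counted up to threshold $N$} (this is exactly where ``without counting'' is used: with modulo counting this would fail). Feed this through the fixed quantifier-free interpretation/restriction that glues the sources, which is $\equiv_r$-compatible by Lemma~\ref{lem:fo-compatible}, and one gets $\structh(G^{\oplus j})\equiv_r\structh(G^{\oplus(j+1)})$ for all $j\ge N$, i.e.\ in $\alga_r$ we have $G^{\oplus j}=G^{\oplus(j+1)}$ in the rank-$n$ semigroup. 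That is precisely aperiodicity of $\alga_r$, and finiteness on each rank is Lemma~\ref{lem:mso-to-fo} (finiteness of $M=\varnothing$ makes it immediate). Finally, $\alga_r$ recognises $L$ because the defining sentence has quantifier rank $r$, so membership in $L$ is a union of $\approx$-classes; this is the argument already given on page~\pageref{page:courcelle-theorem}.

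The main obstacle, and the step deserving care, is the bookkeeping around $\oplus$ and the gluing of sources: one must be sure that iterating $\oplus$ really is, after fixing an interpretation independent of $G$ and $j$, an application of a \emph{single} $\equiv_r$-compatible operation to a disjoint union of $j$ copies of a single model derived from $\structh G$, so that the Feferman--Vaught threshold pumping applies uniformly in $j$. The cleanest way to organise this is to reuse the decomposition already set up in the proof of the Compositional Encoding Lemma: view $G^{\oplus j}$ as $\sem{P_j}(G,\dots,G)$ for the polynomial operation $P_j$ given by the $j$-fold parallel composition skeleton, observe that $\structh$ transports $\sem{P_j}$ to the operation ``disjoint union of $j$ copies, then a fixed quantifier-free interpretation that adds the source constants and incidences, then a fixed quantifier-free restriction deleting duplicated source vertices'', with the interpretation and restriction \emph{not depending on $j$}, and then apply threshold pumping for disjoint unions. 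Everything else — associativity and commutativity of $\oplus$, that $\oplus$ is well defined in any $\hmonad$-algebra — has already been noted in the paper, so no further work is needed there.
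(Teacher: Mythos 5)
Your proposal follows essentially the same route as the paper: quotient $\hmonad \Sigma$ by agreement on quantifier-rank-$r$ sentences of \mso without counting, obtain a congruence and finiteness on every arity from the proof of Courcelle's Theorem and Lemma~\ref{lem:mso-to-fo}, and obtain aperiodicity from a threshold-pumping Ehrenfeucht-Fra\"iss\'e argument for iterated parallel composition --- the paper isolates exactly your key step as the claim that $G^{\oplus k} \approx_r G^{\oplus (k+1)}$ for some $k$, proved ``by induction on $r$, using a simple Ehrenfeucht-Fra\"iss\'e argument''. The one detail to watch in your more explicit factorisation through disjoint unions is that the $j$-fold disjoint union carries $j$ copies of the source constants, so its vocabulary varies with $j$ and the raw unions are not literally $\equiv_r$-comparable; this is why the paper plays the game directly on the glued models, and your version goes through once the threshold statement is phrased in terms of the multiset of $\equiv_r$-types of the summands rather than as an equivalence of the disjoint unions themselves.
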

\begin{proof} Consider a language of undirected graphs that is defined by a sentence of  \msoh of quantifier rank $r$. Define $\approx_r$ to be the equivalence relation which identifies two sourced hypergraphs if (their associated models) satisfy the same rank $r$ sentences of  \mso without counting.   By the proof of Courcelle's Theorem, $\approx_r$ is a congruence, and leads to a quotient algebra that is finite on every arity.  To prove that the quotient algebra is aperiodic, and thus complete the proof of the lemma, we use the following claim, which is shown by induction on $r$, using a simple Ehrenfeucht-Fra\"iss\'e argument.
\begin{claim} For every   finite ranked set $\Sigma$, every $G \in \hmonad \Sigma$, and every quantifier rank $r$, there is some $k$ such that
\begin{align*}
\overbrace{G \oplus \cdots \oplus G}^{\text{$k$ times}} \qquad \approx_r   \qquad \overbrace{G \oplus \cdots \oplus G}^{\text{$k+1$ times}}.
\end{align*}
\end{claim}
\end{proof}

\paragraph*{Aperiodicity implies definability in \msoh.} The rest of Section~\ref{sec:aperiodicity} is devoted to proving  the top-down implication in Theorem~\ref{thm:aperio}: if a language of undirected graphs  has bounded treewidth and is recognised by an $\hmonad$-algebra that is aperiodic and finite on every arity,  then it is definable in \msoh without counting.  Like the proof of Courcelle's Conjecture\myfootcite{Bojanczyk:2016wd}, our proof uses \mso transductions that compute tree decompositions.  In the proof, we crucially use the  assumption on  bounded treewidth, but it not clear if the assumption is really needed.

We begin by discussing \mso transductions. For ranked sets $\Sigma$ and $\Gamma$, viewed as vocabularies that contain only relation names, define a \emph{transduction with input vocabulary $\Sigma$ and output vocabulary $\Gamma$}  to be any set of pairs 
\begin{align*}
  (\structa,\structb) \qquad \text{where $\structa,\structb$ are models over vocabularies $\Sigma,\Gamma$, respectively}
\end{align*}
which is closed under isomorphism, i.e.~replacing either the first or second coordinate by an isomorphic model does not affect memebership in the set. An \mso transduction is a special case of a transduction, which is definable using formulas of \mso. Since we use \mso transductions as a black box, we do not give the definition\footnote{For a definition of \mso transductions, see Definition 2.3 in \cite{Bojanczyk:2016wd}}, and we only state the following property that will be used below:
\begin{lemma}[Backwards Translation]\myfootcite[Theorem 7.10]{Courcelle:2012wq} \label{lem:backwards} 
For every \mso transduction $\Tt$  and every sentence  $\varphi$ of \mso  over its output vocabulary,  one can compute a sentence of \mso over the input vocabulary that defines the set:
 \begin{align*}
\set{\structa : \text{there is some $\structb$ such that $(\structa,\structb) \in \Tt$ and $\structb \models \varphi$}}
\end{align*}
\end{lemma}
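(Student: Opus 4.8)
The plan is to reduce the statement to the case of the atomic building blocks out of which an \mso transduction is composed. Recall (from the definition referenced in the footnote) that every \mso transduction $\Tt$ decomposes as a finite composition $\Tt = \Tt_k \circ \cdots \circ \Tt_1$ of transductions of a few elementary kinds: \emph{parameter introduction}, which nondeterministically expands the input model with finitely many new monadic predicates $U_1,\ldots,U_m$ and keeps only the expansions satisfying a fixed \mso-definable guard $\gamma(U_1,\ldots,U_m)$; \emph{copying}, which replaces the universe by a fixed number $c$ of disjoint copies of itself, adding predicates naming the copies and a predicate linking the copies of a single element; \emph{restriction}, which deletes all elements outside an \mso-definable set; and \emph{\mso-interpretation}, which redefines each output relation by a fixed \mso formula over the current vocabulary. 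Since the class of \mso-definable classes of models is trivially closed under backwards translation along a composition — given $\varphi$ over the output of $\Tt_k$, translate it back through $\Tt_k$, then through $\Tt_{k-1}$, and so on — it suffices to prove the lemma for each elementary kind, and in each case the translation is a purely syntactic, hence effective, transformation of formulas.

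For an \mso-interpretation one substitutes: replacing every occurrence of an output relation symbol $R(\bar x)$ in $\varphi$ by the \mso formula that defines it (renaming bound variables to avoid capture) yields the required sentence, since the universe is unchanged. For a restriction to the set defined by $\psi(x)$, one relativizes: every first-order quantifier $\exists x$ becomes $\exists x\,(\psi(x) \wedge \cdots)$, every set quantifier $\exists X$ is guarded by the conjunct $\forall x\,(x \in X \Rightarrow \psi(x))$, and atomic statements are kept as is. For parameter introduction with guard $\gamma(U_1,\ldots,U_m)$, one prepends existential second-order quantifiers: the translated sentence is $\exists U_1 \cdots \exists U_m\,(\gamma \wedge \varphi)$, read over the expanded vocabulary; this is an \mso sentence over the input vocabulary precisely because the guard $\gamma$ is \mso-definable, which is part of the definition of an \mso transduction. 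In all these cases one also sees, by tracking quantifiers, that the output quantifier rank is a computable function of the rank of $\varphi$ and of $\Tt$, though this is not needed here.

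The case that needs real bookkeeping, and which I expect to be the main obstacle, is copying. Here an output element is a pair $(i,a)$ with $i \in \set{1,\ldots,c}$ and $a$ an element of the input. I would translate $\varphi$ by induction on its structure, maintaining the convention that a first-order output variable $x$ is represented by a statically chosen copy index $i_x$ together with an input first-order variable $\dot{x}$, while a monadic output variable $X$ is represented by a $c$-tuple $(\dot{X}_1,\ldots,\dot{X}_c)$ of input monadic variables, with $(i,a) \in X$ meaning $a \in \dot{X}_i$. An output first-order quantifier $\exists x$ then unfolds to a finite disjunction over the $c$ possible values of $i_x$ of an input quantifier $\exists \dot{x}$; an output set quantifier $\exists X$ unfolds to $\exists \dot{X}_1 \cdots \exists \dot{X}_c$; the built-in copy predicates and the ``same underlying element, different copy'' predicate are translated by the obvious comparisons of copy indices and of the underlying input variables; and an atomic statement about an output relation $R$ is handled case-by-case on the tuple of copy indices of its arguments. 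The verification that this translation preserves meaning is a routine structural induction; the only delicate points are avoiding variable capture during substitution and checking that the tuple-representation of set variables interacts correctly with the interpretation formulas, which may themselves quantify over copies. Composing the translations for the elementary steps then yields the lemma, with effectiveness immediate since every step is a syntactic rewriting of formulas.
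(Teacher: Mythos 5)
Your proof is correct, but note that the paper does not actually prove this lemma: it is used as a black box and attributed to Courcelle and Engelfriet (Theorem 7.10 of their book), with the definition of \mso transductions deferred to Definition 2.3 of \cite{Bojanczyk:2016wd}, which indeed presents a transduction as a finite composition of exactly the atomic steps you list. Your argument -- backwards translation through each atomic step, with substitution for interpretations, relativisation for restrictions, prepended existential set quantifiers for parameter introduction, and the copy-index/tuple-of-set-variables bookkeeping for copying -- is the standard proof of the cited theorem, and the one delicate case (copying) is handled correctly by fixing a copy index per first-order variable and splitting each quantifier into a finite disjunction. So there is no gap; you have simply supplied the proof that the paper outsources to the literature.
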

The lemma above is also true for counting \mso, but we will use the non-counting version described above.
 We will be interested in \mso transductions that input undirected graphs  and output labelled trees that represent their tree decompositions. By trees we mean node labelled, unranked and unordered trees, as described in  Figure~\ref{fig:trees}. We view trees labelled by  a finite (not ranked) set as a special case of  hypergraphs,
see Figure~\ref{fig:trees-as-hypergraphs}.
Using this representation, it makes sense to talk about a tree language being recognisable or definable in (counting) \msoh. These notions coincide with the usual notions of recognisability and definability for trees. 
\begin{figure}
\mypic{44}
  \caption{Trees as hypergraphs. \label{fig:trees-as-hypergraphs} }
\end{figure}

\paragraph*{Algebraic tree decompositions.}  We now discuss how tree decompositions are represented as trees labelled by operations in the algebra $\hmonad \edgevoc$.  The idea is the same as for Theorem~\ref{thm:treewidth-grammar}, i.e.~a tree decompositions is viewed as a tree labelled by polynomial operations.   An important difference with respect to Theorem~\ref{thm:treewidth-grammar}  is that we need to use trees of unbounded branching. Intuitively speaking, the reason for using high branching is that creating a tree decomposition with bounded branching, say binary branching, would require finding an ordering on the input graph, which cannot be done in \mso. For example, for a graph which  is an independent set like this:
\mypic{38}
we will need to consider tree decompositions that look like  this: 
\mypic{39}  
In  the proof of Theorem~\ref{thm:treewidth-grammar} we did not have the issue described above, because we only cared about the existence of a tree decomposition, and it was not important that this tree decomposition would have to be produced using \mso. The above discussion motivates the  following definition, especially item 3:

\begin{definition}[Algebraic tree decomposition] An \emph{algebraic tree decomposition} is a tree (of possibly unbounded branching), where:
\begin{enumerate}
	\item nodes of degree 0 (leaves) are labelled elements  of $\hmonad \edgevoc$;
	\item nodes of degree 1 are labelled by polynomial operations in  $\hmonad \edgevoc$ with 1 argument;
	\item nodes of degree $\ge 2$ have labels of the form $\oplus_n$ for some $n \in \Nat$.
\end{enumerate}
\end{definition}

%
%
%\begin{definition}[Unordered unranked terms] 
%%An \emph{associative commutative} operation over a set $A$ is a function 
%%\begin{align*}
%%f : A \times A \to A	
%%\end{align*}
%%which satisfies the following identities
%%\begin{align*}
%%\underbrace{f(a,b) = f(b,a)}_{\text{commutative}} \qquad \underbrace{f(a,f(b,c)) = f(f(a,b),c)}_{\text{associative}}.
%%\end{align*}
%%An associative and commutative operation can be extended multisets of arguments, as defined by 
%%\begin{align*}
%%f(a_1,\ldots,a_n) = f(a_1,f(a_2,\ldots,a_n)) \qquad f(a)=a.
%%\end{align*}
%	A polynomial in an $\hmonad$-algebra with universe $A$ is called \emph{associative commutative} if it is of the form 
%	\begin{align*}
%f : \slice A n \times \slice A n \to \slice A n	 
%\end{align*}
%for some rank $n$, and is associative and commutative in the usual sense.  
%%For an associative acommutative polynomial $f$, we define
%%\begin{align*}
%%f^+ : \text{multisets over $\slice A n$}  \to \slice A n 
%%\end{align*}
%%
%Define an \emph{unranked term over a $\hmonad$-algebra $\alga$} to be a tree (labelled, unranked and without an order on siblings, see Figure~\ref{fig:trees}) where every node is labelled by a polynomial in $\alga$  which:
%\begin{itemize}
%	\item has at most one argument; or
%	\item has exactly two arguments and is associative commutative.
%\end{itemize}
%\end{definition}
 The \emph{value}   of an algebraic tree decomposition is an element of $\hmonad \edgevoc$ that is defined in the natural way, by induction on its  size: 
 if  an algebraic tree decomposition has root label $f$ and $n$ children, then its value is defined to be
 \begin{align*}
f(\text{value of child $1$}, \ldots, \text{value of child $n$}).
\end{align*}
The value is undefined if either: (a) $n=1$ but the value of the unique child is undefined or has a different arity than the argument of the polynomial  operation $f$; or (b) $n \ge 2$, the label $f$ is $\oplus_k$,  but some child has undefined value or a value of arity different than $k$. Note that the trees in algebraic tree decompositions are unordered (on siblings).  However, since  the operation $\oplus_n$ is commutative,  the value described above does not depend on the ordering on the children.

We are ultimately interested in algebraic tree decompositions that produce elements of arity 0, nevertheless subtrees might produce values of other arities.

 We are now ready to state the main technical result from the proof of Courcelle's Conjecture, which says that an \mso transduction can compute tree decompositions for undirected graphs of bounded treewidth.

\begin{theorem}\label{thm:mso-transduction-treedecomp}\myfootcite[Theorem 2.4 and the proof of Lemma 2.11]{Bojanczyk:2016wd}
	For every $k \in \Nat$  one can compute  a finite set 
	\begin{align*}
  \Delta \subseteq \text{polynomial operations in $\hmonad \edgevoc$ with $\le 1$ arguments} \cup \set{ \oplus_n : n \in \Nat}
\end{align*}
and an \mso transduction
	\begin{align*}
 \Tt \subseteq \set{(G,t) : 
 \begin{cases}
 \text{$G$ is an undirected graph}\\
 \text{ 	$t$ is an algebraic tree decomposition with labels from $\Delta$ and value $G$}
 \end{cases}
}
\end{align*}
such that for every graph $G$ with treewidth at most $k$, the set $\mathcal \Tt(G)$ is nonempty.
\end{theorem}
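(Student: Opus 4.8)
The statement essentially repackages Theorem 2.4 and the proof of Lemma 2.11 of \cite{Bojanczyk:2016wd}, so the plan is to invoke the hard part as a black box and then supply the algebraic bookkeeping. First, apply the result of \cite{Bojanczyk:2016wd} that an \mso transduction can compute tree decompositions: for the given $k$ it provides a width bound $w = w(k)$ (whose value is irrelevant, only that it is finite and computable) and an \mso transduction $\Tt_0$ taking an undirected graph $G$ to labelled trees that encode tree decompositions of $G$ of width $\le w$, with $\Tt_0(G)$ nonempty whenever $G$ has treewidth $\le k$. Composing with one more \mso transduction (\mso transductions compose, by Courcelle), I would arrange that the output also carries the following bounded, \mso-definable decoration: a linear order on each bag, so a bag becomes a tuple of $\le w$ vertices; for each non-root node, the set of bag positions it shares with its parent; and for each edge of $G$ the unique node that \emph{owns} it, namely the topmost node whose bag contains both of its endpoints (unique by the connectivity axioms of a tree decomposition).

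Next I would fix, once and for all, a local decoding of such a decorated tree $t$ into an algebraic tree decomposition. For a node $x$ write $B_x$ for its ordered bag, $S_x \subseteq B_x$ for the positions shared with the parent (empty at the root), and $d$ for its number of children; the value computed at $x$ is to be the sourced hypergraph whose vertices are all vertices occurring in the subtree of $x$, whose edges are all edges owned inside that subtree, and whose sources are $S_x$ in the canonical order. A gadget realises this: on the value of each child $x_i$ (which exposes sources $B_{x_i} \cap B_x$) place a unary polynomial operation $u_{x_i}$ that reorders and pads it to arity $|B_x|$, putting the genuine vertex at each position of $B_x$ that lies in $B_{x_i}$ and a fresh isolated vertex at every other position; combine the $u_{x_i}$ with an $\oplus_d$-labelled node; and on top place a unary polynomial operation $p_x$ that adds the edges owned by $x$, introduces any vertex of $B_x$ contributed by no child, and forgets all sources except $S_x$. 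A leaf of $t$ becomes a constant (a nullary polynomial operation) and a degree-$1$ node uses the single composite unary operation $p_x \circ u_{x_1}$. Since $|B_x| \le w$, the local data at a node ranges over a finite, computable set; hence the polynomial operations used, other than the $\oplus_n$, form a finite computable set $\Delta$, and since the arity $n$ of an $\oplus_n$-node is read off its out-degree, only one genuinely new label is introduced.

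Two verifications then remain. The decoding $t \mapsto$ (its algebraic tree decomposition) is a bounded local rewriting of trees: the output forest has a constant number of copies of the domain of $t$ (the $p_x$-nodes, the $\oplus_d$-nodes, and the $u_x$-nodes, the last indexed by the child-edges of $t$), and its parent relation and node labels are \mso-definable from the decoration of $t$; so it is an \mso transduction $\Tt_1$, and $\Tt = \Tt_1 \circ \Tt_0$ is the transduction sought. And the value of the decoding of $t$ equals $G$: this is an induction over $t$ in which the two connectivity axioms of a tree decomposition are exactly what ensure that no interior vertex of one subtree is fused with anything in a sibling subtree, that each padding vertex at a shared position collapses onto the genuine bag vertex (or, when that vertex is introduced at $x$, onto a single fresh copy), and that each edge of $G$ is added exactly once, by its owner; commutativity of $\oplus_n$ is what legitimises leaving the trees unordered, and the need for unbounded branching (the point illustrated in the text by independent sets) is precisely why all $\oplus_n$ are allowed rather than only $\oplus_2$.

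The main obstacle is the first step: the \mso transduction computing tree decompositions is itself a deep theorem, built on the Bodlaender--Kloks dynamic programme together with a guidance argument that renders the construction \mso-definable, and nothing survives without it. Granting that input, the remaining difficulty is the interface bookkeeping of Steps 2 and 3 — making the source lists of the children line up so that $\oplus$ reconstructs $G$ on the nose, and checking that padding with fresh vertices neither loses nor duplicates vertices — which is routine but has to be carried out with care.
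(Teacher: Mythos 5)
The paper gives no proof of this theorem at all --- it is imported wholesale from the cited reference --- so your proposal, which delegates the hard result (an \mso transduction computing bounded-width tree decompositions) to that same reference and then spells out the routine conversion of a decorated tree decomposition into an algebraic one, is essentially the paper's implicit argument carried out in more detail, and the construction is sound. One correction: in the paper's convention the subscript of $\oplus_n$ is the common arity of the values being composed (here $|B_x| \le w$), not the out-degree of the node, so your remark that $n$ ``is read off its out-degree'' would force infinitely many labels $\oplus_n$ into $\Delta$ (out-degrees are unbounded, as the independent-set example shows); with the correct reading, your padding of every child to arity $|B_x|$ means only $\oplus_1,\ldots,\oplus_{w+1}$ occur and $\Delta$ stays finite as required.
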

%\paragraph*{From trees to hypergraphs of bounded treewidth.}
%Suppose that $L$ has treewidth bounded by $k$. Define $\Gamma_k$ to be  the set of operations on sourced hypergraphs that was used in Claim~\ref{claim:treewidth-polynomials}:
%\begin{enumerate}
%	\item constant operations for sourced hypergraphs with $\le k$  vertices;
%	\item $\mathsf{map}_f$ for every $f : \set{1,\ldots,n} \to \set{1,\ldots,m}$ with $n,m \le k$;
%	\item $\oplus_n$ for $n \le k$.
%\end{enumerate}
%The constant operations have zero arguments and the map operations have one argument. The operation $\oplus_n$ was defined in Claim~\ref{claim:treewidth-polynomials} as a binary operation, but as we have observed it is associative, and therefore it can be viewed as an operation which inputs any number of arguments (one, two, etc.). Define 
%\begin{align*}
%  eval : \unranked \Gamma_k \to \hmonad \Sigma
%\end{align*}
%to be the function which inputs .

Using the above theorem, we can complete the  proof of Theorem~\ref{thm:aperio}. 
We have already shown the top-down  implication (languages definable in \mso are recognised by algebras that are aperiodic and finite on every arity). It remains therefore to prove the bottom-up implication, under the additional assumption that the language has bounded treewidth. Let then $L$ be a language of undirected graphs, viewed as a subset of $\hmonadzero \edgevoc$ according to the representation in Figure~\ref{fig:graph-to-hypergraph}. Assume that $L$ has  bounded treewidth, say treewidth at most $k$, and is recognised by a homomorphism \begin{align*}
  h : \hmonad \edgevoc \to \alga
\end{align*}
such that $\alga$ is aperiodic and finite on every arity. Our goal is to show that $L$ is definable in \msoh without counting.  Apply Theorem~\ref{thm:mso-transduction-treedecomp} to $k$ yielding a finite set  $\Delta$ and an \mso transduction $\Tt$. 
The rest of the  proof below is described diagramatically in  Figure~\ref{fig:aperiodicity}. 

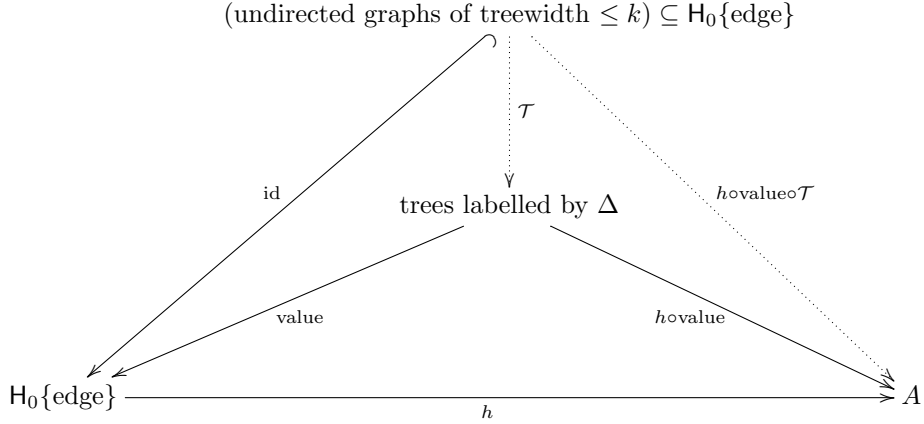
\begin{figure}[hbt]
\begin{align*}
\xymatrix @C=1.2cm @R=2cm
{
&\text{(undirected graphs of treewidth $\le k$)} \subseteq \hmonadzero \edgevoc
\ar@{.>}[d]^\Tt \ar@{^{(}->}[ddl]_{\text{id}} \ar@{.>}[ddr]^{h \circ \text{value} \circ \Tt}
\\
& \text{trees labelled by $\Delta$} \ar[dl]^{\text{value}} \ar[dr]_{h \circ \text{value}}\\
\hmonadzero \edgevoc \ar[rr]_{ h} & &  A
}	
\end{align*}
  \caption{\label{fig:aperiodicity}The right and bottom small triangles in the diagram commute by definition, while the  left small  triangle commutes by Theorem~\ref{thm:mso-transduction-treedecomp}. It follows that the big  triangle commutes (in particular, the right dotted line in the diagram can be made solid, because it is a composition of two functions).  The function $h \circ \text{value}$ is definable in \msoh thanks to Lemma~\ref{lem:boneva}, and therefore  $h \circ \text{value} \circ \Tt$ is definable in \msoh thanks to the Backwards Translation Lemma. Because the big triangle commutes, it follows that $h$ is also definable in \msoh.
}
\end{figure}

   \begin{lemma}\footnote{The proof of this lemma is the same as in \cite[Theorem 6]{Boneva:2005ks}}
   \label{lem:boneva}
For every  $a \in \alga$ of arity zero, the language
\begin{align*}
  (h \circ \text{value})^{-1}(a) \subseteq \text{trees labelled by $\Delta$}
\end{align*}
is definable in  \msoh without counting
\end{lemma}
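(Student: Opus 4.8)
The plan is to exhibit a deterministic bottom-up tree automaton that computes $h \circ \text{value}$ on trees labelled by $\Delta$, and then to observe that aperiodicity of $\alga$ is exactly what allows this automaton to be described by a sentence of $\msoh$ without counting. Since $\Delta$ is finite, only finitely many arities occur in it: the arities of the leaf labels $g \in \hmonad \edgevoc$, the input and output arities of the unary polynomial operations in $\Delta$, and the indices $k$ of the symbols $\oplus_k$ in $\Delta$. Let $N$ be this finite set of arities, let $A_n$ denote the (finite, by hypothesis) set of arity-$n$ elements of $\alga$, and put $Q = \set{\bot} \cup \bigcup_{n \in N} A_n$, a finite set of \emph{states}. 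To each subtree $s$ I assign the state $q(s) = h(\text{value}(s))$ when $\text{value}(s)$ is defined and has arity in $N$, and $q(s) = \bot$ otherwise.

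First I would check that $q(s)$ depends only on the label of the root of $s$ and the states of the roots of its children, i.e.\ that there is a transition function. For a leaf labelled $g$ the state is $h(g)$. For a degree-one node labelled by a unary polynomial operation $f$ with child state $q$: the state is $\bot$ if $q = \bot$ or $q$ does not have the input arity of $f$, and otherwise it is $h(f(v))$ for any $v$ with $h(v) = q$ — well defined because the kernel of $h$ is a congruence (by the Homomorphism Theorem) and hence, by Definition~\ref{def:congruence}, compatible with $f$. For a node labelled $\oplus_k$ with child states $q_1, \dots, q_\ell$ (here $\ell \ge 2$): the state is $\bot$ if some $q_i = \bot$ or some $q_i$ does not have arity $k$, and otherwise it is the parallel composition $q_1 \oplus \cdots \oplus q_\ell$ computed in $\alga$, which is again well defined since $h$ is a homomorphism and $\oplus$ is a polynomial operation. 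The key point is that, by Definition~\ref{def:aperiodic}, the semigroup $(A_k, \oplus)$ is aperiodic, so there is a single threshold $m$ (a common bound over the finitely many relevant $k$ and the finitely many $q \in A_k$) with $q^j = q^{\min(j,m)}$ for all $j \ge 1$; hence, writing $c_q$ for the number of indices $i$ with $q_i = q$, the state $q_1 \oplus \cdots \oplus q_\ell = \bigoplus_{q : c_q \ge 1} q^{\min(c_q,m)}$ depends only on the tuple $(\min(c_q,m))_{q \in A_k}$.

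It remains to translate this automaton into $\msoh$ without counting. Under the encoding of node-labelled, unranked, unordered trees as hypergraphs, $\msoh$ over such trees coincides with ordinary \mso over labelled trees (as noted above), so the child relation, the root, the degree tests, and the tests ``this node has label $d$'' for $d \in \Delta$ are all expressible. I would take set variables $X_q$, one per $q \in Q$, and write a sentence asserting that there exist sets $X_q$ partitioning the nodes and consistent with the transition function at every node, together with ``the root lies in $X_a$''. Consistency at leaves and at degree-one nodes is a Boolean combination of label tests. Consistency at a node $v$ labelled $\oplus_k$ is the disjunction, over all functions $g : A_k \to \set{0, \dots, m}$, of: ``for each $q \in A_k$, if $g(q) < m$ then $v$ has exactly $g(q)$ children in $X_q$, and if $g(q) = m$ then $v$ has at least $m$ children in $X_q$'', conjoined with ``$v \in X_r$'' for the precomputed resulting state $r$ (with $r = \bot$ if $g$ is positive on a state of arity $\neq k$, a situation already excluded for valid runs). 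Since ``$v$ has exactly $j$ children with a given property'' and ``$v$ has at least $j$ children with a given property'' are first-order for each fixed $j$, no counting quantifiers appear. The unique satisfying assignment has $X_q = \set{v : q(\text{subtree at } v) = q}$, so the sentence defines exactly $(h \circ \text{value})^{-1}(a)$; note that $a \neq \bot$ and $a$ has arity $0$, so membership also forces $\text{value}(t)$ to be defined and of arity $0$.

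The one genuine difficulty, and the only place the hypothesis is used, is the nodes of unbounded branching labelled $\oplus_k$: the value there depends a priori on the whole multiset of children states, which would seem to require modulo-counting of children and hence \emph{counting} $\msoh$. Aperiodicity of $\alga$ collapses every power $q^j$ to its bounded prefix $q^{\min(j,m)}$, so only first-order threshold counts of children survive, keeping the construction inside $\msoh$ without counting. Everything else is the routine translation of a bottom-up tree automaton into \mso, which is the argument of \cite[Theorem 6]{Boneva:2005ks} referenced in the footnote.
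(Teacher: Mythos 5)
Your proposal is correct and follows essentially the same route as the paper's proof: guess the value $h(\mathrm{value}(\cdot))$ of each subtree by existential set quantification (your states $X_q$ are the paper's guessed elements $a_x$), check local consistency, and use commutativity plus aperiodicity of $(A_k,\oplus)$ to reduce the unbounded-branching case to counting children up to a fixed threshold, which is first-order. The only cosmetic differences are your explicit automaton phrasing and the $\bot$ state for undefined or arity-mismatched values, which the paper leaves implicit.
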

\begin{proof}
The main observation is this. If $S$ is a  commutative semigroup, then by definition of commutativity the product of a list $s_1,\ldots,s_n$ depends only on the number of ocurences of each $s \in S$ in that list. If $S$ is furthermore aperiodic, then there is some $n_0 \in \Nat$ such that the number of ocurrnces need only be counted up to threshold $n_0$. We will apply this observation to the aperiodic commutative semigroups of $n$-ary elements in $\alga$ equipped with $\oplus_n$.

To define the language in the statement of the lemma, we do the following. Using existential set quantification,  for  every node $x$ in the input tree  we guess
\begin{align*}
a_x =  (h \circ \text{value})(\text{subtree of $x$}) \in \alga,
\end{align*}
such that the root is labelled by $a$.
In order for this guessing to be possible, we need to justify that  there are finitely many candidates for $a_x$. This is true because the arity of $a_x$ is determined by the (output type) of the label of the node $x$, there are finitely many labels in $\Delta$, and the algebra $\alga$ is finite on every arity.
 Once the values $\set{a_x}_x$ have been guessed, it remains to check if they are consistent with each other.  For leaves, there is nothing to do, and for a node $x$ of degree one it suffices to check if the guessed elements for $x$ and its child are consistent the operation labelling $x$. For nodes labelled by $\oplus_n$, we use the observation from the beginning of the proof, which implies that checking consistency only requires counting the labels in the children up to a threshold (without any modulo counting), something that can be done in first-order logic\footnote{\label{foot:use-small-ranks} Note that in this lemma, we only use the aperidiodicity assumption for $\oplus_n \in \Delta$.
}.
\end{proof}

   By the assumption that $L$ is recognised by $h$, there is a finite set $F$ of nullary elements in $\alga$ such that $L = h^{-1}(F)$. Define $K$ to be the union of the tree languages described in Lemma~\ref{lem:boneva}, ranging over elements $a \in F$. The tree language $K$ is definable in \msoh without counting, as a finite union  of tree languages with this property.    By Theorem~\ref{thm:mso-transduction-treedecomp}, 
   \begin{align*}
  L = \Tt^{-1}(K),
\end{align*}
and therefore $L$ is definable in \msoh without counting thanks to Lemma~\ref{lem:backwards}.
\subsection{Computing a syntactic algebra}
\label{sec:compute-syntactic}
In this section, we show that the aperiodicity condition discussed in Theorem~\ref{thm:aperio}  can be checked using an algorithm. (This algorithm will have the usual non-elementary running time that appears when dealing with \mso.) Checking aperiodicity is only a pretext for discussing a more general question: how can we  effectively represent $\hmonad$-algebras, and do operations on them? For finite monoids, groups, rings, etc., representation is straightforward: one simply provides a finite multiplication table for each of the  finitely many operations. For $\hmonad$-algebras, this straightforward representation does not work, because (a) the universe is necessarily infinite due to the  infinitely many arities; and (b) the product operation has an infinite domain. To deal with these issues, we use the following representation for $\hmonad$-algebras.

%\begin{lemma}
%  The following problem is decidable.
%  \begin{enumerate}
%  	\item {\bf Input.} A number $k \in \set{0,1,\ldots}$ and a counting \msoh sentence  defining a language $L$  of hypergraphs  of treewidth at most $k$.
%  	\item {\bf Output.} Is the syntactic algebra  of $L$ aperiodic?
%  \end{enumerate}
%\end{lemma}
%\begin{proof}
%\end{proof}

\begin{definition}[Computable algebra]\label{def:computable-algebra}
	An $\hmonad$-algebra  is called \emph{computable} if elements of its universe $A$ can be represented in a finite way (say, as strings over a finite alphabet) so that there are algorithms which do the following:
	\begin{enumerate}
		\item  given an arity $n \in \Nat$ and a number $i \in \set{1,2,\ldots}$,  compute the $i$-th $n$-ary element (or output that there are $<i$ $n$-ary elements);
		\item given  an element of $\hmonad A$,  compute its product.
	\end{enumerate}
\end{definition}

An example of a computable $\hmonad$-algebra is the free algebra $\hmonad \Sigma$ for every finite ranked set $\Sigma$. This example is infinite on every arity, but our main focus is on computable algebras that are finite on every arity. Theorem~\ref{thm:computable-courcelle} below says that all languages definable in counting \msoh are recognised by algebras that are computable and finite on every arity; this result  can be viewed as a computable version of Courcelle's Theorem. To prove Theorem~\ref{thm:computable-courcelle}, we use an algebraic approach, which can be viewed as an alternative to the logical proof of  Courcelle's Theorem that we presented in Section~\ref{sec:courcelle-theorem}. For the sake of brevity, we only give a very rough sketch of the algebraic proof;  the computable version of Courcelle's Theorem is not used elsewhere in the paper.

\begin{theorem}[Computable Courcelle's Theorem]\label{thm:computable-courcelle}
If a language of hypergraphs is definable in counting \msoh, then it is recognised by a computable algebra that is finite on every arity.
\end{theorem}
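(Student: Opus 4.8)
The plan is to build, from a sentence $\varphi$ of counting \msoh of quantifier rank $r$ using moduli from a finite set $M$, a computable presentation of the syntactic-type algebra $\hmonad \Sigma /_{\approx}$ constructed in the proof of Courcelle's Theorem, where $G \approx G'$ iff they have the same arity and their relational encodings satisfy the same \cmsomr sentences. By Lemma~\ref{lem:mso-to-fo} this has finitely many classes on each arity, and by Lemma~\ref{lem:congruence} it is a congruence, so the Homomorphism Theorem gives an $\hmonad$-algebra. What remains is to show this algebra is \emph{computable} in the sense of Definition~\ref{def:computable-algebra}: we must represent its elements finitely, enumerate the $n$-ary elements for each $n$, and compute the product of any element of $\hmonad A$.

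First I would fix the representation: an $n$-ary element of the algebra is named by the \cmsomr-theory of an $n$-ary sourced hypergraph, i.e.\ by the (finite) set of \cmsomr sentences over the vocabulary of Definition~\ref{def:relational-encoding} for arity $n$ that it satisfies. Up to logical equivalence there are finitely many such sentences, and they can be listed algorithmically, so a theory is a finite object. The subtlety is that not every syntactically consistent set of sentences is \emph{realised} by an actual sourced hypergraph; so to enumerate the genuine $n$-ary elements (task 1 of Definition~\ref{def:computable-algebra}) I would instead enumerate sourced hypergraphs themselves (there are countably many up to isomorphism, effectively listable) and compute for each its \cmsomr-theory, deduplicating as we go. Computing the theory of a \emph{given} finite sourced hypergraph is decidable: one evaluates each of the finitely many representative sentences directly, or equivalently computes the $\equiv_r$-type of the powerset model $\powerset_M \structh G$, which is a finite model.

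Second, for the product operation (task 2), an element of $\hmonad A$ is a finite sourced hypergraph whose hyperedges are labelled by theories $\tau_e$; we must output the theory of its flattening. This is exactly where the Compositional Encoding Lemma (Lemma~\ref{lem:good-encoding}) does the work: pick, for each label $\tau_e$, a representative sourced hypergraph $G_e$ realising it — obtainable by running the enumeration of task 1 until a hypergraph with that theory appears — substitute these into the skeleton $G$ to get a concrete element $\hat G \in \hmonad \Sigma$, and compute the theory of $\hat G$ by the decidable procedure above. Lemma~\ref{lem:good-encoding} guarantees the theory of $\hat G$ depends only on the theories $\tau_e$ and on $G$, not on the choice of representatives, so this is well-defined and computes the product in $\hmonad \Sigma /_{\approx}$. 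The recognising map sends a hypergraph to its theory, and membership in $L$ is read off by checking whether $\varphi$ belongs to the theory, which is a decidable finite set of accepting nullary theories.

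The main obstacle I anticipate is \emph{termination} of the enumeration step inside the product computation: to name a representative of a theory $\tau$ we search through all finite sourced hypergraphs until one with theory $\tau$ shows up, and we only know a priori that such a hypergraph exists when $\tau$ arose as the label of an actual hyperedge — which is exactly the case we are in, since the input to the product lives in $\hmonad A$ and its labels are genuine elements of the algebra, hence each is realised. So the search does terminate, but one must be careful to phrase task~1 and task~2 so that the algebra only ever manipulates realised theories; handling the unrealised (but consistent) theories, which never occur as algebra elements, is what would otherwise cause divergence. A secondary point worth a sentence is the non-elementary blow-up, inherited from counting the finitely many \cmsomr sentences, which we simply acknowledge rather than optimise.
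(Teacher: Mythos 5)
There is a genuine gap, and it is precisely the one the paper goes out of its way to avoid. Your task~1 procedure --- enumerate sourced hypergraphs, compute their \cmsomr-theories, deduplicate --- is only a semi-decision procedure for listing the $n$-ary elements of $\hmonad \Sigma /_{\approx}$. Definition~\ref{def:computable-algebra} requires that, on input $i$, the algorithm either produce the $i$-th element \emph{or correctly report that there are fewer than $i$ elements}; if the true number of realised theories on arity $n$ is $k < i$, your enumeration runs forever and can never certify that no $(k{+}1)$-st theory will appear. Knowing which theories are realised amounts to deciding satisfiability of counting \msoh on hypergraphs of unbounded treewidth, which is undecidable. The paper states this explicitly: one cannot compute the number of $\approx$-classes on a given arity, and consequently $\hmonad \Sigma /_{\approx}$ is \emph{not} a computable algebra for large enough $r$. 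So the target algebra of your construction does not satisfy the theorem's conclusion, even though it recognises the language and is finite on every arity. (Your product step, by contrast, is essentially sound: the theory of the flattening depends only on the label theories by Lemma~\ref{lem:congruence}, and searching for a representative of a theory that is promised to be realised does terminate. The failure is confined to enumeration.)

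The paper's proof escapes this by dropping surjectivity. It builds, by induction on quantifier rank, an iterated powerset algebra: the base case handles the atomic predicates and constants, and each induction step replaces the universe $A$ by its full powerset $\powerset A$ with the product $G \mapsto \set{\text{product of } G' : G' \in_\hmonad G}$. The universe of this algebra is a syntactically described finite set on each arity --- it contains many elements that are not in the image of the homomorphism from $\hmonad\Sigma$ --- so task~1 is trivial, and the product is computed structurally rather than by hunting for representatives. The paper even flags in a footnote that the resulting homomorphism is not surjective and that computing its image is undecidable; that non-surjectivity is exactly the price paid for computability, and it is the feature your proposal is missing. A repair along your lines would have to enlarge the universe to all candidate theories (realised or not) and then supply a product on the unrealised ones by some structural means --- which, once done, is essentially the powerset construction.
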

\begin{proof}[Rough sketch]
The construction is going to be uniform in the following sense: if we are given the sentence of counting \mso, then we can return the algorithms that are used in the representation of a computable algebra. 

The key observation, which dates back to B\"uchi, is that quantification in logic can be simulated on the algebra side by a powerset construction, defined as follows.
If $\alga$ is an $\hmonad$-algebra with universe $A$, then define its  \emph{powerset algebra}\footnote{Powerset algebras for sourced hypergraphs appear in Section 3.1.1 of the book by Courcelle and Engelfriet. The proof of Theorem~\ref{thm:computable-courcelle}   can also be seen as a special case of  Lemma 6.2 in: 
\myfootcite{Anonymous:2015vr}. It is worth pointing out that the proof is rather delicate, as witnessed by the following error in the aformentioned paper. In the proof of Lemma 6.2, it is claimed that this powerset algebra is well defined for every monad over sorted sets (and other categories). That claim is false, which I learned from  Julian Salamanca, together with the following counterexample: the  free group monad in the category of sets. Nevertheless, the powerset operation does work for the particular monad $\hmonad$. Together with Julian Salamanca and Bartek Klin, we are planning a paper that discusses the conditions on the monad which ensure that the powerset operation is well defined. 
} to be the algebra where the universe is the powerset $\powerset A$ (i.e.~$n$-ary elements are sets of $n$-ary  elements in $A$) and the product operation is defined by  
 \begin{align*}
G \in \hmonad (\powerset A) \qquad \mapsto \qquad \set{\text{product of }G' : \text{$G' \in_\hmonad G$ }}.
\end{align*}
In the above,  $G' \in_\hmonad G$ means that $G'$ is a sourced hypergraph that has the same vertices, hyperedges and sources as $G$, and the labeling is such that for each hyperedge $e$, the label of $e$ in $G'$ belongs to the set that labels $e$ in $G$.  One can show that this product  operation  satisfies the axioms required in an $\hmonad$-algebra. Furthermore, if $\alga$ is computable, then so is its   powerset.

The theorem follows immediately from the following claim. Another source of inspiration for the claim can be Shelah's ``theory'' function\myfootcite[Definition 2.2]{Shelah:1975jo}, which is designed in such a way that its product operation is computable. 
\begin{claim}\label{claim:computable-courcelle} For every  finite $M \subseteq \Nat$, every finite ranked set $\Sigma$ and every  $r \in \Nat$ there is a homomorphism 
\begin{align*}
  h : \hmonad \Sigma \to \alga
\end{align*}
such that $\alga$ is computable\footnote{Here is a delicate point: the homomorphism $h$ constructed in the claim is not surjective, and in fact computing its image is an undecidable problem. This is because if we could compute the image, then we could decide satisfiability for \msoh on hypergraphs, which is an undecidable problem.}, finite on every arity, and recognises all languages $L \subseteq \hmonad \Sigma$ that are definable in \cmsomr.  	
\end{claim}

The claim is proved by induction on $r$, i.e.~by induction on the quantifier rank.  For the induction base,  one goes  through all the predicates, such as modulo counting or adjacency, applied to the constants available in  \cmsomr (recall that we allow constants for sets defined by quantifier-frees formulas). For the induction step, one applies the powerset construction to the algebra from the induction assumption.
\end{proof}

\paragraph*{The syntactic homomorphism.} The main goal of this section is to complete the proof of Theorem~\ref{thm:decide-mso-not-count} about deciding if a language  of bounded treewidth  can be defined in \msoh without counting. By  Theorem~\ref{thm:aperio}, the question boils down to checking if a language is recognised by some algebra that is aperiodic and finite on every arity. To this end, the computable version of Courcelle's Theorem is not useful, because the algebra produced in the theorem might be not aperiodic, even if the language is recognised by some other algebra that is aperiodic. In fact, a closer inspection of the proof of  Theorem~\ref{thm:computable-courcelle} reveals that the algebras it produces are  \emph{never}  aperiodic in  nontrivial cases, i.e.~when the language is given by a sentence that uses  modulo counting.

To avoid the issues discussed above, we will compute the syntactic algebra for a language, see Definition~\ref{def:synt-algebra} below.  The syntactic algebra is minimal among those recognising the language; a corollary is that  if any recognising algebra is aperiodic, then the syntactic one is, too.

\begin{definition}[Syntactic homomorphism and algebra] \label{def:synt-algebra} A homomorphism
\begin{align*}
  h : \hmonad \Sigma \to \alga
\end{align*}
is called  \emph{syntactic} for  a language $L \subseteq \hmonadzero \Sigma$ if it recognises $L$, and for every surjective homomorphism
\begin{align*}
  g :\hmonad \Sigma \to \algb
\end{align*}
that also recognises $L$, there is  a surjective homomorphism $f$ which makes the following diagram commute
\begin{align*}
  \xymatrix{ \hmonad \Sigma \ar[r]^h \ar[dr]_g  & \alga \\ &  \algb \ar[u]_f}
\end{align*}
The algebra $\alga$ is called a \emph{syntactic algebra} of $L$. 
\end{definition}
The syntactic algebra is unique up to isomorphism, and syntactic homomorphisms are also unique up to isomorphisms on their target algebras.  
 This is why below  we will talk about \emph{the} syntactic homomorphism/algebra of a language.  The following result gives a characterisation of the syntactic homomorphism in terms of parallel composition equivalence.
% We use the name \emph{syntactic congruence} for the  kernel of the syntactic homomorphism, i.e.~the equivalence relation on $\hmonad \Sigma$ which identifies two elements with the same image under the syntactic homomorphism.
 % of HR-recognisability, we use an alternative characterisation\footnote{Ibid, Theorem 4.34}
 \begin{lemma}\myfootcite[Theorem 4.34]{Courcelle:2012wq}\label{lem:hr-synt-par}
  Let  $ L \subseteq \hmonadzero \Sigma$. Two sourced hypergraphs $G,G' \in \hmonad \Sigma$ have the same value under the syntactic homomorphism if and only if they have the same arity,  say $n$,  and
  	\begin{align*}
G \oplus H \in L \quad \text{iff} \quad 	G' \oplus H \in L \qquad \text{for every } H \in \hmonadn \Sigma n
\end{align*}
where membership in $L$ is tested after forgetting the sources
\end{lemma}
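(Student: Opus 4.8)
The plan is to identify the equivalence relation $\sim$ from the statement --- where $G \sim G'$ holds exactly when $G,G'$ have the same arity $n$ and $G \oplus H \in L \iff G' \oplus H \in L$ for every $H \in \hmonadn \Sigma n$ (membership tested after forgetting sources) --- with the kernel of the syntactic homomorphism. By the Homomorphism Theorem and the universal property in Definition~\ref{def:synt-algebra}, it suffices to prove three things: (i) $\sim$ is a congruence; (ii) $\sim$ recognises $L$; and (iii) every surjective homomorphism $g : \hmonad \Sigma \to \algb$ recognising $L$ satisfies $g(G) = g(G') \Rightarrow G \sim G'$. Given these, (i) and the Homomorphism Theorem produce a quotient algebra together with the quotient homomorphism $q : \hmonad \Sigma \to \hmonad \Sigma /_\sim$, which recognises $L$ by (ii); and (iii) says exactly that every surjective homomorphism recognising $L$ factors through $q$, so $q$ is syntactic and $\ker q = \sim$ is the relation described in the lemma.

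Parts (ii) and (iii) are short. For (ii): if $G \sim G'$ with $G$ of arity $0$, take $H$ to be the empty sourced hypergraph of arity $0$; then $G \oplus H = G$ and $G' \oplus H = G'$, so $G \in L \iff G' \in L$ and $L$ is a union of $\sim$-classes. For (iii): $G \oplus H$ is obtained by applying the product operation of $\hmonad \Sigma$ to the two-hyperedge hypergraph labelled by $\unitt G$ and $\unitt H$, so any homomorphism $g$ satisfies $g(G \oplus H) = g(G) \oplus g(H)$; moreover forgetting sources is one of the \hr-operations of Figure~\ref{fig:hr-operations}, hence a polynomial operation, so $g$ commutes with it. Thus $g(G) = g(G')$ forces $g$ to agree on the nullary hypergraphs underlying $G \oplus H$ and $G' \oplus H$ for every $H$ of arity $n$, and since $g$ recognises $L$ we get $G \sim G'$.

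The substantial part is (i). As in the proof of Theorem~\ref{thm:same-as-hr}, and using that $\hmonad$ is finitary (this is where finiteness of hypergraphs, not of the algebra, is used), it is enough to check that $\sim$ is compatible with each \hr-operation of Figure~\ref{fig:hr-operations}. Constants are nullary, so there is nothing to check. For parallel composition: if $G \sim G'$ (arity $n$) and $K \in \hmonadn \Sigma n$, then for any tester $H$ of arity $n$, associativity and commutativity of $\oplus$ give $(G \oplus K) \oplus H = G \oplus (K \oplus H)$ and similarly for $G'$, and $K \oplus H$ has arity $n$, so $G \sim G'$ applied to the tester $K \oplus H$ finishes it. For adding a fresh isolated source $n+1$, writing $G^+$ for the result: for a tester $H$ of arity $n+1$, the arity-$0$ hypergraph underlying $G^+ \oplus H$ is isomorphic to the one underlying $G \oplus \partial H$, where $\partial H$ restricts the source function of $H$ to $\set{1,\dots,n}$ (the fresh vertex is fused into a vertex of $H$ and disappears), and $\partial H$ has arity $n$. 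For source-restriction along an injection $f : \set{1,\dots,k} \hookrightarrow \set{1,\dots,n}$, writing $\theta_f G$ for the result: given a tester $H$ of arity $k$, form $\hat H$ of arity $n$ by adjoining $n-k$ fresh isolated source vertices and reindexing so that source $f(i)$ of $\hat H$ is source $i$ of $H$; then the arity-$0$ hypergraph underlying $\theta_f G \oplus H$ is isomorphic to the one underlying $G \oplus \hat H$ (the extra isolated sources are absorbed), and $\hat H$ has arity $n$, so $G \sim G'$ applied to $\hat H$ gives the claim.

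I expect the main obstacle to be the last two cases of (i), and in particular the bookkeeping in the ``padding'' argument for source restriction: one must make precise that fusing a source against a fresh isolated vertex is a no-op on the underlying source-forgotten hypergraph, and that this lets any $\oplus$-context behind a source restriction be rewritten as an $\oplus$-context of the matching arity. A secondary care point is the appeal to Theorem~\ref{thm:same-as-hr}: only the implication ``compatible with all \hr-operations $\Rightarrow$ congruence'' is reused, and it must be noted that it applies to $\hmonad$-algebras with arbitrarily many elements on each arity, since it relies only on $\hmonad$ being finitary and not on any finiteness of the algebra.
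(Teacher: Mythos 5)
Your proof is correct, but it is organised quite differently from the paper's. The paper introduces Myhill--Nerode equivalence (testing $G$ against all polynomial operations into arity zero) together with its linear variant, asserts that both coincide with parallel-composition equivalence ``for the same reasons as in Theorem~\ref{thm:same-as-hr}'', and then invokes an external general result about finitary monads, namely that Myhill--Nerode equivalence is the kernel of the syntactic homomorphism. You instead build the syntactic algebra by hand: you check that parallel-composition equivalence $\sim$ is a congruence recognising $L$, and that $\ker g \subseteq {\sim}$ for every surjective recognising homomorphism $g$, so that the quotient map satisfies the universal property of Definition~\ref{def:synt-algebra}. The combinatorial core is the same in both arguments --- contexts factor through the \hr-operations, and the \hr-operations interact with $\oplus$-testers by padding with fresh isolated sources --- but you carry out these tester manipulations explicitly (the $\partial H$ and $\hat H$ constructions), where the paper delegates them to the proof sketch of Theorem~\ref{thm:same-as-hr} and to the cited lemma. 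Your route buys self-containedness, and in particular it \emph{establishes} the existence of the syntactic homomorphism rather than presupposing it; the paper's route buys brevity and a reusable general lemma. Two small points of wording rather than substance: in your step (iii) the inclusion $\ker g \subseteq {\sim}$ means that the quotient map factors through $g$, not that $g$ factors through the quotient map --- the former is what Definition~\ref{def:synt-algebra} requires; and your one-argument-at-a-time check of compatibility with $\oplus$ suffices only because $\sim$ is transitive and $\oplus$ is commutative, which deserves a sentence.
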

\begin{proof}  Apart from parallel composition equivalence, we use  two other equivalence relations: Myhill-Nerode equivalence and its linear variant.  Define  \emph{Myhill-Nerode equivalence} to be the equivalence relation which identifies $G, G' \in \hmonad \Sigma$ if they have the same arity, say $n$, and 
	\begin{align}
f(G) \in L \quad \mbox{iff} \quad f(G') \in L \qquad \text{for every polynomial operation }f : \hmonadn \Sigma n \to \hmonadzero \Sigma.
\end{align}
The  linear variant is defined the same way, except that $f$ ranges only over 
 \emph{linear} polynomial operations, i.e.~ones that are induced by some sourced hypergraph where the  variable appears exactly once. 
 For the same reasons as discussed in the proof of Theorem~\ref{thm:same-as-hr}, all
 three equivalences (parallel composition, and both variants of Myhill-Nerode equivalence) are the same.  Nevertheless, it is the variant with parallel composition that is going to be most useful later on, and therefore this is the one that we use in the statement of the lemma.

To complete the proof, we use the following general result\myfootcite[Lemma 3.6]{Anonymous:2015vr}: in every finitary monad over a category of sorted sets, a language is regognised by an algebra that is finite on every sort if and only if Myhill-Nerode equivalence has finitely many equivalence classes on every sort.  The monad $\hmonad$ satisfies the assumption of the general result, because of the restriction to sourced hypergraphs with finitely many  hyperedges. Since Myhill-Nerode equivalence is the same as parallel composition equivalence, the result follows. 
\end{proof}

\begin{theorem}\label{thm:compute-syntactic-algebra}
If a hypergraph language is definable in counting \mso and has  bounded treewidth, then its  syntactic algebra is computable. 
\end{theorem}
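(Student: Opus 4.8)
The plan is to combine the Computable Courcelle's Theorem (Theorem~\ref{thm:computable-courcelle}) with the characterisation of the syntactic homomorphism via parallel-composition equivalence (Lemma~\ref{lem:hr-synt-par}). Given a sentence $\varphi$ of counting \msoh defining $L \subseteq \hmonadzero \Sigma$ and a bound $k$ on the treewidth of the members of $L$, first I would apply Theorem~\ref{thm:computable-courcelle} to obtain a computable $\hmonad$-algebra $h : \hmonad \Sigma \to \alga$, finite on every arity, recognising $L$. This gives us, by Definition~\ref{def:computable-algebra}, algorithms that enumerate the $n$-ary elements of $\alga$ and compute the product of any element of $\hmonad\alga$; in particular $h$ itself is computable, since $h(G)$ is the product in $\alga$ of the image of $G$ under $\hmonad(\unitt_\alga \circ h|_\Sigma)$ — one applies $h$ to the (finitely many) single-hyperedge generators and then the product operation. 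The goal is then to compute, from this presentation of $\alga$, a presentation of the syntactic algebra $\algsynt$ of $L$, together with the syntactic homomorphism.

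By Lemma~\ref{lem:hr-synt-par}, $\algsynt$ is the quotient of $\hmonad \Sigma$ (equivalently, of $\alga$, since $h$ factors through it) by parallel-composition equivalence: two $n$-ary elements $a,b$ are identified iff $a \oplus c \in F \Leftrightarrow b \oplus c \in F$ for every $n$-ary $c$, where $F = h^{-1}$-target of $L$ is the (finite, computable) set of accepting nullary elements. The key point is that this equivalence is \emph{decidable on $\alga$}: the operation $\oplus$ on $\alga$ is computable (it is $\unitt a \oplus \unitt b$ pushed through the product, and the parallel-composition sourced hypergraph is a fixed finite object), $\alga$ has only finitely many $n$-ary elements which can be enumerated, and membership in $F$ is decidable. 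So for each arity $n$ — and each arity need only be examined because the behaviour of an $n$-ary element as a context argument ranges over the finitely many $n$-ary elements — one enumerates the $n$-ary elements of $\alga$, computes for each its "accepting profile" $\{\,c : a \oplus c \in F\,\}$, and groups elements with equal profiles. This yields a computable enumeration of the $n$-ary elements of $\algsynt$ (pick canonical representatives) and a computable map $\alga \to \algsynt$, hence a computable syntactic homomorphism $\hmonad \Sigma \to \algsynt$.

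It remains to check that $\algsynt$, presented this way, is genuinely a \emph{computable} $\hmonad$-algebra in the sense of Definition~\ref{def:computable-algebra}: the enumeration of elements is handled above, and the product operation on $\algsynt$ is computed by lifting a given $t \in \hmonad\algsynt$ to $\hmonad\alga$ (choosing representatives), applying the computable product of $\alga$, and projecting back. Here one must verify this is well-defined — i.e. independent of the choice of representatives — which follows because parallel-composition equivalence is a congruence (it is the kernel of a homomorphism by Lemma~\ref{lem:hr-synt-par} and the Homomorphism Theorem), so the induced product on $\algsynt$ is exactly the one making the projection a homomorphism, and any representative lift computes it. The main obstacle I anticipate is \emph{not} the algorithmics of the quotient but making sure the bounded-treewidth hypothesis is actually used where needed: it enters precisely in guaranteeing that the parallel-composition profiles stabilise arity by arity and, more delicately, that Lemma~\ref{lem:hr-synt-par}'s characterisation applies — one should double-check whether the lemma as stated already gives finiteness of the syntactic algebra on every arity unconditionally (it does, via Myhill–Nerode having finitely many classes per sort in a finitary monad), in which case bounded treewidth is used only to ensure decidability upstream, or whether the quantification "for every $H \in \hmonadn \Sigma n$" in the lemma needs to be reduced to a decidable test, which is exactly where restricting attention to the finitely many $n$-ary elements of the recognising algebra $\alga$ does the work.
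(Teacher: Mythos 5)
There is a genuine gap, and it sits exactly at the point you flag for a ``double-check'' in your last sentence. Your construction quotients the algebra $\alga$ obtained from Theorem~\ref{thm:computable-courcelle} by the profile equivalence $a \mapsto \set{c : a \oplus c \in F}$, with the environment $c$ ranging over \emph{all} $n$-ary elements of $\alga$. But the syntactic congruence of Lemma~\ref{lem:hr-synt-par} quantifies environments over actual sourced hypergraphs $H \in \hmonadn \Sigma n$, i.e.~over the \emph{image} of the homomorphism $h$. The homomorphism produced by Theorem~\ref{thm:computable-courcelle} is not surjective, and the paper notes in a footnote that computing its image is undecidable (doing so would decide satisfiability of counting \msoh on arbitrary hypergraphs). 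An $n$-ary element of $\alga$ outside the image can separate $h(G)$ from $h(G')$ even though no genuine environment does, so your equivalence is in general strictly finer than the syntactic one; moreover your enumeration of the quotient would list spurious classes coming from elements outside the image. The obvious repair --- restrict $c$ to the image of $h$ --- is exactly the undecidable step. This is not a removable technicality: Example~\ref{ex:uncomputable-syntactic} exhibits a counting-\msoh-definable language of unbounded treewidth whose syntactic algebra is not computable, so any argument that, like yours, never actually uses the treewidth bound $k$ cannot be correct; your closing guess that ``restricting attention to the finitely many $n$-ary elements of the recognising algebra $\alga$ does the work'' is precisely the wrong resolution.

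The paper's proof uses the bound $k$ in two essential places, both via the decidability of satisfiability of counting \mso on hypergraphs of treewidth $\le k$. First, it shows that environments of treewidth $> k$ (after forgetting sources) may be discarded, because the complement of the bounded-treewidth set is an ideal under $\oplus$ disjoint from $L$, so such environments never distinguish anything. Second, within the bounded-treewidth environments it \emph{computes} a finite set $\Hh_n$ of representatives of the $\approx$-classes (the \cmsomr-theory equivalence from the proof of Courcelle's Theorem) and then determines, again by a satisfiability test, which profiles over $\Hh_n$ are actually realised by some bounded-treewidth hypergraph; this yields both the decidable congruence and the enumeration of the genuine elements of the syntactic algebra. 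That bounded-treewidth satisfiability test is the ingredient your proposal is missing; the rest of your outline (computability of $\oplus$, well-definedness of the quotient product via the Homomorphism Theorem) is fine once the environments are handled correctly.
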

The assumption on bounded treewidth is needed, as will be shown in  Example~\ref{ex:uncomputable-syntactic}.

\begin{proof}
The construction is going to be uniform in the following sense: if we are given the sentence of counting \mso, and a bound on the treewidth, then we can return the algorithms that are used in the representation of a computable algebra. 

Let  $L \subseteq \hmonadzero \Sigma$ be definable in counting \msoh. Let $\sim$ be the equivalence relation in Lemma~\ref{lem:hr-synt-par}.
By the lemma and the Homomorphism Theorem,  we need to show that $\hmonad \Sigma/_\sim$ is computable. 
Suppose that $L$ is defined by a sentence of counting \mso which has quantifier rank $r$ and uses moduli from a finite set $M \subseteq \Nat$. Define $\approx$ to be the equivalence relation from the proof of Courcelle's Theorem in Section~\ref{sec:courcelle-theorem}, i.e.~two sourced hypergraphs are equivalent under $\approx$ if they have the same arity and satisfy the same sentences of \cmsomr. 
As shown in the proof of Courcelle's Theorem,  $\approx$ is a congruence, has finitely many equivalence classes on every arity, and recognises $L$. The following claim is implicit in the proof of Lemma~\ref{lem:mso-to-fo}.

\begin{claim}\label{claim:finite-cmso}
	For every $n \in \Nat$, one can compute a finite set $\Delta_n$ of sentences of counting \mso such that $n$-ary sourced hypergraphs are  equivalent under $\approx$ if and only if they satisfy the same sentences from $\Delta_n$.
\end{claim}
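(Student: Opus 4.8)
\textbf{Proof plan for Claim~\ref{claim:finite-cmso}.}
The plan is to make effective the finiteness statement already proved (non-effectively) in Lemma~\ref{lem:mso-to-fo}, by carrying out a Hintikka-formula construction (equivalently, an effective enumeration of $q$-types) for counting \mso over the finite vocabulary used to encode $n$-ary sourced hypergraphs. The relation $\approx$ restricted to arity $n$ is exactly ``same arity $n$ and satisfy the same sentences of \cmsomr'', so it suffices to compute, for each $n$, a finite set of counting-\mso sentences whose truth values determine the \cmsomr-type of an $n$-ary sourced hypergraph.

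First I would fix $n$ and record that the vocabulary $\tau_n$ from Definition~\ref{def:relational-encoding} is finite and can be written down: one unary relation per label $a\in\Sigma$ (and $\Sigma$ is finite by assumption), one binary incidence relation for each $i$ up to the maximal arity of $\Sigma$, and $n$ source constants. Moreover $M$ is finite. Consequently, for any fixed finite tuple $\bar z$ of first- and second-order variables, the set of atomic counting-\mso formulas over $\tau_n$ with free variables among $\bar z$ is finite and computable: relational atoms, equalities, inclusions $z\subseteq z'$, singleton tests, and the counting atoms ``$|z|\equiv k\bmod m$'' for $m\in M$ and $k<m$.

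Next I would define, by induction on $q\le r$, a finite computable set $H_q(\bar z)$ of $q$-Hintikka formulas with free variables $\bar z$. For $q=0$, $H_0(\bar z)$ consists of all conjunctions in which each atomic formula over $\tau_n$ with variables $\bar z$ occurs positively or negated; there are finitely many, and every model with a valuation of $\bar z$ satisfies exactly one. For the step, a $(q{+}1)$-Hintikka formula over $\bar z$ is a conjunction that, for every extra first-order variable $w$ and every $\psi\in H_q(\bar z,w)$, contains $\exists w\,\psi$ or $\neg\exists w\,\psi$, and likewise for every extra second-order variable $W$ and every $\psi\in H_q(\bar z,W)$; there are finitely many such conjunctions and they can be listed. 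A routine back-and-forth argument (the Ehrenfeucht--Fra\"iss\'e game for counting \mso, as underlying the proof of Lemma~\ref{lem:mso-to-fo}; note that the cardinality-mod-$m$ information about a newly quantified set is resolved by the counting atoms at the base level $H_0$) shows that two models satisfy the same sentence of quantifier rank at most $q$ iff they satisfy the same member of $H_q(\emptyset)$, and that every such sentence is effectively equivalent to a disjunction of members of $H_q(\emptyset)$. Setting $\Delta_n := H_r(\emptyset)$ then proves the claim.

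\textbf{Main difficulty.} There is no deep obstacle here — the content is essentially bookkeeping already present in Lemma~\ref{lem:mso-to-fo} — but the point requiring care is the effectiveness of the finiteness bounds: one must check that $\tau_n$ is genuinely finite for each fixed $n$ (this uses $\Sigma$ finite, and is why that hypothesis is needed) and that the set of counting atoms is finite (this uses $M$ finite); both are built into the hypotheses. If one prefers to avoid re-running the game argument, an alternative is to invoke Lemma~\ref{lem:mso-to-fo} directly: the powerset vocabulary of $\powerset_M \structh G$ is finite up to logical equivalence of its constants, since there are only finitely many quantifier-free formulas over the finite $\tau_n$ up to equivalence, so the classical computable Hintikka construction for first-order logic applies, and translating first-order quantifiers back into quantification over singletons converts the resulting finite set of first-order sentences into the desired $\Delta_n$.
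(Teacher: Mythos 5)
Your proposal is correct and matches the paper's intent: the paper gives no explicit argument for this claim, stating only that it is implicit in the proof of Lemma~\ref{lem:mso-to-fo}, and your Hintikka-formula construction (or, equivalently, your alternative route through the powerset model of Lemma~\ref{lem:mso-to-fo}) is exactly the standard effective bookkeeping that remark alludes to. The one point genuinely requiring care --- that the counting atoms for moduli in the finite set $M$ and the finite vocabulary $\tau_n$ keep each level of the construction finite and computable --- is correctly identified and handled.
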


A corollary of the above  claim is that one can test if two $n$-ary sourced hypergraphs are equivalent under $\approx$ by simply evaluating the sentences from $\Delta_n$. In particular, $\approx$ is decidable\footnote{By decidable we mean that it is decidable if two sourced hypergraphs are equivalent under $\approx$. In contrast, one cannot compute the number of equivalence classes of $\approx$ on a given arity. The essential reason is that  satisfiability of counting \mso is undecidable for hypergraphs of unbounded treewidth. Using this undecidability result, one can easily show that there  is no way of computing the number of $\approx$-equivalence classes on arity $n$, assuming that the parameters $M,r$ and $n$ are all part of the input. If we assume that $M$ and $r$ are fixed and the only input parameter is $n$, then it is still the case that one cannot compute the number of equivalence classes, but the proof requires a bit more care, see  Example~\ref{ex:uncomputable-syntactic}.

In particular $\hmonad \Sigma / \approx$ is not a computable algebra, for certain choices of $M$ and $r$; actually whenever $r$ is large enough.
Contrast this with the construction in the computable version of Courcelle's Theorem from Theorem~\ref{thm:computable-courcelle}, which produced a computable algebra. The difference is that the homomorphism into $\hmonad \Sigma /_\approx$ is surjective, unlike the homomorphism used in Theorem~\ref{thm:computable-courcelle}, and this surjectivity comes at the cost of non-computability.}. 

Let $k$ be a bound on the treewidth of $L$. 
 Define $S \subseteq \hmonad \Sigma$ to  be those sourced hypergraphs which have treewidth $\le k$ after forgetting the sources\footnote{Contrast this with   the notion of treewidth for sourced hypergraphs used in the proof of Theorem~\ref{thm:treewidth-grammar}, where the sources were required to be in the root bag. Forgetting the sources is used in the ideal condition discussed in the proof of Claim~\ref{claim:two-other-parallel}
.}, and define $S_n$ to be the $n$-ary elements in $S$. 
% The complement of $K$ is an ideal with respect to parallel composition, i.e.~
%\begin{align*}
%G \not \in K \qquad \text{implies} \qquad G \oplus H \not \in K \qquad \text{for every $H$}	
%\end{align*}
%Define $\approx$ to be the equivalence relation on $\hmonad \Sigma$ which identifies sourced hypergraphs if they have the same rank and either: (i) they are both outside $K$; or (ii) they are both in $K$ and are equivalent under $\approx$.

\begin{claim}
Given $n \in \Nat$, one can compute a finite set $\Hh_n \subseteq S_n$ which represents all $\approx$-equivalence classes of $S_n$.
\end{claim}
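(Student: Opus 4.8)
The plan is to generate $S_n$ — indeed all sourced hypergraphs of underlying treewidth $\le k$ and bounded arity — by finitely many operations that respect the congruence $\approx$, and then to run a saturation algorithm whose termination is forced by the fact that $\approx$ has only finitely many classes on each arity.

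First I would fix a computable bound $N$ on the arities needed as intermediate results (for instance $N=n+k+1$) and produce a bounded-arity algebraic generation of $S$. Carrying out the argument behind Theorem~\ref{thm:treewidth-grammar} for \emph{sourced} hypergraphs — process a width-$\le k$ tree decomposition of the underlying hypergraph bottom-up, keeping among the sources, besides the current bag, exactly those vertices that are destined to be final sources — one obtains a computable finite set $\mathcal B\subseteq S$ of base sourced hypergraphs and a computable finite set $\mathcal F$ of polynomial operations in $\hmonad\Sigma$, all of arity (inputs and output) at most $N$, such that: (i) applying an operation of $\mathcal F$ to elements of $S$ of arity $\le N$ again gives an element of $S$ of arity $\le N$; and (ii) the set $\mathcal G$ obtained from $\mathcal B$ by iterating operations of $\mathcal F$ satisfies $\mathcal G\subseteq S$ and contains all of $S_n$. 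Writing $\mathcal G_m$ for the arity-$m$ part of $\mathcal G$, we thus have $S_n\subseteq\mathcal G_n$, $\mathcal G\subseteq S$, and $\mathcal G$ is recursively enumerable (enumerate all finite composition trees over $\mathcal B\cup\mathcal F$ and evaluate them).

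Next I would use two facts about $\approx$. It is a congruence (Lemma~\ref{lem:congruence}), hence compatible with every (polynomial) operation in $\mathcal F$. And, by Claim~\ref{claim:finite-cmso}, for each $m\le N$ one can compute a finite set $\Delta_m$ of sentences of \cmsomr such that two arity-$m$ sourced hypergraphs are $\approx$-equivalent iff they satisfy the same sentences from $\Delta_m$; in particular $\approx$ restricted to arity $m$ is decidable and has at most $2^{|\Delta_m|}$ classes, so $\mathcal G$ has at most $B:=\sum_{m\le N}2^{|\Delta_m|}$ many $\approx$-classes, a computable number. Now run the saturation: maintain a finite set $\Hh$ of pairwise $\approx$-inequivalent elements of $\mathcal G$, initialised with one representative of each $\approx$-class met by $\mathcal B$; then repeatedly, for every $p\in\mathcal F$ and every tuple of elements of $\Hh$ of arities matching $p$, evaluate $p$ on that tuple (the result lies in $\mathcal G$ by (i)) and, if its $\approx$-class is not already represented in $\Hh$, add it; halt as soon as a full round adds nothing, and output $\Hh_n:=$ the arity-$n$ elements of $\Hh$. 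Each non-halting round strictly increases the number of $\approx$-classes represented in $\Hh$, which is at most $B$, so the algorithm halts and detects it. On halting, $\Hh$ represents every $\approx$-class of $\mathcal B$ and is closed under $\mathcal F$ modulo $\approx$; since $\approx$ is a congruence, induction on a generating derivation shows that every element of $\mathcal G$ — in particular every element of $S_n\subseteq\mathcal G_n$ — is $\approx$-equivalent to an element of $\Hh$, which then has the same arity because $\approx$ preserves arity. Since $\Hh\subseteq\mathcal G\subseteq S$ and the elements of $\Hh$ are pairwise $\approx$-inequivalent, $\Hh_n$ is a finite subset of $S_n$ with exactly one representative of every $\approx$-class of $S_n$.

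The main obstacle is the first step: the bounded-arity generation of $S$ by operations that preserve both ``arity $\le N$'' and ``underlying treewidth $\le k$'' while being polynomial operations of $\hmonad\Sigma$ (so as to be compatible with $\approx$). The naive choice of all \hr-operations of arity $\le N$ fails, because a parallel composition of two arity-$N$ sourced hypergraphs fuses $N>k+1$ vertices and can raise the treewidth above $k$, leaving $S$. The fix is to fuse only bag-sized groups ($\le k+1$ vertices) and to carry the already-committed future sources as pairwise non-interacting sources — padding the other argument of each composition with isolated vertices so that these carried sources are never actually identified with anything — and then to check that this still reconstructs every element of $S_n$ using only arities $\le N$ and never leaves $S$. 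This is a routine but fiddly adaptation of the proof of Theorem~\ref{thm:treewidth-grammar}; once it is in place, the decidability and finiteness facts about $\approx$ make the termination and correctness of the saturation immediate.
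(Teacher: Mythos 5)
Your route is genuinely different from the paper's, and the saturation half of it is sound: once you have a computable finite $\mathcal B$ and $\mathcal F$ whose generated set $\mathcal G$ satisfies $S_n \subseteq \mathcal G \subseteq S$, the fixpoint computation terminates and is correct exactly as you argue, since $\approx$ is a decidable congruence with finitely many classes per arity. The gap is in the generation step, and it is not merely fiddly: your closure property (i) is false as stated, and the fix you sketch does not repair it. Choose $k$ so that paths lie in $S$ but cycles do not, and let $G_1,G_2$ both be the path $a-c-b$ with $a$ and $b$ as the two sources; each lies in $S$ with arity $2\le N$, the parallel composition fuses only two vertices (so it is a ``bag-sized group''), yet $G_1\oplus G_2$ is a $4$-cycle and leaves $S$. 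The real safety condition is that the fused vertices jointly sit inside a single bag of a width-$\le k$ decomposition of \emph{each} argument; this is a semantic invariant on the arguments, not a syntactic restriction on the operation, so it must be formulated and carried through an induction on derivations (this is exactly what makes the bounded-arity generation behind Theorem~\ref{thm:treewidth-grammar} work, where the invariant is ``all sources in the root bag''). Since the $n$ final sources of elements of $S_n$ are scattered by definition, you need a two-tier source discipline — committed final sources that are never fused versus bag sources that are always co-located in a root bag — together with a proof that every operation preserves the invariant and that the operations still reach all of $S_n$. None of that is set up, and without it the inclusions $\mathcal G\subseteq S$ and hence $\Hh_n\subseteq S_n$ are unproven; filtering $\Hh_n$ after the fact does not help, because a class of $S_n$ might then only be represented by an element outside $S_n$.

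For comparison, the paper does not generate $S_n$ at all. By Claim~\ref{claim:finite-cmso} the $\approx$-class of an $n$-ary sourced hypergraph is determined by which sentences of the finite set $\Delta_n$ it satisfies, so one loops over all $\Gamma\subseteq\Delta_n$, invokes the decidability of counting \mso satisfiability on hypergraphs of bounded treewidth (a black box from Courcelle and Engelfriet) to test whether some member of $S_n$ realises exactly the type $\Gamma$, and, whenever the answer is yes, finds a witness by exhaustive enumeration (which terminates because existence has already been certified). If you want to keep the automata-style saturation, the missing generation lemma is provable, but it carries essentially the whole weight of the claim rather than being a routine adaptation.
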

\begin{proof}
For every $\Gamma \subseteq \Delta_n$, we test if there is some sourced hypergraph in $S_n$ which satisfies all sentences from $\Gamma$ and violates all sentences from $\Delta_n -\Gamma$.  This test can be done effectively, because satisfiability of counting \mso is decidable on hypergraphs of bounded treewidth\myfootcite[Theorem 5.80]{Courcelle:2012wq}.  Once we know that there exists sourced hypergraph which passes the test, we can find an example, e.g.~through exhaustive enumeration.
\end{proof}

Using the set $\Hh_n$ from the above claim, we  present an alternative characterisation of  $\sim$, see~\eqref{eq:smalltw-upto-h-envs} in the  claim below, which will allow us to decide $\sim$.

\begin{claim}\label{claim:two-other-parallel} The three conditions below are equivalent for every  $G,G' \in \hmonadn \Sigma n$:
\begin{eqnarray}
\label{eq:all-h-envs}G \oplus H \in L \quad \text{iff} \quad G' \oplus H \in L  &\qquad \text{for every }&H \in \hmonadn \Sigma n\\
\label{eq:smalltw-h-envs} G \oplus H \in L \quad \text{iff} \quad G' \oplus H \in L & \qquad \text{for every }&H \in S_n\\
\label{eq:smalltw-upto-h-envs} G \oplus H \in L \quad \text{iff} \quad G' \oplus H \in L & \qquad \text{for every }&H \in \Hh_n.	
\end{eqnarray}
\end{claim}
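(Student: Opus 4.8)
The plan is to establish the cycle $\eqref{eq:all-h-envs}\Rightarrow\eqref{eq:smalltw-h-envs}\Rightarrow\eqref{eq:smalltw-upto-h-envs}\Rightarrow\eqref{eq:all-h-envs}$. The first two implications are immediate, since passing from $\eqref{eq:all-h-envs}$ to $\eqref{eq:smalltw-h-envs}$ to $\eqref{eq:smalltw-upto-h-envs}$ only shrinks the range of the universally quantified context $H$, along the inclusions $\Hh_n\subseteq S_n\subseteq\hmonadn\Sigma n$. So the real work is the implication $\eqref{eq:smalltw-upto-h-envs}\Rightarrow\eqref{eq:all-h-envs}$, and I would prove it by going through $\eqref{eq:smalltw-h-envs}$: first $\eqref{eq:smalltw-upto-h-envs}\Rightarrow\eqref{eq:smalltw-h-envs}$, using that $\approx$ is a congruence recognising $L$ and that $\Hh_n$ represents all $\approx$-classes of $S_n$; then $\eqref{eq:smalltw-h-envs}\Rightarrow\eqref{eq:all-h-envs}$, which is where the assumption that $L$ has bounded treewidth gets used.

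For $\eqref{eq:smalltw-h-envs}\Rightarrow\eqref{eq:all-h-envs}$, fix an arbitrary $H\in\hmonadn\Sigma n$; we want $G\oplus H\in L\iff G'\oplus H\in L$ (membership tested after forgetting the sources). If neither side holds the equivalence is vacuous, so suppose one of $G\oplus H$, $G'\oplus H$ lies in $L$. Since every hypergraph in $L$ has treewidth at most $k$, that hypergraph -- with sources forgotten -- has treewidth at most $k$. Now $H$, with its sources forgotten, is (isomorphic to) a sub-hypergraph of $G\oplus H$: the vertices of $H$ embed into those of $G\oplus H$, with the $i$-th source of $H$ sent to the fused $i$-th source, and the hyperedges of $H$ are among the hyperedges of $G\oplus H$, compatibly with incidence and labels. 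Treewidth does not increase when passing to a sub-hypergraph (given a tree decomposition, intersect every bag with the smaller vertex set; the two conditions of Definition~\ref{def:tree-decomposition} are preserved). Hence $H$ with its sources forgotten has treewidth at most $k$, i.e.~$H\in S_n$, and $\eqref{eq:smalltw-h-envs}$ applied to this $H$ gives the desired equivalence.

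For $\eqref{eq:smalltw-upto-h-envs}\Rightarrow\eqref{eq:smalltw-h-envs}$, recall that $\approx$ is a congruence recognising $L$, and that $\Hh_n$ contains an $\approx$-representative of every element of $S_n$. The point is that, for any fixed $F\in\hmonadn\Sigma n$, the map sending $x$ to $F\oplus x$ with all sources forgotten is a polynomial operation $\hmonadn\Sigma n\to\hmonadzero\Sigma$ in the algebra $\hmonad\Sigma$, being a composition of \hr-operations (parallel composition with $F$ in one coordinate, followed by the source-forgetting operation down to arity $0$). A congruence is compatible with every polynomial operation, and $\approx$ recognises $L$, so $H\approx H'$ implies $F\oplus H\in L\iff F\oplus H'\in L$ for all same-arity $F,H,H'$. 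Given $H\in S_n$, choose $H'\in\Hh_n$ with $H\approx H'$; then $G\oplus H\in L\iff G\oplus H'\in L\iff G'\oplus H'\in L\iff G'\oplus H\in L$, where the middle equivalence is $\eqref{eq:smalltw-upto-h-envs}$ applied to $H'$ and the two outer ones use the previous sentence with $F=G$ and $F=G'$.

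The one genuinely substantive point -- everything else is bookkeeping plus the already-established congruence property of $\approx$ -- is the sub-hypergraph/treewidth step in the second paragraph. There one must be careful that forgetting the sources genuinely presents $H$ as a sub-hypergraph of $G\oplus H$ (the sources of $H$ are fused into $G$, not deleted) and that $S_n$ was defined precisely via the treewidth after forgetting sources, so that the bound $k$ coming from $L$ applies to exactly the right quantity. This is the only place the bounded-treewidth hypothesis on $L$ is used.
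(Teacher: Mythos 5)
Your proof is correct and follows essentially the same route as the paper: the forward implications by shrinking the range of environments, \eqref{eq:smalltw-upto-h-envs}$\Rightarrow$\eqref{eq:smalltw-h-envs} via the congruence $\approx$ and the representatives $\Hh_n$, and \eqref{eq:smalltw-h-envs}$\Rightarrow$\eqref{eq:all-h-envs} via monotonicity of treewidth under sub-hypergraphs (your contrapositive phrasing is exactly the paper's observation that the complement of $S$ is an ideal for $\oplus$, combined with $L \subseteq S$).
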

\begin{proof}
In this proof, we use the name \emph{environment} for the sourced hypergraph $H$ in any of the conditions~\eqref{eq:all-h-envs}, \eqref{eq:smalltw-h-envs} and~\eqref{eq:smalltw-upto-h-envs}. 
Clearly we have the implications~\eqref{eq:all-h-envs} $\Rightarrow$ \eqref{eq:smalltw-h-envs} $\Rightarrow$ \eqref{eq:smalltw-upto-h-envs} because more and more restrictions are placed on the environments. It remains to prove the converse implications. 

For the implication~\eqref{eq:all-h-envs} $\Leftarrow$ \eqref{eq:smalltw-h-envs}, we observe that the complement of $S$ is  an ideal with respect to parallel composition, i.e.~
\begin{align*}
G \not \in S \qquad \text{implies} \qquad G \oplus H \not \in S \qquad \text{for every $H$}	.
\end{align*}
This is  because  removing vertices and hyperedges can only make treewidth go down (here it is important that we measure treewidth after forgetting the sources).
Since $L \subseteq S$, it follows that using environments outside $S$  in~\eqref{eq:all-h-envs} will automatically make the equivalence  true, because both sides will be outside $L$. 

To prove the implication~\eqref{eq:smalltw-h-envs} $\Leftarrow$ \eqref{eq:smalltw-upto-h-envs} we use the fact that $\approx$ is a congruence recognising $L$. Therefore, whether or not the equivalence in~\eqref{eq:smalltw-h-envs} holds depends only on the $\approx$-class of an environment, and thus it is enough to test the equivalences using one environment from each $\approx$-equivalence class.
\end{proof}

 We are now ready to finish the proof of the theorem.
For an arity $n \in \Nat$, consider the following function 
\begin{align*}
  h_n : \hmonadn \Sigma n \to \powerset(\Hh_n) \qquad G \mapsto \set{H \in \Hh_n : G \oplus H \in L}.
\end{align*}
Define $A$ to be the ranked set where the $n$-ary elements are the elements of the image of  $h_n$, and define 
\begin{align*}
  h : \hmonad \Sigma \to A
\end{align*}
to be the function that works as $h_n$ on arity $n$. This function is surjective by definition of $A$.  By Claim~\ref{claim:two-other-parallel}, sourced hypergraphs have the same images under $h$ if and only if they are equivalent under $\sim$. Therefore, $A$ can be equipped with a product operation so that it becomes an  $\hmonad$-algebra $\alga$, and $h$ becomes  becomes the syntactic homomorphism. The product operation in $\alga$  is computable as required by item 2 of Definition~\ref{def:computable-algebra}: given an expression $G \in \hmonad A$, compute some inverse image under $\hmonad h$, then apply the product operation in $\hmonad \Sigma$, and then compute the value under $h$. The following claim establishes item 1 in Definition~\ref{def:computable-algebra}, i.e.~that one can enumerate the  elements of $A$ on each arity, and thus finishes the proof of the theorem.

\begin{claim}\label{claim:computable-images}
	For every $n$, one can compute the image of $h_n$.
\end{claim}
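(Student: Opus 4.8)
For every $n$, one can compute the image of $h_n$.

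The plan is to reduce the computation of $\mathrm{image}(h_n)$ to finitely many decidable satisfiability queries over hypergraphs of bounded treewidth. Recall that $h_n(G) = \{H \in \Hh_n : G \oplus H \in L\}$, so each value of $h_n$ is a subset of the finite set $\Hh_n$; the image is therefore some subset of $\powerset(\Hh_n)$, and it suffices to decide, for each candidate $\Gamma \subseteq \Hh_n$, whether there exists $G \in \hmonadn \Sigma n$ with $h_n(G) = \Gamma$, i.e.~with $G \oplus H \in L$ for all $H \in \Gamma$ and $G \oplus H \notin L$ for all $H \in \Hh_n \setminus \Gamma$.

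The key observation is that in each such query we may restrict the witness $G$ to lie in $S_n$, i.e.~to have bounded treewidth after forgetting sources. Indeed, by the ideal property of the complement of $S$ established in the proof of Claim~\ref{claim:two-other-parallel} (removing vertices and hyperedges can only decrease treewidth, measured after forgetting sources), if $G \notin S_n$ then $G \oplus H \notin S$ for every $H$, and since $L \subseteq S$ we get $G \oplus H \notin L$ for every $H$; hence such a $G$ realises only the value $\emptyset$, which is anyway realised by any $G \in S_n$ that is itself ``bad'' against all of $\Hh_n$ (and in the degenerate case it can be checked directly). So for every nonempty candidate $\Gamma$, a witness must come from $S_n$, and for $\Gamma = \emptyset$ we can test realisability separately and directly. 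Now the condition ``$G \in S_n$ and $G \oplus H \in L$ for $H \in \Gamma$ and $G \oplus H \notin L$ for $H \in \Hh_n \setminus \Gamma$'' is, for fixed $n$, $\Gamma$ and the finitely many fixed hypergraphs in $\Hh_n$, expressible as a single sentence $\psi_{n,\Gamma}$ of counting \msoh over the vocabulary of $n$-ary sourced hypergraphs: the bound on treewidth after forgetting sources is \msoh-definable (forbidden minors), parallel composition with a fixed hypergraph $H$ and forgetting sources is a fixed \mso-interpretation-like construction on $G$, and membership in $L$ is given by the fixed counting-\msoh sentence $\varphi$, so by the Backwards Translation Lemma (Lemma~\ref{lem:backwards}, in its counting variant) each conjunct pulls back to a counting-\msoh sentence about $G$ alone.

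Finally, each sentence $\psi_{n,\Gamma}$ asserts the existence of a hypergraph $G$ of bounded treewidth satisfying a counting-\msoh property, which is decidable by Courcelle's decidability theorem for \mso (and its counting extension) on classes of bounded treewidth\myfootcite[Theorem 5.80]{Courcelle:2012wq}. Running this decision procedure over the finitely many $\Gamma \subseteq \Hh_n$, together with the direct check for $\Gamma = \emptyset$, yields exactly the subset of $\powerset(\Hh_n)$ that equals $\mathrm{image}(h_n)$; moreover whenever a $\Gamma$ is realisable we may enumerate hypergraphs of treewidth $\le k$ until we find an explicit representative, which is what is needed to complete the construction of the computable algebra $\alga$. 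The main obstacle is purely bookkeeping: writing each conjunct ``$G \oplus H \in L$'' as a counting-\msoh sentence about $G$ alone requires that parallel composition with a fixed $H$ followed by forgetting sources be realised by an operation to which backwards translation applies; this is routine since $H$ is a fixed finite object, so the operation only adds finitely many named elements and a fixed pattern of incidences, but it must be spelled out carefully to stay within the \mso-transduction framework of Lemma~\ref{lem:backwards}.
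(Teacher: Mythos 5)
Your proposal is correct and follows essentially the same route as the paper's own proof: for each candidate $\Gamma \subseteq \Hh_n$, express ``some $G$ (of bounded treewidth after forgetting sources) satisfies $G \oplus H \in L$ exactly for $H \in \Gamma$'' as a counting \msoh sentence and decide it using decidability of counting \mso satisfiability on hypergraphs of bounded treewidth. The extra detail you supply (restricting witnesses to $S_n$ via the ideal property, and justifying that $G \oplus H \in L$ is a counting-\msoh property of $G$ alone for fixed $H$) is exactly what the paper leaves implicit.
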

\begin{proof}
 By the proof of Claim~\ref{claim:two-other-parallel}, the function $h_n$ produces the empty set on every argument outside $S_n$. Therefore, it remains to compute the image of $h_n$ on arguments from $S_n$. By definition, the image $h_n(G)$ is equal to   
a set $\Gamma \subseteq \Hh_n$  if and only if  
\begin{align*}
  \bigwedge_{H \in \Gamma} H \oplus G \in L \qquad \text{and} \qquad  \bigwedge_{H \in \Hh_n - \Gamma} H \oplus G \not \in L
\end{align*}
It is not hard to see that the above, seen as a  property of $G$, is definable in counting \mso. Since counting \mso has decidable satisfiability on hypergraphs of bounded treewidth, we can effectively check if there is some $G \in S_n$ which satisfies the above property.
\end{proof}
 \end{proof}

In the above theorem, we used the assumption on bounded treewidth to construct the syntactic algebra. The following example shows that this assumption is essential, i.e.~the syntactic algebra is not computable for some languages definable in counting \msoh that have unbounded treewidth. Naturally, the reason is that without bounds on treewidth, satisfiability of \msoh becomes undecidable.

\begin{myexample}\label{ex:uncomputable-syntactic} A language has a syntactic algebra with one nullary element (and also on all other arities) if and only if it is full or empty. Since checking if a sentence of counting \msoh is full or empty is undecidable, it follows that there is no uniform way of computing a syntactic algebra given a sentence of counting \msoh.   More care is required to produce a single sentence of counting \mso whose syntactic algebra is not computable; we do this below.

 Define a \emph{tiling system} to be a tuple
\begin{align*}
  \Tt = \tuple { \underbrace{C}_{\substack{\text{set of}\\ \text{colours}}}, \underbrace{H,V \subseteq C \times C}_{\text{horizontal constraint}},\underbrace{c_0 \in C}_{\text{corner colour}} }
\end{align*}
A \emph{solution} to the tiling system is a defined to be a coloured directed graph satisfying the properties described in the following picture:
\mypic{46}
A solution can be viewed as a hypergraph over a ranked set that has  two binary labels (horizontal and vertical edges) and one unary label for each colour. Given a tiling system $\Tt$, it is straightforward to write a sentence of \msoh, in fact first-order logic, whose finite models are  exactly  the solutions of $\Tt$. 
Using a universal Turing machine, one can find a  tiling system $\Tt$, such that the following language is undecidable
\begin{align*}
  L = \set{ w \in C^+ : \text{in some solution, $w$ is an infix of the labels in some row}}
\end{align*} 
Consider the  set of solutions of the tiling system $\Tt$, and let $\sim$ be the syntactic  congruence of this hypergraph language (i.e.~the equivalence relation which identified two sourced hypergraphs if the have the same image under the syntactic homomorphism). On every arity, there is a distinguished ``error'' equivalence class of $\sim$, which is represented for example by any sourced hypergraph where some vertex has two outgoing horizontal edges. For a word $w \in C^+$ of length $n$, consider an $n$-ary  sourced hypergraph $G_w$ described in the following picture:
\mypic{47}
It is not hard to see that $w \in L$ if and only if the $\sim$-equivalence class of $G_w$ is not the ``error'' equivalence class. Since $L$ has undecidable membership, it follows that $\sim$ is undecidable.
\end{myexample}

\paragraph*{Testing aperiodicity}  Recall  that we have not yet proved Theorem~\ref{thm:decide-mso-not-count}, which says  that  the following problem is decidable.
	\begin{itemize}
		\item {\bf Input.} A number $k \in \Nat$ and a sentence $\varphi$  of counting \msoh.
		\item {\bf Question.} Is the following language definable in   \msoh without counting?
		\begin{align*}
\set{ G  : \text{$G$ is an undirected graph of  treewidth $\le k$ that satisfies $\varphi$}}
\end{align*}
	\end{itemize}
	We finish this section by proving the above  theorem.
	
\begin{proof}[Proof of Theorem~\ref{thm:decide-mso-not-count}]
Let $L$ be the language in the question. 
	By Theorem~\ref{thm:aperio}, $L$ is definable in \msoh without counting if and only if it is recognised by a $\hmonad$-algebra that is aperiodic and finite on every arity. Actually, by the remark in Footnote~\ref{foot:use-small-ranks}, one can compute some $n_0 \in \Nat$ such that $L$ is definable in \msoh without counting if and only if it is recognised by some algebra which is
	\begin{itemize}
		\item [(*)] finite on every arity and aperiodic on arities $\set{0,1,\ldots,n_0}.$
	\end{itemize}
	Since property (*) is preserved under surjective homomorphic images, it follows that $L$ is recognised by some algebra satisfying (*) if and only if the syntactic algebra of $L$ satisfies (*).  Apply Theorem~\ref{thm:compute-syntactic-algebra}, yielding a representation of the syntactic algebra, and then test if (*) holds by computing the multiplication tables for the semigroups corresponding to $\oplus$ on arities $\set{0,1,\ldots,n_0}$.	
\end{proof}

\section{Vertex replacement}
\label{sec:vertex}
This section is about a second monad for graphs,  called  $\vmonad$, which stands  for \emph{vertex replacement}, and is inspired by Courcelle's \vr-algebras. The structure of this section is meant to be parallel to the sections on $\hmonad$, i.e.~after defining the monad, we show that: its notion of recognisability coincides with the notion of \vr-recognisability that inspires the monad $\vmonad$; finite sets of polynomial operations generated languages of bounded width (in the case of the monad $\vmonad$, the appropriate notion of width is cliquewidth); and languages definable in counting \mso are recognisable. Since all of these results are existing results about \vr-recognisable languages that are on rephrased in the language of monads, and we have already discussed the monad approach to graphs on the example of the monad $\hmonad$, the proofs in this section are mainly rough sketches. 

Also, we do not include a discussion of computable algebras. Although analogues of Theorems~\ref{thm:computable-courcelle} (every language defined in counting \mso is recognised by a computable algebra) and~\ref{thm:compute-syntactic-algebra} (if the language furthermore has bounded width, in this case cliquewidth, then its syntactic  algebra can be computed), we do not include the proofs, mainly because we do not have a good application. (The application for $\hmonad$, i.e.~that \mso without counting corresponds to aperiodicity, was based on a result about \mso transductions computing tree decompositions, Theorem~\ref{thm:mso-transduction-treedecomp}, which is not known to hold for clique decompositions.)

\paragraph*{$\vmonad$-hypergraphs.} We begin by describing the notion of hypergraph used for the monad $\vmonad$.
In the hypergraphs for the monad $\hmonad$, arities were associated to hyperedges. For the monad $\vmonad$, arities will be associated to  vertices -- hence we use the name  hypervertices -- while edges will be binary, like in a directed graph. 

The monad $\vmonad$, like $\hmonad$, also uses ranked sets but, unlike  $\hmonad$, zero arities are not  allowed. Throughout this section, by \emph{ranked set} we mean a set were every element is assigned a nonzero natural number called its \emph{arity}. 
Define a \emph{corner} of a ranked set to be an element $v$ of the ranked set, together with a distinguished $i \in \set{1,\ldots,\text{arity of $v$}}$. We write $v[i]$ for such a corner.
% and we write  $\corners V$ for the set of all corners of a ranked set $V$.

\begin{definition}[$\vmonad$-hypergraphs]\label{def:v-hypergraphs}
A $\vmonad$-hypergraph consists of:
	\begin{enumerate}
			\item a nonempty ranked set $V$ of \emph{hypervertices};
			\item a ranked set $\Sigma$ of labels and a rank preserving labelling $V \to \Sigma$;
		\item a binary edge relation on the corners of $V$.
	\end{enumerate}
\end{definition}

The binary relation in item 3 is meant to represent the graph structure.  If $\Sigma$ has  only one unary label ``vertex'', then the hypervertices and their corners are the same thing, and   a $\vmonad$-hypergraph   is the same as a directed graph. Ultimately, the set $\vertvoc$ is the  set of labels that we care about. Other kinds of labels, and hypervertices with arities other than 1, are only used to define the monad structure. 

Here is how we draw $\vmonad$-hypergraphs:
\mypic{48}

To define the monad structure, we  add \emph{ports} to $\vmonad$-hypergraphs. These are like the sources of  the monad $\hmonad$, except that instead of distinguishing $n$ vertices, we  group the vertices (actually, corners) into $n$ groups.

\begin{definition}[$ \vmonad$-hypergraph with ports]\label{def:sourced-v-hypergraphs}
For $n \in \set{1,2,\ldots}$, define an $n$-ary $\vmonad$-hypergraph with ports to be a $\vmonad$-hypergraph with hypervertices $V$, together with a \emph{port} function, not necessarily surjective, from its corners to  $  \set{1,\ldots,n}$. We write $i$-ports for corners with value $i$ under the source function.
\end{definition}

A 1-ary $\vmonad$-hypergraph with ports is the same thing as $\vmonad$-hypergraphs without port information, and hence arity 1 will play the same role -- ignoring the port/source annotation -- for the monad $\vmonad$ as was played by arity 0 for the monad $\hmonad$.
Here is how we draw $\vmonad$-hypergraphs with ports. 
\mypic{25}

 The point of the drawing above  is to underline that a   $\vmonad$-hypergraph with ports can be itself used as a label for a hypervertex.
We now define the monad structure on sourced $\vmonad$-hypergraphs.
\begin{definition}[The monad $\vmonad$]\label{def:vmonad}Define $\vmonad$ to be the following monad. 
\begin{itemize}
	\item \emph{Category.} The category is ranked sets and arity-preserving functions, with the arities being $\set{1,2,\ldots}$.
	\item \emph{On objects.} For a ranked set $\Sigma$,    $\vmonad \Sigma$ is  the ranked set of finite\footnote{ Finitely many hypervertices (and therefore finitely many corners, since hypervertices have finite arities).
}  $\vmonad$-hypergraphs with ports,  using labels $\Sigma$,  modulo isomorphism\footnote{An isomorphism is a bijection of the vertices which preserves all of the information in   Definitions~\ref{def:v-hypergraphs} and~\ref{def:sourced-v-hypergraphs}.}.
\item \emph{On morphisms.} For an arity-preserving function $f : \Sigma \to \Gamma$,  the function $\vmonad f : \vmonad \Sigma \to \vmonad \Gamma$  changes the labelling of its input according to  $f$.
\item \emph{Unit.}  The unit of  an $n$-ary letter $a \in \Sigma$ is defined to be the $n$-ary  $\vmonad$-hypergraph with ports which has one $n$-ary hypervertex $v$ labeled by $a$, and where the port function is $v[i] \mapsto i$, as in the following picture:
\mypic{30}
\item \emph{Flattening.} The flattening of $G \in \vmonad \vmonad \Sigma$ is defined as follows, see Figure~\ref{fig:vmonad-flattening}. 
%\item  We take the disjoint union
%\begin{align*}
%\coprod_{v \in \text{ hypervertices of $G$}} \text{label of $v$}
%\end{align*}
%and modify the source information and vertices as follows. The source function: 
\begin{itemize}
\item \emph{Hypervertices and their labels.} Hypervertices are pairs $(v,w)$ such that $v$ is a hypervertex of $G$ and $w$ is a hypervertex in the label of $v$. The arity and label are inherited from $w$.
\item \emph{Ports.}  The number of ports is inherited from $G$. If $w[i]$ is a $j$-port in the label of $v$, and $v[j]$ is a $k$-port in $G$, then $(v,w)[i]$ is a $k$-port in the flattening.
	\item \emph{Edges.}  The flattening has an edge 
	\begin{align*}
(v,w)[i] \to (v',w')[i']
\end{align*}
if one of the two conditions is satisfied ($j$ is defined to be the port number of $w[i]$ in the label of $v$, likewise for $j'$):
\begin{align*}
v=v' \text{ and }w[i]\to w'[i'] \qquad \text{or} \qquad v[j] \to v'[j']	
\end{align*}
\end{itemize}
 \end{itemize}
\end{definition}

\begin{figure}
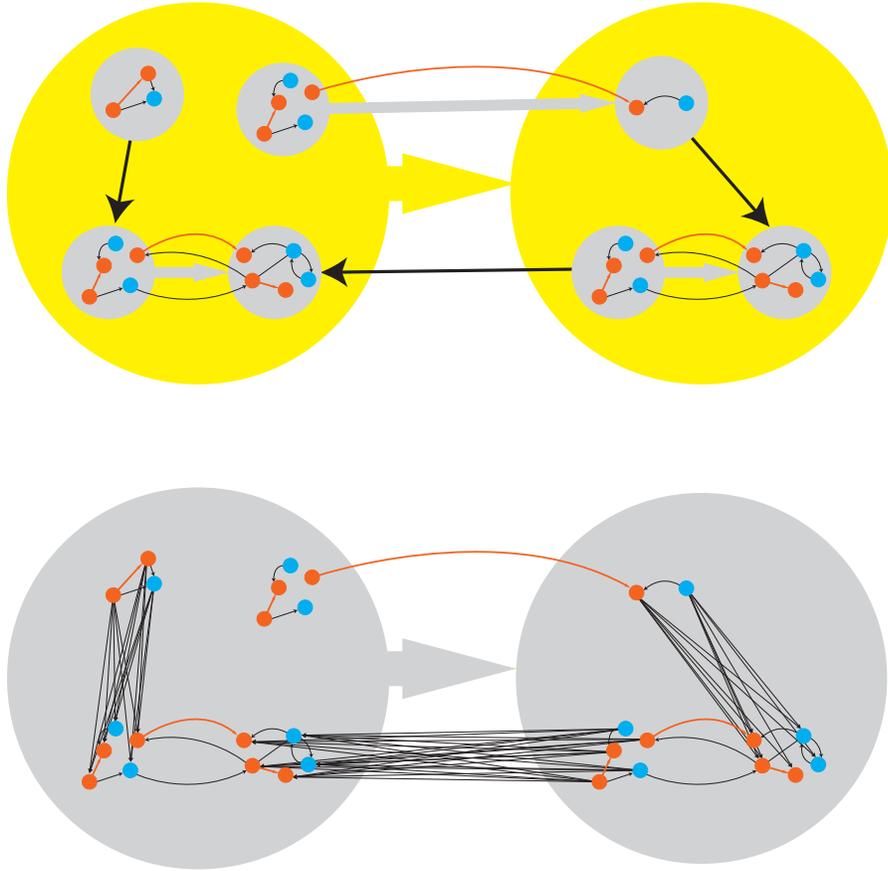

\mypic{26}
\mypic{31}
  \caption{\label{fig:vmonad-flattening}A binary $\vmonad$-hypergraph (above) and its flattening (below)}
\end{figure}

%A routine check, in the spirit of Fact~\ref{fact:hmonad-is-amonad}, shows that $\vmonad$ satisfies the axioms of a monad.  
\begin{fact}
	$\vmonad$ satisfies the axioms of  a monad.
\end{fact}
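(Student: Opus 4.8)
The plan is to verify the monad axioms for $\vmonad$ by direct computation, exactly as was done for $\hmonad$ in Fact~\ref{fact:hmonad-is-amonad}. The functoriality axioms (first row of Figure~\ref{fig:monad-axioms}) are immediate, since $\vmonad f$ simply relabels and composition of relabelings is relabeling; naturality of $\unitt$ and $\flatt$ is equally routine, since every construction involved only touches labels in a way that commutes with $f$. So the only axioms that need genuine checking are associativity of flattening (the lower-left diagram) and the two unit laws (the lower-right diagram). I would present these three in detail and dismiss the rest as a routine check, following the style of the $\hmonad$ proof.

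For associativity, take $G \in \vmonad\vmonad\vmonad\Sigma$ and compare $G_1 = \flatt_\Sigma(\flatt_{\vmonad\Sigma}(G))$ with $G_2 = \flatt_\Sigma(\vmonad\flatt_\Sigma(G))$, displaying the two sides in parallel columns as in the $\hmonad$ proof. In both cases the hypervertices are triples: for $G_1$ they arise as $((u,v),w)$ where $u$ is a hypervertex of $G$, $v$ a hypervertex in the label of $u$, and $w$ a hypervertex in the label of $v$; for $G_2$ they are $(u,(v,w))$ with the same conditions on $u,v,w$. Since the conditions coincide, the obvious reassociation map $\alpha\colon (u,(v,w)) \mapsto ((u,v),w)$ is a bijection, and it preserves arities and labels because both inherit them from $w$. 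The number of ports is inherited from $G$ on both sides, so the port counts agree; I would then check that $\alpha$ carries the port function of $G_2$ to that of $G_1$, which amounts to unwinding the "nested port number" rule twice (if $w[i]$ is a $j$-port in the label of $v$, $v[j]$ is a $k$-port in the label of $u$, and $u[k]$ is an $\ell$-port in $G$, then the corner ends up an $\ell$-port on both sides). Finally I would check that the edge relation agrees: an edge at the triple level in either $G_1$ or $G_2$ comes from unfolding the two-case edge rule of $\flatt$ twice, producing a common three-case description — an edge already present in the label of $w$, or an edge in the label of $v$ between the relevant $j$-ports, or an edge in $G$ between the relevant $k$-ports — which is manifestly symmetric in how the nesting was resolved. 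This is the step I expect to be the main obstacle, purely because of the bookkeeping: the port-number composition and the nested edge conditions have to be tracked carefully, and the $\vmonad$ flattening rule is slightly more intricate than the $\hmonad$ one since edges live on corners rather than vertices.

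For the unit laws, the lower-right diagram requires $\flatt_\Sigma \circ \unitt_{\vmonad\Sigma} = \mathrm{id}$ and $\flatt_\Sigma \circ \vmonad\unitt_\Sigma = \mathrm{id}$ on $\vmonad\Sigma$. For the first, take $G \in \vmonad\Sigma$; then $\unitt_{\vmonad\Sigma}(G)$ is a single hypervertex labelled by $G$, with port function the identity on its corners, so its flattening has hypervertices $(v_0, w)$ for $w$ a hypervertex of $G$ (where $v_0$ is the unique outer hypervertex), which is in bijection with the hypervertices of $G$; the port numbers compose trivially through the identity port function, and the edge rule collapses to "$w[i]\to w'[i']$ in $G$", so the flattening is isomorphic to $G$. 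For the second, $\vmonad\unitt_\Sigma(G)$ relabels each hypervertex $v$ of $G$ by $\unitt_\Sigma$ of its label, i.e.\ by a one-hypervertex graph; flattening then produces hypervertices $(v, v_w)$ where $v_w$ is the unique hypervertex inside the label of $v$, again in bijection with the hypervertices of $G$, with all ports and edges matching up. Both are short. I would write these out, note that the functoriality and naturality axioms are verified the same way, and conclude. No new macros are needed beyond those already in the excerpt.
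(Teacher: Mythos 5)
Your proposal is correct and follows essentially the same route as the paper: the paper likewise dismisses everything except associativity as a routine check, and its associativity argument is exactly your parallel-column comparison of the triples $((v,w),u)$ versus $(v,(w,u))$, with the same nested port-number composition ($j$-port in the label of $w$, $k$-port in the label of $v$, $\ell$-port in $G$) and the same three-case description of the edge relation. The only difference is that you additionally write out the unit laws, which the paper leaves implicit.
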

\begin{proof} A routine check. This proof is actually a bit easier than the one for $\hmonad$, since there is not deleting involved (in the monad $\hmonad$, the source vertices from the nested hypergraphs were deleted).
We only prove that flattening is associative. 
	Let  $G \in \vmonad \vmonad \vmonad \Sigma$. 
Define $G_1,G_2 \in \vmonad \Sigma$ to  be the results of applying to $G$ the following functions (to make the comparison easier, we discuss the two $\vmonad$-hypergraphs in parallel columns for the rest of this proof):
\petrisan{
\begin{align*}
\xymatrix @R=2pc @C=3pc {\vmonad \vmonad \vmonad \Sigma  \ar[r]^{\flatt_{ \vmonad \Sigma}}  & \vmonad \vmonad \Sigma \ar[d]^{\flatt_\Sigma}  \\
		& \vmonad \Sigma
		}	
\end{align*}
}{
\begin{align*}
	\xymatrix @R=2pc @C=3pc {\vmonad \vmonad \vmonad \Sigma   \ar[d]_{\vmonad{ \flatt_\Sigma}} &   \\
		\vmonad \vmonad \Sigma \ar[r]_{ \flatt_\Sigma}& \vmonad \Sigma
		}
\end{align*}
}
Our goal is to show that $G_1$ is the same (isomorphism type)  as  $G_2$. 
	\petrisan{Hypervertices in $G_1$ are of the form
	\begin{align*}
((v,w),u)
\end{align*}
where:
 \begin{itemize}
	\item $v$ is a hypervertex of $G$;
	\item $w$ is a hypervertex in the label of~$v$;
	\item $u$ is a hypervertex in the label of~$w$;
\end{itemize}
} {Hypervertices in $G_2$ are of the form
	\begin{align*}
(v,(w,u))
\end{align*}
where the conditions on $v,w,u$ are the same as for $G_1$.} The isomorphism between $G_1$ and $G_2$ is going to be
\begin{align*}
((v,w),u) \mapsto (v,(w,u)).
\end{align*}
  The function above is a bijection, since the conditions on $v,w,u$ are the same on both sides. The function also  preserves  labels and arities, because  labels and arities are inherited from $u$ on both sides. Let us now look at the port mapping. Consider corners on both sides:
\petrisan{
\begin{align*}
((v,w),u)[i]	
\end{align*}
}
{
\begin{align*}
(v,(w,u))[i]	
\end{align*}
}
Choose $j,k,l$ so that $u[i]$ is a $j$-port in the label of $w$, and $w[j]$ is a $k$-port in the label of  $v$, and $v[k]$ is an $l$-port in the graph $G$. By unravelling the definition of flattening, it follows that both corners discussed above are going to be  $l$-ports. To complete the proof, we need to show that the bijection preserves edges, i.e.~the following conditions are equivalent:
\petrisan{
 $G_1$ has an edge
\begin{align*}
((v,w),u)[i] \to ((v',w'),u')[i']
\end{align*}
}
{
 $G_2$ has an edge
\begin{align*}
(v,(w,u))[i] \to (v',(w',u'))[i']
\end{align*}
}
Define $j,k,l$  as in the dicussion of the port functions for $v,w,u$, likewise define $j',k',l'$ for $v',w',u'$. By unravelling the definition of edges in the flattening, we see that the existence of both edges described is equivalent to the following disjunction:
\begin{itemize}
	\item $v=v'$ and $w=w'$ and there is an edge $u[i] \to u'[i']$ in the label of $w$; or
	\item $v=v'$ and there is an edge $w[j] \to w'[j']$ in the label of $v$; or
	\item there is an edge  $v[k] \to v'[k']$ in $G$.
\end{itemize}
\end{proof}

\paragraph*{Recognisable languages.} As mentioned after the definition of ports, having one port is the same as having no port information. Therefore, a language of $\vmonad$-hypergraphs (without port information) can be identified with a set $L \subseteq \vmonadone \Sigma$. A language is called recognisable if it is recognised by a homomorphism from the entire algebra $\vmonad \Sigma$ into an algebra that is ``finite'' in some suitable sense. For the same reasons as in the monad $\hmonad$, the notion of ``finite'' algebra is chosen to mean ``finite on every arity''.
\begin{definition}[Recognisable language]\label{def:v-recognisable}
	A language $L \subseteq \vmonadone \Sigma$ is called \emph{recognisable} if it is recognised by a $\vmonad$-algebra that is finite on every arity.
\end{definition}

We now show  that recognisability, as defined above,  coincides with Courcelle's definition of \vr-recognisability, which is the inspiration for the monad $\vmonad$. We proceed the same way as for the monad $\hmonad$, i.e.~we distinguish a subset of polynomial operations in algebra $\vmonad \Sigma$, and then define \vr-recognisability in terms of that subset. For the monad $\hmonad$, the main role was played by parallel composition; for $\vmonad$ the main role will be played by disjoint union $\oplus$, which is defined in the natural way, as illustrated in the following picture:
\mypic{49}
In general, the inputs of disjoint union might have different arities, the arity of the output is then defined to be maximal arity of the inputs. It is convenient here that the port function need not be surjective, this way we can use the disjoint union for arguments that have pairwise disjoint sets of port numbers that are used.

We are now ready to present Courcelle's notion of \vr-recognisability. 

\begin{definition}[\vr-recognisable language]\myfootcite[Definition 4.52 and Theorem 4.57]{Courcelle:2012wq}
	\label{def:vr-recognisable}A language $L \subseteq \vmonadone \Sigma$ is called \emph{\vr-recognisable} if there is an  equivalence relation on $\vmonad \Sigma$, which:
	\begin{enumerate}
		\item recognises $L$, i.e.~$L$ is union of equivalence classes; and
		\item has finitely many equivalence classes on every arity; and
		\item is compatible with all of the \vr-operations defined in Figure~\ref{fig:vr-operations}.
	\end{enumerate}
\end{definition}

The notions of recognisability from  Definitions~\ref{def:v-recognisable} and~\ref{def:vr-recognisable} coincide. 
\begin{theorem}\label{thm:same-as-vr}
	For a language $L \subseteq \vmonadone{\Sigma}$, the following conditions are equivalent:
	\begin{enumerate}
		\item $L$ is recognisable;
		\item $L$ is recognised by a congruence (in the sense of Definition~\ref{def:congruence}) with finitely many equivalence classes on every arity;
		\item $L$ is  \vr-recognisable.
	\end{enumerate}
\end{theorem}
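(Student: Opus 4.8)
The plan is to follow the three-step strategy used for the analogous Theorem~\ref{thm:same-as-hr}. The equivalence of items 1 and 2 is immediate from the Homomorphism Theorem: a language is recognised by a $\vmonad$-algebra that is finite on every arity exactly when it is recognised by a homomorphism whose kernel has finitely many classes on every arity, and the kernels of homomorphisms are precisely the congruences. So it remains to show that an equivalence relation on $\vmonad \Sigma$ which recognises $L$ and has finitely many classes on every arity is a congruence if and only if it is compatible with all the \vr-operations of Figure~\ref{fig:vr-operations}.

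First I would reduce congruence --- that is, compatibility with all polynomial operations --- to compatibility with \emph{linear unary polynomial operations}, meaning those with a single variable $x$ that occurs exactly once. As in the $\hmonad$ case, the point is that $\vmonad$ is finitary: a finite $\vmonad$-hypergraph uses only finitely many labels, so $\vmonad \Sigma = \bigcup \vmonad \Sigma_0$ over finite $\Sigma_0 \subseteq \Sigma$. This lets one substitute into one occurrence of one variable at a time and conclude that compatibility with all polynomial operations follows from compatibility with the linear unary ones; the converse inclusion is trivial. Hence item 2 is equivalent to recognisability by an equivalence relation with finitely many classes per arity that is compatible with all linear unary polynomial operations.

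The work specific to $\vmonad$ is to identify compatibility with all linear unary polynomial operations with compatibility with all \vr-operations. By inspection of Figure~\ref{fig:vr-operations}, each \vr-operation is itself a (linear unary, or nullary) polynomial operation, so one direction is trivial. For the other, the key lemma is that every finite $\vmonad$-hypergraph with ports can be built by composing \vr-operations: start from disjoint unions $\oplus$ of units (one hypervertex per label, with the identity port function), then use the edge-adding operation to install the required edges between corners carrying the prescribed port numbers, and finally apply port-renaming operations to merge port classes and to set the port function as desired. It follows that every linear unary polynomial operation decomposes as \vr-operations applied around its single variable, so an equivalence relation compatible with all \vr-operations is compatible with all linear unary polynomial operations, completing the cycle. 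The main obstacle is this generation lemma: one must check carefully that the particular list in Figure~\ref{fig:vr-operations} really suffices to produce an arbitrary finite $\vmonad$-hypergraph with ports --- including hypervertices of arity greater than one, which enter only through the units for labels of $\Sigma$ --- and that the construction can be arranged so that plugging a variable in at one spot still reads as a polynomial operation. Once this is in place the remainder is routine bookkeeping, parallel to the $\hmonad$ development, so only a sketch at this level of detail is needed.
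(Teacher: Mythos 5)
Your proposal is correct and follows essentially the same route as the paper: the paper's proof is a short remark deferring to the proof of Theorem~\ref{thm:same-as-hr}, with exactly the three ingredients you give (Homomorphism Theorem for $1 \Leftrightarrow 2$, reduction to linear unary polynomial operations via finitariness, and the observation that the \vr-operations generate all linear unary polynomial operations). Your sketch of the generation lemma --- separate the port numbers of the units, take disjoint unions, add edges, then rename ports --- is the intended argument.
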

\begin{proof}
Same kind of proof as for  Theorem~\ref{thm:same-as-hr} in the monad $\hmonad$. The equivalence of 1 and 2 holds for every monad in a category of sorted sets, and the implication from 2 to 3 holds because all of the \vr-operations are polynomial operations. To prove the implication from 3 to 2,  the key observation is that the \vr-operations are enough to construct all linear unary polynomial operations.
\end{proof}

\begin{figure}
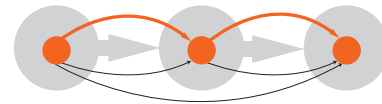

\petrisan{	For every arity $n \in \set{1,2,\ldots}$, there is an operation  which inputs two $n$-ary  $\vmonad$-hypergraphs, and outputs their disjoint union.}{\ \vspace{-.6cm}
\mypic{52}}

\petrisan{	There is a constant  for every unit.}{Constants are polynomial operations.}

\petrisan{For every function     
\begin{align*}
f : \set{1,\ldots,n} \to \set{1,\ldots,m}	
\end{align*}
 with $n,m \in \set{1,2,\ldots}$, there is an operation which inputs an $n$-ary $\vmonad$-hypergraph, and outputs an $m$-ary one where the ports are updated along $f$. }
{To see that this operation is a polynomial operation, consider  
\begin{align*}
f : \set{1,2,3,4} \to \set{1,2} 
\end{align*}
defined by $(1,2,3,4) \mapsto (1,2,2,1)$. Then the polynomial operation corresponding to this operation looks like this:
\mypic{53}}

\petrisan{For every  \begin{align*}
E \subseteq \set{1,\ldots,n}^2	
\end{align*}with $n \in \set{1,2,\ldots}$ there is an operation which inputs an $n$-ary $\vmonad$-hypergraph and outputs one of the same arity, where 
which adds  (directed) edges connecting all $i$-th sources to all $j$-th sources for all $(i,j) \in E$.}{To see that this operation is a polynomial operation, suppose that $n=3$ and 
\begin{align*}
E = \set{(1,2),(2,3),(1,3)}.
\end{align*}
Then the polynomial operation looks like this:
 \mypic{54}}

  \caption{\label{fig:vr-operations}The \vr-operations (left column) and why they are polynomial operations (right column).}
\end{figure}

%For $L \subseteq \vmonadone{\Sigma}$, 
%consider  the equivalence relation which identifies $G,G' \in \vmonad \Sigma$ if they have the same arity,  and 
%		\begin{align*}
% f(G \oplus H) \in L \quad \text{iff} \quad f(G' \oplus H) \in L
%\end{align*}
%holds for every $H$ and $f$ where:
%\begin{itemize}
%	\item $H$ ranges over $\vmonad \Sigma$, with arities  possibly other than $n$;
%	\item $f$ ranges over \vr-operations of type ``add edges'';
%	\item membership in $L$ is tested after renaming all ports to 1.
%\end{itemize}
%Using the same type of proof as for Theorem~\ref{thm:h-recognisability}, one can show that a language if recognisable if and only if for every arity, there are finitely many equivalence classes of the equivalence defined above. Since the latter condition characterises \vr-recognisability\footnote{Ibid., Theorem 4.57}, it follows that the two notions of recognisability are the same.

\paragraph*{Cliquewidth.} As the monad $\hmonad$ was to treewidth, the monad $\vmonad$ is to cliquewidth. The latter is a graph parameter, which is more generous than treewidth in the following sense: bounded treewidth implies bounded cliquewidth, but not the other way round (as the name implies, the class of cliques has bounded cliquewidth, in fact cliquewidth one, while its treewidth is unbounded). To relate cliquewidth with the monad $\vmonad$, not much work needs to be done, since already the definition of cliquewidth  is in terms of the \vr-operations.

\begin{definition}[Cliquewidth]
\myfootcite[Definition 2.89]{Courcelle:2012wq}
The \emph{cliquewidth} of $G \in \vmonad \Sigma$ is defined to be the smallest $n \in \set{1,2,\ldots}$ such that $G$ can be generated  using the \vr-operations that use only arities $\le n$.
\end{definition}

We are mainly interested in the  case of the cliquewidth of directed graphs, i.e.~when $G$ has arity one and  $\Sigma$ has only one unary label. 

\begin{theorem}
	A set $L \subseteq \vmonad \Sigma$ has bounded cliquewidth if and only if it is contained in a set  generated by finitely many polynomial operations of the $\vmonad$-algebra $\vmonad \Sigma$.
\end{theorem}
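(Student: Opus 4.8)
The plan is to follow the proof of Theorem~\ref{thm:treewidth-grammar} almost verbatim, exploiting that cliquewidth is defined directly through the \vr-operations and that every \vr-operation is a polynomial operation of $\vmonad\Sigma$ (Figure~\ref{fig:vr-operations}); as in that theorem I assume $\Sigma$ is finite. For the left-to-right implication, suppose every $G \in L$ has cliquewidth $\le n$, i.e.\ is generated by the \vr-operations using only arities $\le n$. There are finitely many such operations --- the disjoint unions $\oplus_m$ for $m \le n$, the port-update operations for functions $\set{1,\ldots,m} \to \set{1,\ldots,m'}$ with $m,m' \le n$, the edge-addition operations for $E \subseteq \set{1,\ldots,m}^2$ with $m \le n$, and the constants $\unitt a$ for $a \in \Sigma$ (here finiteness of $\Sigma$ is used) --- and each of them is a polynomial operation. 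Hence $L$ is contained in the set generated by this finite set $P$ of polynomial operations.

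\textbf{Right to left.} I would prove the $\vmonad$-analogue of the claim inside Theorem~\ref{thm:treewidth-grammar}: for every polynomial operation $p : (\vmonad\Sigma)^X \to \vmonad\Sigma$ there is a number $K_p$, depending only on $p$, such that
\begin{align*}
\text{cliquewidth of }p(\eta) \;\le\; \max\Big(K_p,\ \max_{x\in X}\text{cliquewidth of }\eta(x)\Big)\qquad\text{for every valuation }\eta.
\end{align*}
(Note the $\max$: an additive bound, as one would get naively for treewidth, is useless here because derivations can be arbitrarily deep.) Granting the claim, the implication is immediate by induction on derivations: if $L$ lies in the set generated by a finite set $P$, put $K = \max_{p\in P}K_p$; a nullary $p$ produces a value of cliquewidth $\le K_p \le K$, and if $g = p(g_1,\ldots,g_r)$ with every $g_i$ of cliquewidth $\le K$, then $g$ has cliquewidth $\le \max(K_p,K) = K$. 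So every element of the generated set, and hence $L$, has cliquewidth $\le K$.

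\textbf{Proof of the claim.} Let $G_p \in \vmonad(\vmonad\Sigma + X)$ be the $\vmonad$-hypergraph defining $p$, let $c$ be its cliquewidth (finite, since $G_p$ is finite), and set $K_p = \max\big(c,\ \max_a\text{cliquewidth of }a\big)$, where $a$ ranges over the finitely many labels from $\vmonad\Sigma$ occurring in $G_p$. Given $\eta$, write $N = \max(K_p, \max_x\text{cliquewidth of }\eta(x))$ and extend $\eta$ to $\vmonad\Sigma$ by $\eta(a) = a$. Fix a \vr-term for $G_p$ of arity $\le c$; its leaves are units $\unitt\ell$ for labels $\ell \in \vmonad\Sigma + X$. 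In that term, replace each leaf $\unitt\ell$ by a \vr-term generating $\eta(\ell)$, which can be chosen of arity $\le N$ (its arity can be taken to be the cliquewidth of $\eta(\ell)$, which is $\le N$ in both cases $\ell \in X$ and $\ell \in \vmonad\Sigma$). Since $\eta(\ell)$ has the same arity as $\ell$, and since the operations higher up in the term act on $\vmonad$-hypergraphs with ports no matter how many hypervertices these contain, the monad laws (associativity and the unit law for $\flatt$) give that the resulting composite \vr-term evaluates exactly to $p(\eta)$; and it uses only arities $\le \max(c, N) = N$. Hence $p(\eta)$ has cliquewidth $\le N$, which is the claim.

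\textbf{Main obstacle.} The claim is the only non-routine step. The delicate point is obtaining a $\max$ rather than a sum: this works because the outer \vr-term for $G_p$ (needing arities $\le c \le N$) and the substituted inner \vr-terms for the $\eta(\ell)$ (needing arities $\le N$) live on compatible arities and can be glued without widening either budget. This is in fact easier than the corresponding argument for $\hmonad$, since flattening for $\vmonad$ deletes nothing, so the nested \vr-terms compose without any surgery.
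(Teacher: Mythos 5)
Your proposal follows the same two-part route as the paper: the left-to-right direction via the observation that the \vr-operations of bounded arity are finitely many polynomial operations (with $\Sigma$ finite), and the right-to-left direction via a claim bounding the cliquewidth of $p(\eta)$ in terms of the cliquewidths of the $\eta(x)$, proved by splicing a \vr-term for the defining $\vmonad$-hypergraph together with \vr-terms for the substituted values. The one substantive difference is that you state the key claim with the bound $\max(K_p, \max_x \mathrm{cwd}(\eta(x)))$, whereas the paper states it additively as $k + \max_x \mathrm{cwd}(\eta(x))$ (mirroring its treewidth analogue in Theorem~\ref{thm:treewidth-grammar}). Your remark that the additive form is not enough is correct and worth making explicit: elements of the set generated by $P$ may require derivations of unbounded depth, so an additive increment per operation accumulates and does not yield a uniform bound, while the $\max$ form closes the induction immediately. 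The paper's actual construction (composing the \vr-terms, whose arities do not interfere) does deliver the $\max$ bound; you have simply stated the claim in the form in which it is actually used, and your bookkeeping (folding the cliquewidths of the finitely many constants from $\vmonad\Sigma$ occurring in $G_p$ into $K_p$, and checking that the inner terms have the right arities to be substituted for the unit leaves) is correct. So the proposal is sound and, if anything, slightly more careful than the paper's sketch.
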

\begin{proof}
The left-to-right implication follows immediately from the definition of cliquewidth and the following observations:  all \vr-operations are polynomial operations, and  there are finitely many \vr-operations using a fixed finite set of arities. The converse  implication follows from the following claim.
\begin{claim}
	For every ranked set  $X$ and polynomial  operation
	\begin{align*}
  p : (\vmonad \Sigma)^X \to \vmonad \Sigma
\end{align*}
there  exists some $k \in \Nat$ such that for every valuation $\eta \in  (\vmonad \Sigma)^X$ satisfies
\begin{align*}
  \text{cliquewidth of $p(\eta)$} \quad \le \quad k + \max_{x \in X} \text{cliquewidth of $\eta(x)$}
\end{align*}
\end{claim}
\begin{proof}
An upper bound on $k$ is the number of corners in the  $\vmonad$-hypergraph defining the polynomial operation.
\end{proof}
\end{proof}

\paragraph{Monadic second-order logic.} In this section we prove a variant of Courcelle's Theorem for the monad $\vmonad$, which says that languages definable in counting \mso are necessarily recognisable. At this point, we can afford a short proof of the theorem, due to  the generic form of the proof of Courcelle's Theorem in Section~\ref{sec:courcelle-theorem}.
% so that the proof  we only need to modify the part of the proof which depended on the monad, i.e.~the Compositional Encoding Lemma.
The idea of for definability in \mso is the same as in Section~\ref{sec:mso}, i.e.~we associate to each element of $\vmonad \Sigma$ a model, and then use logic -- mainly counting \mso\ -- to describe properties of that model.

\begin{definition}\label{def:v-model}
	For $G \in \vmonad \Sigma$, define its \emph{model} $\structv G$ as follows:
\begin{enumerate}
	\item the universe is the corners of the hypervertices;
	\item there is a binary relation for the edges between the corners;
	\item for every $a \in \Sigma$ and $i \in \set{1,\ldots,\text{arity of $a$}}$ there is a unary relation selecting  $i$-th corners of hypervertices with label $a$;
\item for every $i \in \set{1,\ldots,\text{arity of $G$}}$ there is unary relation for the $i$-ports. 
\end{enumerate}
\end{definition}
The vocabulary of the model depends on the number of ports, because of the relations in item 4. However, if the alphabet $\Sigma$ is finite and the arity is fixed, then the vocabulary is finite.

Consider the special case when $G \in \vmonadone \vertvoc$, i.e.~$G$ has one port (which is like having no port information) and the set of labels has only one unary label. In other words, $G$ is a directed graph (modulo isomorphism).  The model $\structv G$ has the vertices as the universe and there is a binary edge relation; the other relations defined in items 3 and 4 are meaningless  because they select all elements in the universe. In this case, the model $\structv G$ corresponds to the encoding $\lfloor G \rfloor$ of directed graphs that was discussed in Example~\ref{ex:digraphs-h}. Traditionally, \mso over such encodings of graphs (i.e.~the universe is only the vertices, and not the edges) is referred to as \msov, which motivates the following definition.

\begin{definition}[Language definable in counting \msov]
Let $\Sigma$ be a finite ranked set.	A language $L \subseteq \vmonadone \Sigma$ is called \emph{definable in counting \msov} if there is a sentence $\varphi$ of counting \msov which defines it in the following sense:
	\begin{align*}
  G \in L \quad \text{iff} \quad \structv G \models \varphi \qquad \text{for every } G \in \vmonadone \Sigma.
\end{align*}

\end{definition}
Here is the $\vmonad$ version of Courcelle's Theorem.

\begin{theorem} \myfootcite[Theorem 5.68 (1)]{Courcelle:2012wq}	If a language $G \subseteq \vmonadone \Sigma$ is definable in counting \msov, then it is recognisable. 
\end{theorem}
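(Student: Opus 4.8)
The strategy is to mimic exactly the proof of Courcelle's Theorem for $\hmonad$ from Section~\ref{sec:courcelle-theorem}, since that proof was deliberately written in a monad-generic way: it isolates one lemma specific to the monad (the Compositional Encoding Lemma, Lemma~\ref{lem:good-encoding}) and derives everything else from the abstract machinery of Lemmas~\ref{lem:fo-compatible}, \ref{lem:mso-to-fo}, \ref{lem:cmso-compatible} and the Homomorphism Theorem. So the only real work is to prove the $\vmonad$-analogue of the Compositional Encoding Lemma, namely: for every ranked set $X$ and every $G \in \vmonad X$ there is an operation $f$ on models that is compatible with counting \mso and makes the square with $(\structv)^X$, $\sem G$, $\structv$, and $f$ commute. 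Granting that, one defines $\approx$ on $\vmonad \Sigma$ to identify $\vmonad$-hypergraphs with the same arity satisfying the same \cmsomr sentences, observes that $\approx$ recognises $L$ by hypothesis, uses the compositional lemma to conclude $\approx$ is a congruence, uses Lemma~\ref{lem:mso-to-fo} to get finitely many classes on each arity, and then invokes the Homomorphism Theorem to get a quotient $\hmonad$-algebra — here $\vmonad$-algebra — that is finite on every arity and recognises $L$.

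First I would set up the compositional encoding lemma for $\vmonad$. Fix $G \in \vmonad X$ and a valuation $\eta \in (\vmonad \Sigma)^X$; write $\structa_v$ for the model $\structv(\eta(x))$ where $x$ labels the hypervertex $v$ of $G$. The flattening $\sem G(\eta)$ has hypervertices $(v,w)$ with $w$ ranging over hypervertices of the label of $v$; its corners are therefore corners $w[i]$ inside the various $\structa_v$'s. The plan is to build $\structv(\sem G(\eta))$ from the $\structa_v$'s by: (1) taking the disjoint union $\coprod_{v \in \text{hypervertices of }G} \structa_v$ (a model $\structa_x$ may be copied several times, once per hypervertex labelled $x$); (2) applying a quantifier-free interpretation that recomputes the edge relation and the port predicates of the flattening. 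The crucial point — and the reason this is cleaner than the $\hmonad$ case — is that $\vmonad$-flattening deletes nothing, so no quantifier-free universe restriction is needed. The edge relation of the flattening, per Definition~\ref{def:vmonad}, puts an edge $(v,w)[i]\to(v',w')[i']$ iff either $v=v'$ and $w[i]\to w'[i']$ holds inside $\structa_v$, or $v[j]\to v'[j']$ holds in $G$ where $j,j'$ are the port numbers of $w[i]$, $w'[i']$ in the labels of $v$, $v'$ respectively. Both disjuncts are expressible as finite disjunctions over the (finitely many) hypervertices of $G$, using only the edge relations and port predicates already present in the $\structa_v$'s plus the fixed combinatorial data of $G$ — hence quantifier-free. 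Similarly, "$(v,w)[i]$ is a $k$-port" is computed by chasing the port function of $v$ in $G$ against the port predicate of $w[i]$ in $\structa_v$, again a finite quantifier-free condition. Finally a last quantifier-free interpretation discards the auxiliary relations not in the target vocabulary. Since disjoint unions and quantifier-free interpretations are compatible with counting \mso by Lemma~\ref{lem:cmso-compatible}, the composite operation $f$ is too.

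With the compositional encoding lemma in hand, the rest is verbatim the $\hmonad$ argument: by hypothesis $L$ is defined by a counting \msov sentence of quantifier rank $r$ using moduli from a finite $M$; let $\approx$ identify $\vmonad$-hypergraphs of equal arity satisfying the same \cmsomr sentences; then $L$ is a union of $\approx$-classes; the compositional lemma (extended to polynomials given by $G\in\vmonad(\vmonad\Sigma+X)$ by treating $\vmonad\Sigma$ as extra variables valued via the identity, exactly as in Lemma~\ref{lem:congruence}) shows $\approx$ is compatible with all polynomial operations, i.e.\ a congruence; Lemma~\ref{lem:mso-to-fo} gives finitely many classes per arity; and the Homomorphism Theorem produces the recognising $\vmonad$-algebra $\vmonad\Sigma/_{\approx}$, finite on every arity. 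The main obstacle — and it is a modest one — is purely bookkeeping: checking that every clause in the definition of $\vmonad$-flattening (edges, ports, labels, arities) can indeed be phrased by quantifier-free formulas over a disjoint union, in particular that the two-case edge definition and the port-chasing are genuinely finite disjunctions over $G$'s fixed hypervertex set and use no quantification. Everything else is inherited unchanged from Section~\ref{sec:courcelle-theorem}.
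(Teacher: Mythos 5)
Your proposal matches the paper's proof essentially exactly: the paper likewise observes that the only monad-specific ingredient is the Compositional Encoding Lemma, and proves the $\vmonad$ version by taking the disjoint union over the hypervertices of $G$ and defining the flattening's edges, ports and labels by a quantifier-free interpretation, noting that no quantifier-free universe restriction is needed since $\vmonad$-flattening deletes nothing. Your additional bookkeeping on the two-case edge definition and port-chasing is correct and simply fills in details the paper leaves implicit.
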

\begin{proof} Recall the proof of Courcelle's Theorem for the monad $\hmonad$ from Section~\ref{sec:courcelle-theorem}. The only part of proof which depended on the choice of monad was the Compositional Encoding Lemma. Therefore, to prove the theorem, it suffices to show the analogous result for the monad $\vmonad$, which we do below.
	\begin{lemma}[Compositional encoding]\label{lem:vgood-encoding} For a ranked set $\Sigma$, define $\structv \Sigma$ to be  the image of the set $\vmonad \Sigma$ under the function $\structv$; this image is a viewed as a ranked set. For every ranked set 
$X$ and every  $G \in \vmonad X$ there  is an operation $f$ compatible with counting \mso which makes the following diagram commute:
%\footnote{In the left side of the diagram, the tuples indexed by $X$ are rank preserving, but this is not the case on the right side of the diagram, since we do not associate any rank information to models.}:
\begin{align*}
\xymatrix@C=4cm{
(\vmonad \Sigma)^X \ar[r]^{(\structv)^X} \ar[d]_{\sem G} &  (\structv \Sigma)^X \ar[d]^{f}\\
\vmonad \Sigma \ar[r]_{\structv\ } &  \structv \Sigma
}
\end{align*}

\end{lemma}
\begin{proof}[of the lemma]
Let $\eta \in (\vmonad \Sigma)^X$ be a valuation of the variables, and define  $\structa_x$ to be the model that  represents $\eta(x)$. Our goal is to transform  the models $\set{\structa_x}_{x \in X}$ into the model  that representing $\sem G (\eta)$,  using operations compatible with counting \mso. We simply take the disjoint union
\begin{align*}
\coprod_{v \in \text{hypervertices of $G$}} \structa_{\text{label of $v$}}	
\end{align*}
and observe that the structure defined in the flattening (labels, edges, sources) can all be defined in a quantifier-free way. For the monad $\vmonad$, unlike for $\hmonad$, we do not need to use quantifier-free universe restrictions.
\end{proof}
\end{proof}

Without any additional assumptions, the converse of the above theorem is false, as shown in  the following example. 

\begin{myexample}\label{ex:v-grids}
For a set $X \subseteq \set{1,2,\ldots}$, consider the set of  undirected $n \times n$ grids with $n \in X$. This language is not definable in counting \msov for some choices of $X$, e.g.~when $X$ is undecidable. Nevertheless, one can show\footnote{\cite[Proposition 4.36]{Courcelle:2012wq} gives the result for $\hmonad$, but the proof is similar for $\vmonad$} that this language is recognisable regardless of choice of $X$.
\end{myexample}

The languages in the above example used unbounded cliquewidth. This raises the question: if we additionally assume bounded cliquewidth, is recognisability equivalent to definability in counting \msov? In other words, is Theorem~\ref{thm:courcelle-conjecture} true for $\vmonad$-hypergraphs? This is an open problem, although a special case has been proved recently\myfootcite[Theorem 3]{DBLP:journals/corr/abs-1803-05937}, namely recognisability is equivalent to definability in counting \msov under the stronger assumption of bounded \emph{linear cliquewidth}. One definition of bounded linear cliquewidth is that a language is generated by a finite set of linear unary polynomial operations.

\bibliographystyle{unsrtnat}
\pagebreak
\bibliography{bib}
\end{document}